\documentclass[11pt]{article}

\usepackage[margin=1in]{geometry}

\usepackage[authoryear,round]{natbib}

\usepackage{amsmath}
\usepackage{hyperref}

\usepackage{algorithm}
\usepackage{algpseudocode}

\usepackage{cleveref}
\usepackage{graphicx}
\usepackage{amscd}
\usepackage{amsbsy}
\usepackage{amssymb}
\usepackage{float}
\usepackage{amsthm}
\usepackage{comment}
\usepackage{nicefrac}
\usepackage{xcolor}
\usepackage{bm}

\definecolor{maroontodo}{rgb}{0.5,0,0}

\definecolor{revisionblue}{RGB}{0,0,0}

\newenvironment{revision}{\par\begingroup\color{revisionblue}}{\par\endgroup}

\newtheorem{theorem}{Theorem}
\newtheorem{lemma}{Lemma}
\newtheorem{setting}{Setting}

\theoremstyle{definition}
\newtheorem{assumption}{Assumption}
\newtheorem{definition}{Definition}
\newtheorem{remark}{Remark}

\newtheorem{proposition}{Proposition}

\AddToHook{env/remark/begin}{\pushQED{\qed}}
\AddToHook{env/remark/end}{\popQED}
\AddToHook{env/setting/begin}{\pushQED{\qed}}
\AddToHook{env/setting/end}{\popQED}
\crefname{assumption}{assumption}{assumptions}
\crefname{setting}{setting}{settings}
\newcommand{\mcF}{\mathcal{F}}
\newcommand{\Var}{\mathrm{Var}}
\usepackage[utf8]{inputenc}
\usepackage[T1]{fontenc}
\usepackage{url}
\usepackage{booktabs}
\usepackage{amsfonts}
\usepackage{microtype}
\usepackage{shortcuts}
\usepackage{tikz}\usetikzlibrary{arrows.meta, positioning, calc}
 \usepackage{subcaption}  
\title{Optimal Sequential Annotations for Off-Policy Evaluation%
}

\author{%
  Woojin Chae\\
  \small University of Southern California\\
  \small \texttt{wchae@usc.edu}
  \and
  Ezinne Nwankwo\\
  \small University of California, Berkeley\\
  \small \texttt{ezinne\_nwankwo@berkeley.edu}
  \and
  Haitong Qin\\
  \small University of Washington, Seattle\\
  \small \texttt{haitongq@uw.edu}
  \and
  Angela Zhou\\
  \small University of Southern California\\
  \small \texttt{zhoua@usc.edu}
}
\date{}

\begin{document}

\maketitle

\begin{abstract}
Offline reinforcement learning and off-policy evaluation evaluates dynamic treatment rules  based on retrospectively collected data prior to deployment. In recent AI applications, state and reward information is recorded as complex text or image, which recent AI advancements such as LLM-as-a-judge can label with unknown bias. Expert annotation may be available but at a higher cost. For example, safety classification via cheap but imperfect  classifiers vs. expensive expert review. We show how a limited budget for ground-truth data-annotation can be used via doubly-robust OPE with missing rewards, and we optimize \textit{variance-optimal} annotation probabilities for sequential off-policy evaluation, where the target policy value is estimated from annotated data. We characterize the optimal annotation probabilities for sequential forward-monotone annotation protocols, and provide a feasible batch-adaptive implementation. Our work is motivated by a collaboration with a homelessness services nonprofit that writes casenotes for individuals over time. Our method can be used to unlock trustworthy inference from casenote data and answer new inferential questions such as: how does expanding outreach effort over time affect progress towards a housing application and improvement in housing placement? In simulations and on two real datasets --- casenotes from the nonprofit and human-preference votes from LMArena --- we see reductions in RMSE of $34$--$65\%$ for housing placement and $17$--$68\%$ for progress towards a housing application at budgets of $40\%$ of full annotation and above, and by $55$--$62\%$ at every budget on LMArena. 
\end{abstract}

\section{Introduction}

Offline reinforcement learning is crucial in consequential domains where online experimentation is infeasible or costly due to safety or data-efficiency concerns. Increasingly, sequential reward and/or state information takes the form of rich text or image observations of underlying true tabular rewards and states. For example, in healthcare and social services, clinical or case notes may document the underlying state of the individual. Prior work has studied rich-observation reinforcement learning \citep{lamb2022guaranteed}, but often requires strong representation or opaque latent-state assumptions. Recent advancements in AI offer the simple alternative of LLM-as-a-judge paradigms, which are easy-to-use but of ultimately unknown trustworthiness. On the other side of the spectrum, ground-truth data annotation is the gold-standard for coding or labelled unstructured data, but is costly and time-consuming.

In our work, we build on approaches that combine LLM-as-a-judge with ground-truth data annotation by formulating combined estimation as \textit{missing outcome data in causal inference} to obtain more trustworthy inference from complex observations. Since ground-truth data annotation is costly and time-consuming in practice, we treat this as a \textit{resource allocation} problem to optimize the annotation probabilities for variance-optimal sequential off-policy evaluation. Importantly, the goal of functional statistical estimation changes the optimal annotation strategy as compared to prediction-targeting active learning.

Our work is motivated by a collaboration with a homelessness services nonprofit that conducts street outreach to support unhoused clients in pursuing their goals and completing a housing application. They maintain intensive longitudinal relationships with clients, yet their richest data is in social-service casenotes written after every interaction, not in structured tabular data. Although the nonprofit typically reports endline metrics such as successful housing placements, these are rare terminal events that do not capture the full picture of what the nonprofit achieves, including connecting clients with other resources for material, health, mental and income support. Our method can leverage a small budget for ground-truth data annotation to enable dynamic off-policy evaluation of outreach efforts over time, in particular to assess the impact on intermediate outcomes such as progress towards completing a housing application. Many other organizations across sectors such as healthcare, social services, e-commerce, and LLM post-training from interaction logs (offline RL with rubric-based or LLM-as-a-judge rewards) face similar challenges with dynamic interactions captured in text that can be annotated at a cost.

\textcolor{revisionblue}{On the casenote data and on a second real dataset of LMArena human-preference votes, the variance-optimal design reduces the RMSE of the policy-value estimate relative to random annotation at equal label cost by $34$--$65\%$ (housing placement), $17$--$68\%$ (progress) and $55$--$62\%$ (LMArena) at budgets of $40\%$ and above. Gains are small for small pilots, where the design's nuisances are learned from few labels, and persist as the budget grows.}

Our paper is structured as follows. \Cref{sec:related-work} reviews related work. \Cref{sec:problem-setup} introduces the problem setup, observation model and our sequential annotation protocol. \Cref{sec-method} develops the method and optimal annotation probabilities, where we develop the $T=2$ case in particular detail for concreteness and intuition. \Cref{sec:experiments} illustrates the benefits of our approach through simulations and real data: social services casenotes from our motivating application, as well as Arena data on evaluated LLM-user conversations.

\section{Related Work}\label{sec:related-work}
There is a large literature on offline reinforcement learning (evaluation and optimization) \citep{jin2021pessimism,xie2021bellman}, including approaches that leverage importance sampling or introduce marginalized versions \citep{jl16,thomas2015high,kallus2019double,liu2018breaking}.

Our general approach for obtaining trustworthy inference from LLM-imputed rewards centers on optimizing the selection of a validation set for off-policy evaluation. We do not impose distributional assumptions on exactly how the LLM predictions differ from the true underlying reward, especially since typical distributional conditions for non-standard measurement error \citep{schennach2016recent} are generally inapplicable to text or images, our motivating application. As such, our estimator takes the form of off-policy evaluation with missing rewards, wherein we choose annotation probabilities to optimize the asymptotic variance.

In single time-step statistical inference, several works have fruitfully exploited the \textit{missing data} model of ground-truth data annotation \citep{egami2022make,wang2020methods,angelopoulos2023prediction}, sometimes called ``prediction-powered inference''. (See \citep{song2026demystifying,ji2026predictions} for a survey overview.) \citet{zrnic2024active} studies optimizing the ground-truth annotation probabilities for mean and $M$-estimation. \citet{kluger2026m} study M-estimation under multi-wave sampling, with annotated covariates, leveraging parametric efficiency adjustments. 
The key difference is that we apply analogous debiasing-data-annotation ideas to the sequential offline reinforcement learning setting, so that our estimation is that of off-policy evaluation with missing rewards (and/or reward-induced states). Further, we focus on optimizing the data-annotation probabilities therein. However, in our joint reward-and-state annotation setting, unlike \citet{kluger2026m} we do not account for the impact of revealing states on efficient estimation of $Q$ functions - this is an approximation motivated by preserving optimization structure. In the single time-step setting of causal inference, prior work \cite{nwankwobatch} has optimized the ground-truth annotation probabilities for average treatment effect (ATE) causal estimation, and characterized the closed-form solution. Crucially, their closed-form solution reveals that the Riesz representer under the variance-optimal annotation distribution is \textit{independent} of treatment propensities, and therefore balancing weight methods that do not explicitly estimate treatment propensities achieve better finite-sample performance.

Adaptive annotation protocols build on ideas and methods from adaptive treatment allocation, with some crucial distinctions.
Our work finds reward annotation probabilities that optimize the asymptotic variance, whether via batch or full adaptivity, most closely tracking the general batch-adaptive protocol of \citet{hahn2011adaptive}. In this work, we study the more complex sequential annotation problem. \cite{li2024csbae}%
also considers a double machine learning version of \citet{hahn2011adaptive}.

There is also an enormous and rich literature on active learning \citep{Settles2009ActiveLL,xia2025selection,jesson2021causal,sundin2019active}, but in general it \emph{optimizes for prediction error} which is not the off-policy evaluation error itself, and in sequential settings, is a proxy/surrogate loss.
But for our data-annotation in OPE motivation, in-sample regret is not meaningful for post-hoc data annotation of already-realized rewards, prediction error differs from the target policy value functional overall, and minimum-variance estimation differs from classification of the best arm in general.  Batch annotation is more relevant for querying human annotators, instead of full adaptivity. However, the asymptotic inferential guarantees of \citet{hahn2011adaptive} can indeed be strengthened by modern technical tools, which we leave for future work.

Several papers in off-policy evaluation have considered either surrogate terminal rewards, or optimal action policies for variance reduction in OPE, but not the question of which trajectory rewards should be ground-truthed. \citet{sonabend2023semi} similarly combines small ``gold''-labeled longitudinal datasets with surrogates, but in dynamic treatment regimes and hence with terminal rewards only, and with uniform allocation. We further have stagewise reward and annotation structure, resulting in our stagewise annotation strategies, and optimize the ``gold'' sampling. In a similar spirit, \citet{mandyam2025perry} augments a small ``gold''-labeled sequential dataset with auxiliary samples, combining entire trajectories, but therefore also generating synthetic dynamic transitions and trajectories. In contrast, in our work, dynamics are nearly fully observed through $S_t'$ and we primarily annotate rewards (whose history may then construct state). In parallel, the offline reinforcement learning literature has bridged experimental design for off-policy evaluation by optimizing the behavior policy's actions for low-variance off-policy evaluation afterwards\citep{li2023optimal,mukherjee2024saver,hanna2024data,mukherjee2024speed,liu2025doubly,liu2023efficient}, where the variance-optimal behavior policy may in general differ from the target policy itself. One particularly closely related work \citep{li2023optimal} optimizes the behavior policy for contrasting all-treat vs. all-control policies, resulting in a Neyman-like allocation that samples whether to deploy all-treat or all-control proportional to the initial-state-conditional variance of each for the entire horizon.

Many other papers study adaptive treatment allocation, and the bandit, active-learning and best-arm identification literature is simply enormous \citep{gao2019batched,zhao2023adaptive,cook2024semiparametric,shiusing,zhao2024experimental,simchi2023multi,qin2024optimizing}. 

\section{Problem Setup}\label{sec:problem-setup}

\paragraph{Full-data sequential target.}
We assume the data are generated by a full-information Markov decision process $\mathcal{M} = (\mathcal{S}, \mathcal{A}, \mathcal{R}, \mathbb{P}, \mathbb{P}_0, \gamma)$ with state space $\mathcal{S}$, action space $\mathcal{A}$, reward function $\mathcal{R}$, transition kernel $\mathbb{P}$, initial state distribution $\mathbb{P}_0$, and discount factor $\gamma$. For each unit, the (unobserved) full-data trajectory is $
(S_1,A_1,R_1,S_2,A_2,R_2,\ldots,S_T,A_T,R_T)$. Here \(S_t\) denotes the true state, \(A_t\) the action, and \(R_t\) the reward.
Let \(\pi_b(a\mid s)\) denote the behavior policy, whose distribution governs the actions in the historical dataset,
and let \(\pi_e(a\mid s)\) denote a fixed evaluation policy. The target value is
\[ \textstyle
\Phi^{\pi_e}
=
\mathbb{E}_{\pi_e}\left[
\sum_{t=1}^T \gamma^{t-1}R_t
\right].
\]
Though we want to estimate the target policy value $\Phi^{\pi_e}$, the central challenge of off-policy evaluation is that we only have access to data collected under $\pi_b$. Further, in our work, we assume we only have complex image or text observations of $R_t$.
We adopt the convention that $V_{T+1}^{\pi_e}\equiv 0$. For any given policy $\pi$, the state-action value ($Q_t^{\pi}$) and value functions ($V_t^{\pi}$) are
\begin{align*}
    Q_t^{\pi}(s,a) = \mathbb{E}\left[ R_t+\gamma V_{t+1}^{\pi}(S_{t+1}) \mid S_t=s,A_t=a \right], 
    \quad 
    V_t^{\pi}(s) = \sum_a \pi_e(a\mid s)Q_t^{\pi}(s,a)
\end{align*}
As in our motivating application, we also allow the underlying MDP state to reflect history including the history of rewards (via Markovian updates). That is, we allow $S_{t+1} = f(S_{t+1}', R_t)+\epsilon$ for some structural transition function $f(\cdot)$ and i.i.d. random noise $\epsilon$.
Define the product of behavior policy weights and the marginal density ratio, where $p_{\pi_t}\left(s_t, a_t\right)$ denotes the marginal distribution of $s_t, a_t$ under $P_\pi$.
\[
\textstyle
\rho_{1:t}^{\pi_e}
=
\prod_{j=1}^t
\frac{\pi_e(A_j\mid S_j)}
{\pi_b(A_j\mid S_j)}, \qquad \mu_t^{\pi_e} = \frac{p_{\pi_e}(S_t,A_t)}{p_{\pi_b}(S_t,A_t)}.
\]

\paragraph{Observation model.}
We assume the observation model is given by a sequence of random variables $O_t = (S_t', A_t, \tilde{R}_t)$ for $t=1,2,\ldots,T$, where $S_t'$ is the observed tabular state, $A_t$ is the observed action, $\tilde{R}_t$ is the \textit{always-observed} noisy proxy for the reward. We suppose that $\tilde{R}_t$ is something like text or images, such that standard MDP tools in the space of $\tilde{R}_t$ are generally not applicable. The annotation setting presumes that a human annotator can easily extract or measure $R_t$ given $\tilde{R}_t$, though not all datapoints can have ground-truth revealed. We let $C_t \in \{0,1\}$ denote whether or not to reveal reward $R_t$ from $\tilde{R}_t$ ; we treat this as a missingness indicator. In our later application, we seek to annotate reward information that also informs state construction. Therefore, we allow $\tilde{S}_{t+1}$ to therefore be an always-observed noisy proxy for the true state $S_{t+1}$, obtained by combining always-observed tabular state information $S_t'$ with always-observed reward information $\tilde{R}_t$. With missingness, the observed and annotated trajectory is given by
$$ \mathcal O =
\{ (S_t', A_t, \tilde{R}_t,C_t, C_t R_t,  \tilde{S}_{t+1}, C_t {S}_{t+1})\}_{t=1}^T.$$
\begin{revision}
Variants of our reward-next-state annotation model are possible. If \(S_{t+1}\) is already observed, the method reduces to reward-only annotation. If \(S_{t+1}\) contains reward-derived or annotation-derived components, then annotation at timestep $t$ reveals both \(R_t\) and the relevant next-state component. If rewards and states require separate annotation tasks, the martingale construction still applies with separate indicators and increments.
\end{revision}

\paragraph{Handling Markovian and non-Markovian problem setups}
Our analysis extends both to non-Markovian and Markovian decision-process settings, although we use different estimators in either case. Define the history prior to $A_t$ as $\mathcal{H}_{S_t}=(S_1,A_1,R_1,\ldots,S_{t-1},A_{t-1},R_{t-
  1},S_t)$.
\begin{setting}[Markovian decision process]\label{setting-mdp}

The reward and transition law satisfy: $(R_t,S_{t+1}) \indep \mathcal{H}_{S_t}\mid S_t,A_t.$    
\end{setting}
\begin{setting}[Non-Markovian decision process]\label{setting-nmdp}
The conditional distributions of $(R_t,S_{t+1})$ can be history $\mathcal{H}_{S_t}$ dependent. 
\end{setting}

While details of estimation can differ, our general framework covers both settings, and otherwise we discuss the resulting differences explicitly when it is relevant. For notational ease, we primarily notate the MDP case. Throughout, we generally assume either a Markovian or non-Markovian decision process holds, with actions taken based on the simpler tabular state $S'$ alone. 
\begin{assumption}[Sequential ignorability given $S_t'$]\label{asn-seq-igno-stilde}
We assume the underlying Markovian dynamics factorize as
\[
p_{\pi_b}\big(S_1,A_1,R_1,\ldots,S_T\big)
=p(S_1)\prod_{t=1}^{T}\pi_b\big(A_t\mid S_t'\big)\,p\big(S_{t+1},R_t\mid S_t,A_t\big).
\]
\end{assumption}

In particular, we suppose sequential ignorability holds with respect to the tabular state $S_t'$ and the reward $R_t$, rather than the high-dimensional measurements $\tilde R_t$ and $\tilde S_t$. Sequential ignorability effectively assumes that the underlying dynamics follow an MDP rather than a partially observed MDP; although this is often the terminology used in the dynamic treatment regime literature, we avoid introducing potential-outcome notation, which is unnecessary for our purposes.

\paragraph{Variants on annotation settings}
    Our framework accommodates a hierarchy of structural assumptions on the observation/annotation model. Simpler settings arise from adding some of the assumptions describe above, or simplifying the annotation model. 
    
    Throughout, we always assume \Cref{asn-seq-igno-stilde}: sequential ignorability holds with respect to the tabular state $S'$ alone. This happens, for instance, when previously taken actions occur given a fixed policy with limited dependence on context. For example, in the real-world setting, outreach frequency is typically on a schedule aimed at balancing geographic constraints with reaching caseload goals of meeting clients at least three times a month. 

\begin{setting}[Reward-only annotation]\label{setting-reward-only}
    The simplest setting is that of \Cref{fig:bg-mdp-reduced}, with only the tabular state $S_t'$ without rich state observations, and therefore only annotating rewards $R_t$ from $\tilde{R}_t$ data. 
\end{setting}
More challenging settings arise when leveraging rewards to augment state history, such that $\{ R_t, S_{t+1}\}$ reveals true reward and true state (covariate) for the future value function, which benefits from the annotated $R_t$ (and therefore tabular history).
\begin{setting}[Annotating $\{R_t, S_{t+1}$, with $S_{t+1}$ augmenting $V,Q$ but not $R$]\label{setting-intermediate}
Annotations reveal $\{ R_t, S_{t+1}\}$ in the set-up described in \Cref{fig:bg-mdp-full}.%
\end{setting}
  \begin{figure}[t]                                                                   
    \centering                                              
\begin{subfigure}[t]{0.42\linewidth}                                              
      \centering                                                                      
      \resizebox{\linewidth}{!}{
%
%
\begin{tikzpicture}[
  >=Stealth,
  every node/.style={font=\small},
  obs/.style    ={draw, circle, inner sep=1pt, minimum size=8mm, fill=white},
  hidden/.style ={draw, circle, inner sep=1pt, minimum size=8mm, fill=gray!30},
  edge/.style   ={->, semithick}
]
  \node[obs]    (Spt)  at (0,1.6)  {$S'_t$};
  \node[obs]    (At)   at (2,1.6)  {$A_t$};
  \node[obs]    (Sptp) at (4,1.6)  {$S'_{t+1}$};
  \node[obs]    (Atp)  at (6,1.6)  {$A_{t+1}$};

  \node[hidden] (Rt)   at (1,0)    {$R_t$};
  \node[hidden] (Rtp)  at (5,0)    {$R_{t+1}$};

  \node[obs]    (Rtt)  at (1,-1.5) {$\tilde R_t$};
  \node[obs]    (Rttp) at (5,-1.5) {$\tilde R_{t+1}$};

  \draw[edge] (Spt) -- (At);
  \draw[edge] (Spt) -- (Rt);
  \draw[edge] (At)  -- (Rt);
  \draw[edge] (Rt)  -- (Rtt);

  \draw[edge] (Sptp) -- (Atp);
  \draw[edge] (Sptp) -- (Rtp);
  \draw[edge] (Atp)  -- (Rtp);
  \draw[edge] (Rtp)  -- (Rttp);

  \draw[edge] (At)  -- (Sptp);
  \draw[edge] (Spt) to[bend left=30] (Sptp);
\end{tikzpicture}}
      \caption{Reward annotation only.}                                           
      \label{fig:bg-mdp-reduced}                                                      
    \end{subfigure}\hfill                                             
        \begin{subfigure}[t]{0.55\linewidth}
      \centering                                                                      
      \resizebox{\linewidth}{!}{
%
%
%
%
\begin{tikzpicture}[
  >=Stealth,
  every node/.style={font=\small},
  obs/.style    ={draw, circle, inner sep=1pt, minimum size=8mm, fill=white},
  hidden/.style ={draw, circle, inner sep=1pt, minimum size=8mm, fill=gray!30},
  edge/.style   ={->, semithick},
  sametime/.style={dotted, thick},
  ci/.style     ={font=\small, color=blue!55!black}
]
  \node[obs]    (Spt)  at (0,2.0)   {$S'_t$};
  \node[obs]    (At)   at (2.2,2.0) {$A_t$};
  \node[hidden] (St)   at (1.0,0.5) {$S_t$};
  \node[hidden] (Rt)   at (3.2,0.5) {$R_t$};
  \node[obs]    (Stt)  at (0,-1.0)  {$\tilde S_t$};
  \node[obs]    (Rtt)  at (2.0,-1.0){$\tilde R_t$};

  \node[obs]    (Sptp) at (5.0,2.0)   {$S'_{t+1}$};
  \node[obs]    (Atp)  at (7.2,2.0)   {$A_{t+1}$};
  \node[hidden] (Stp)  at (6.0,0.5)   {$S_{t+1}$};
  \node[hidden] (Rtp)  at (8.2,0.5)   {$R_{t+1}$};
  \node[obs]    (Sttp) at (5.0,-1.0)  {$\tilde S_{t+1}$};
  \node[obs]    (Rttp) at (7.0,-1.0)  {$\tilde R_{t+1}$};

  \draw[edge] (Spt) -- (At);
  \draw[edge] (Spt) -- (St);
  \draw[edge] (Spt) to[bend right=18] (Stt);
  \draw[edge] (Spt) -- (Rt);
  \draw[edge] (At)  -- (Rt);
  \draw[sametime] (St) -- (Rt);      
  \draw[edge] (Rt) -- (Rtt);

  \draw[edge] (Rt)  to[bend right=18] (Stp);
  \draw[edge] (Rtt) -- (Sttp);
  \draw[edge] (Spt) to[bend left=28] (Sptp);

  \draw[edge] (Sptp) -- (Atp);
  \draw[edge] (Sptp) -- (Stp);
  \draw[edge] (Sptp) to[bend right=18] (Sttp);
  \draw[edge] (Sptp) -- (Rtp);
  \draw[edge] (Atp)  -- (Rtp);
  \draw[sametime] (Stp) -- (Rtp);
  \draw[edge] (Rtp) -- (Rttp);

  \node[hidden, scale=0.7] (legend) at (8.6,-1.6) {};
  \node[right=1.5mm of legend, font=\footnotesize] {Unobserved};

  \node[ci] at (1.0, 3.0)  {$A_t \perp S_t \mid S'_t$};
  \node[ci] at (8.6, 2.4)  {$R_t \perp S_t \mid A_t, S'_t$};
\end{tikzpicture}}
      \caption{Reward as (limited) cumulative history annotation.}                
      \label{fig:bg-mdp-full}                                                         
    \end{subfigure}
    
    \caption{Sequential annotation model for Breaking Ground outreach.}               
    \label{fig:bg-mdp}                                                                
  \end{figure}  

\paragraph{Identification assumptions.}
We make the following assumptions for causal identification. The first set are standard causal assumptions on the underlying Markov decision process (on tabular states and rewards).

\Cref{asn-seq-igno-stilde} in particular constrains the behavior policy to act through the tabular state alone, $\pi_b(A_t\mid S_t)=\pi_b(A_t\mid S_t')$. Since the evaluation policy $\pi_e$ is likewise defined as an intervention on $S_t'$, the marginal density ratio inherits this tabular sufficiency. As such, $\mu_t$ becomes $\mathcal{F}_0$-measurable for all $t=1,\dots, T$.

\begin{assumption}[Bounded rewards]\label{asn-bdd-reward}
The reward satisfies $0\leq R_t \leq R_{\max}$ for all $t=1, \dots, T$.
\end{assumption}

\begin{assumption}[Sequential overlap]\label{asn-seq_overlap}
The annotation scores and density ratios satisfy $c_\lambda\leq \lambda_t\leq 1$ and $0\leq\mu_t\leq C_\mu$ for some positive constant $c_\lambda >0$ and $C_\mu >0$ for all $t=1,\dots, T$, and the behavior policy satisfies $\pi_b(a\mid s)\geq \varepsilon_b>0$ for all $(s,a)$.
\end{assumption}

Action overlap can be replaced by weaker concentrability assumptions that are standard in offline reinforcement learning.
The next assumptions are specific to our annotation setting. The first holds by design of our annotation sampling.
\begin{assumption}[Annotation ignorability] \label{asn-anno-ignorability}
The annotation design satisfies
\[
P(C_t=1\mid \mathcal{F}_T,C_{1:t-1}=1)
=
P(C_t=1\mid \mathcal{F}_{t-1},C_{1:t-1}=1)
=
\lambda_t(\mathcal{F}_{t-1}).
\]
Thus, conditional on the already observed prefix, the stage-\(t\) annotation decision is made before the newly revealed stage-\(t\) information is observed.
\end{assumption}
\begin{revision}We assume there is no latent confounding on annotation decisions between the decision of which datapoints are annotated, and the information revealed by annotation. Note that this assumption holds by design in our batch-adaptive data-annotation setting, where we design the annotation probabilities based on $S_t, A_t, C_{t'}, t'<t$ alone.\end{revision}

\begin{assumption}[Proxy reward exclusion restriction]\label{asn-annotation-exclusion-restriction}
$\E[ R_t \mid \tilde{R}_t, S_t, A_t] = \E[ R_t \mid \tilde{R}_t, S_t', A_t] $
\end{assumption}

    \paragraph{Sequential annotation protocols.}
We introduce \textit{annotation filtrations} that describe potentially sequential annotations. Define the sequence of annotation filtrations \((\mcF_t)_{t=0}^T\) with endpoints 
\begin{equation*}
    \mcF_0=\sigma(\mathcal{O}), \text{ and }\mcF_T
=
\sigma\!\left(
\mathcal{O},
R_1,S_2,\ldots,R_{T-1},S_T,R_T
\right).
\end{equation*}
There are different ways of proceeding sequentially (i.e. one timestep at a time) from initially-observed to fully-observed trajectories. We primarily focus on \textit{forward monotone sampling}, which is particularly natural for physical multi-stage sampling designs, e.g. if conducting follow-up on an individual and tracking them via endline surveys.

\begin{definition}[Forward monotone sampling.]\label{def:forward-monotone-sampling}
    A sequential annotation protocol is forward monotone if annotation proceeds
    along prefixes of the trajectory. At annotation stage \(t=1,\ldots,T-1\),
    gold labeling reveals \(R_t\) and \(S_{t+1}\). At the terminal annotation
    stage \(T\), gold labeling reveals \(R_T\). 

    Therefore, under joint reward and state annotation, \[
 \mcF_{t} = \mcF_{t-1} \cup \{R_t, S_{t+1}\}, \qquad 
 \mcF_t
=
\sigma\!\left(
\mathcal{O},
R_1,S_2,\ldots,R_t,S_{t+1}
\right)
\;\;%
\]
    The annotation indicators satisfy
    \[
        C_1 \ge C_2 \ge \cdots \ge C_T
    \]
    Equivalently, for each \(t=2,\ldots,T\), $
        C_t=1 \quad \text{only if} \quad C_1=\cdots=C_{t-1}=1$.
\end{definition}

\Cref{fig:forward-monotone-panels} illustrates the successive annotation prefixes for $T=2$.
\begin{figure}[tbp]
  \centering
\begingroup
\definecolor{annotationknown}{RGB}{42,112,166}
\definecolor{annotationnew}{RGB}{218,119,24}
\newcommand{\annotationpanel}[1]{%
\begin{tikzpicture}[
  x=0.49cm,y=0.68cm,>=Stealth,
  every node/.style={font=\fontsize{7}{8}\selectfont},
  known/.style={draw=annotationknown!85!black,circle,minimum size=5mm,
    inner sep=0pt,fill=annotationknown!18},
  unavailable/.style={draw=gray,dashed,circle,minimum size=5mm,
    inner sep=0pt,fill=gray!15},
  fresh/.style={draw=annotationnew!85!black,very thick,circle,
    minimum size=5mm,inner sep=0pt,fill=annotationnew!27},
  edge/.style={->,draw=black!55,line width=0.35pt},
  sametime/.style={dotted,draw=black!55,line width=0.4pt}
]
  \ifcase#1
    \tikzset{firstgold/.style={unavailable},lastgold/.style={unavailable}}
  \or
    \tikzset{firstgold/.style={fresh},lastgold/.style={unavailable}}
  \or
    \tikzset{firstgold/.style={known},lastgold/.style={fresh}}
  \fi
  \node[known] (Sp1) at (0,2) {$S'_1$};
  \node[known] (A1) at (2.2,2) {$A_1$};
  \node[known] (S1) at (1,0.5) {$S_1$};
  \node[firstgold] (R1) at (3.2,0.5) {$R_1$};
  \node[known] (St1) at (0,-1) {$\tilde S_1$};
  \node[known] (Rt1) at (2,-1) {$\tilde R_1$};
  \node[known] (Sp2) at (5,2) {$S'_2$};
  \node[known] (A2) at (7.2,2) {$A_2$};
  \node[firstgold] (S2) at (6,0.5) {$S_2$};
  \node[lastgold] (R2) at (8.2,0.5) {$R_2$};
  \node[known] (St2) at (5,-1) {$\tilde S_2$};
  \node[known] (Rt2) at (7,-1) {$\tilde R_2$};
  \draw[edge] (Sp1) -- (A1);
  \draw[edge] (Sp1) -- (S1);
  \draw[edge] (Sp1) to[bend right=18] (St1);
  \draw[edge] (Sp1) -- (R1);
  \draw[edge] (A1) -- (R1);
  \draw[sametime] (S1) -- (R1);
  \draw[edge] (R1) -- (Rt1);
  \draw[edge] (R1) to[bend right=18] (S2);
  \draw[edge] (Rt1) -- (St2);
  \draw[edge] (Sp1) to[bend left=28] (Sp2);
  \draw[edge] (Sp2) -- (A2);
  \draw[edge] (Sp2) -- (S2);
  \draw[edge] (Sp2) to[bend right=18] (St2);
  \draw[edge] (Sp2) -- (R2);
  \draw[edge] (A2) -- (R2);
  \draw[sametime] (S2) -- (R2);
  \draw[edge] (R2) -- (Rt2);
\end{tikzpicture}%
}
\begin{subfigure}[t]{0.32\linewidth}
  \centering\annotationpanel{0}
  \caption{$\mathcal F_0$: before annotation.}
\end{subfigure}\hfill
\begin{subfigure}[t]{0.32\linewidth}
  \centering\annotationpanel{1}
  \caption{$\mathcal F_1$: reveal $R_1,S_2$.}
\end{subfigure}\hfill
\begin{subfigure}[t]{0.32\linewidth}
  \centering\annotationpanel{2}
  \caption{$\mathcal F_2$: reveal $R_2$.}
\end{subfigure}
\par\smallskip
{\footnotesize
\textcolor{annotationknown}{$\bullet$} Previously available\qquad
\textcolor{annotationnew}{$\bullet$} Newly annotated\qquad
\textcolor{gray}{$\circ$} Unrevealed (dashed)}
\endgroup
  \caption{Forward-monotone annotation for $T=2$, with $S_1$ initially observed.
  The panels show successive information sets along a labeled trajectory:
  stage 1 reveals $(R_1,S_2)$, and stage 2 reveals $R_2$.
  Annotation may stop at any panel, but cannot skip stage 1:
  $C_2=1$ implies $C_1=1$. Blue nodes are already available, orange nodes
  are newly revealed, and dashed gray nodes remain unobserved.}
  \label{fig:forward-monotone-panels}
\end{figure}

Define the sampling probabilities, and their product
$$
\lambda_t(\mcF_{t-1})
=
P(C_t=1\mid C_{1}, \ldots, C_{t-1},\mcF_{t-1}), \qquad \Lambda_{1:t}
=
\prod_{j=1}^t \lambda_j(\mcF_{j-1}),
\qquad t=1,\ldots,T.
$$

\section{Method}\label{sec-method}

\subsection{Estimation under sequential annotation protocols.}

\paragraph{Sequential estimation under annotation.}

If we had completely observed rewards, the efficient estimator for off-policy evaluation is as follows \citep{kallus2019double}:
\[ \textstyle
\Gamma_T^{\pi_e}
=
V_1^{\pi_e}(S_1)
+
\sum_{t=1}^T
\gamma^{t-1} \mu_t^{\pi_e}
\{R_t+\gamma V_{t+1}^{\pi_e}(S_{t+1})-Q_t^{\pi_e}(S_t,A_t)\}
\]
We define projections of the full-data score onto the annotation filtration.
$$ m_t^{\pi_e}(\mathcal{F}_t) = \E[\Gamma_{T}^{\pi_e}\mid \mathcal{F}_t] $$

The initial imputation $m_0^{\pi_e}(\mathcal F_0)$ conditions only on the initially observed data $\mathcal O$. To distinguish trajectory time $t$ from annotation stage $j$, define
\[
\begin{aligned}
b_t(\mathcal F_j)&=\E[R_t\mid\mathcal F_j],\qquad
\tilde V_t^{\pi_e}(\mathcal F_j)=\E[V_t^{\pi_e}(S_t)\mid\mathcal F_j],\qquad
\tilde Q_t^{\pi_e}(\mathcal F_j)=\E[Q_t^{\pi_e}(S_t,A_t)\mid\mathcal F_j].
\end{aligned}
\]
Initial projections use $\mathcal F_0$; immediately before revelation, the corresponding projections are $b_t(\mathcal F_{t-1})$ for $R_t$ and $\tilde V_t^{\pi_e}(\mathcal F_{t-2}),\tilde Q_t^{\pi_e}(\mathcal F_{t-2})$ for $t\ge2$. 
\begin{revision}Our sequentially-annotated estimator begins with the full-imputation estimator $m_0^{\pi_e}(\mathcal{F}_0)$ and proceeds with sequential inverse-annotation-weighted increments of correcting the imputations of $m_{t-1}$ with the incrementally annotated $m_{t}$, using the data obtained by annotating data at time $t$. Let 
\begin{equation}
  \tilde{\Gamma}_T^{\pi_e}(O;\bm\Lambda,\hat{\bm\eta}_m) =   \hat m_0^{\pi_e}(\mathcal{F}_0)
      + \sum_{t=1}^T
      \frac{C_{1:t}}{\Lambda_{1:t}} (\hat m_t^{\pi_e}(\mathcal{F}_t) - \hat m_{t-1}^{\pi_e}(\mathcal{F}_{t-1})).
\end{equation}
Under annotation ignorability and using the true annotation probabilities, the projection increments are mean-zero and telescope, cancelling in expectation but providing variance reduction. \begin{equation}\label{eq:approximate-projection-telescope}
\E[\tilde\Gamma(O;\bm\Lambda,\hat{\bm\eta}_m)\mid\mathcal F_T]
=\hat m_0+\sum_{t=1}^T(\hat m_t-\hat m_{t-1})=\hat m_T.
\end{equation}
Consequently, this estimator, with oracle nuisance functions, is unbiased for our target policy value estimand. 
\begin{proposition}[Unbiased estimation]\label{prop:seq-annotation-unbiased} $\E[   \tilde{\Gamma}_{T}^{\pi_e}] = \Phi^{\pi_e}$.
\end{proposition}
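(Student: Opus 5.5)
The argument has three parts: reduce by the tower property to the fully annotated score $m_T$, show $m_T = \Gamma_T^{\pi_e}$, and use double robustness of $\Gamma_T^{\pi_e}$ under oracle nuisances.

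\textbf{Step 1 (telescoping).} Condition on $\mathcal F_T$ first. For each $t$, write
\[
\E\!\left[\frac{C_{1:t}}{\Lambda_{1:t}}\,\Big|\,\mathcal F_T\right]=1 .
\]
To prove this, peel off one indicator at a time. Under forward monotonicity, $C_{1:t}=C_t C_{1:t-1}$, and $C_{1:t-1}=1$ is the only case that contributes. By \Cref{asn-anno-ignorability},
\[
\E[C_t\mid \mathcal F_T, C_{1:t-1}=1]=\lambda_t(\mathcal F_{t-1}).
\]
This is $\mathcal F_T$-measurable, so it cancels the last factor of $\Lambda_{1:t}$. Induction on $t$ then gives the display. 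Here \Cref{asn-seq_overlap} guarantees $\Lambda_{1:t}\ge c_\lambda^t>0$, so every ratio is well defined.

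Since each $m_t$ is $\mathcal F_t\subseteq\mathcal F_T$-measurable, the increments $m_t-m_{t-1}$ factor out of the conditional expectation. This yields \eqref{eq:approximate-projection-telescope}: $\E[\tilde\Gamma_T^{\pi_e}\mid\mathcal F_T]=m_T(\mathcal F_T)$. The subtle point is that the conditional annotation probability must be the same whether we condition on $\mathcal F_T$ or on $\mathcal F_{t-1}$. This is exactly \Cref{asn-anno-ignorability}, and it is where a careful statement about the joint law of $(C_{1:T},\mathcal F_T)$ is needed.

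\textbf{Step 2 (reduce to the full-data score).} By construction, $\mathcal F_T$ contains every $R_t$ and $S_{t+1}$ together with $\mathcal O$. Under \Cref{asn-seq-igno-stilde}, $\mu_t$ is $\mathcal F_0$-measurable, so every quantity entering $\Gamma_T^{\pi_e}$ is $\mathcal F_T$-measurable. This includes $S_1$, which we take to be observed as in \Cref{fig:forward-monotone-panels}. Hence
\[
m_T=\E[\Gamma_T^{\pi_e}\mid\mathcal F_T]=\Gamma_T^{\pi_e},
\]
and the tower property gives $\E[\tilde\Gamma_T^{\pi_e}]=\E[\Gamma_T^{\pi_e}]$.

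\textbf{Step 3 (unbiasedness of the full-data doubly robust score).} Work under $p_{\pi_b}$ and use the oracle $Q_t,V_t$ and true $\mu_t$. By the Bellman definition of $Q_t$ and \Cref{setting-mdp},
\[
\E[R_t+\gamma V_{t+1}(S_{t+1})-Q_t(S_t,A_t)\mid S_t,A_t]=0 .
\]
Because $\mu_t$ is a function of $(S_t,A_t)$, each correction term has mean zero. Thus $\E[\Gamma_T^{\pi_e}]=\E[V_1^{\pi_e}(S_1)]$. Since $S_1$ has the same initial law under both policies, this equals $\Phi^{\pi_e}$ by unrolling the Bellman recursion with $V_{T+1}\equiv0$.

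Boundedness (\Cref{asn-bdd-reward}, \Cref{asn-seq_overlap}) ensures that all expectations are finite, which justifies the interchanges. In the non-Markovian \Cref{setting-nmdp}, the same argument goes through with history-indexed $Q_t$ and $\rho_{1:t}$ in place of $\mu_t$.

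I expect the main obstacle to be Step 1: the measurability bookkeeping that justifies cancelling the $\lambda_j$ factors. The rest is routine.
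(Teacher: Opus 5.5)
Your proof is correct, but it follows a different route from the paper's appendix proof.

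\textbf{The paper's route.} It never conditions on $\mathcal F_T$. It sets $D_t=m_t-m_{t-1}$ and uses the martingale property of the exact projections, $\E[D_t\mid\mathcal F_{t-1}]=0$. Combined with annotation ignorability, this shows each weighted increment $\frac{C_{1:t}}{\Lambda_{1:t}}D_t$ has mean zero, so $\E[\tilde\Gamma_T^{\pi_e}]=\E[m_0]=\E[\Gamma_T^{\pi_e}]=\Phi^{\pi_e}$. The final equality, unbiasedness of the full-data doubly robust score, is taken as known.

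\textbf{Your route.} You instead prove the design-conditional identity $\E[\tilde\Gamma_T^{\pi_e}\mid\mathcal F_T]=m_T$. This is the telescoping display \eqref{eq:approximate-projection-telescope}, via the weight-collapse argument of \Cref{lem:collapse}. This buys two things:
\begin{itemize}
\item It uses no martingale property. It holds for any $\mathcal F_t$-measurable predictions whose endpoint is the full-data score, which is what the paper later needs for approximate or frozen projections (\Cref{prop:approximate-projection-bias}, \Cref{prop-surrogate-variance}).
\item It applies \Cref{asn-anno-ignorability} exactly as stated, conditioning on $\mathcal F_T$. The paper's step $\E[C_tD_t\mid C_{1:t-1}=1,\mathcal F_{t-1}]=\lambda_t\E[D_t\mid\mathcal F_{t-1}]$ implicitly also needs that the event $\{C_{1:t-1}=1\}$ carries no information about $\mathcal F_T$ beyond $\mathcal F_{t-1}$. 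That is true by iterating the same assumption, but it is not stated.
\end{itemize}
Your Step~3 also supplies the Bellman-residual argument for $\E[\Gamma_T^{\pi_e}]=\Phi^{\pi_e}$, which the paper omits. What the paper's route buys in exchange is reuse: its mean-zero-increment mechanics are exactly what it needs next for the cross-term orthogonality in the variance decomposition of \Cref{prop-variance-decomposition}.

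\textbf{One small correction to Step~2.} You do not need $\mu_t$ to be $\mathcal F_0$-measurable. In the joint reward--state MDP setting this generally fails, per the paper's own \Cref{rem-mu-tabular}. The function $\mu_t(S_t,A_t)$ is $\mathcal F_T$-measurable because $S_t$ and $A_t$ are revealed by $\mathcal F_T$. In any case, $\E[m_T]=\E[\Gamma_T^{\pi_e}]$ follows from the tower property alone.
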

\begin{proof}
    See Appendix~\ref{pf:seq-annotation-unbiased}.
\end{proof}
Suppose the forward monotone annotation strategy of \Cref{def:forward-monotone-sampling}.
In the sequential setting, the martingale nature of the annotation filtrations admits the following variance decomposition.
\begin{proposition}[Variance decomposition]\label{prop-variance-decomposition}
\begin{equation}\op{Var}(\tilde\Gamma^{\pi_e})
= \op{Var}( m_0^{\pi_e}(\mathcal{F}_0)) +
\sum_{t=1}^T \E\left[ \frac{
\Var(m_t^{\pi_e}(\mathcal{F}_t)\mid \mathcal{F}_{t-1})}{\Lambda_{1:t}(\mathcal{F}_{t-1}) }
\right] 
\label{eqn-variance-decomposition}
\end{equation}
\end{proposition}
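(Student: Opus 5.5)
We will prove the decomposition by writing $\tilde\Gamma^{\pi_e}$ (with oracle projections $m_t=m_t^{\pi_e}(\mathcal F_t)$ and true $\Lambda_{1:t}$) as $m_0$ plus a sum of martingale-difference terms $D_t=\frac{C_{1:t}}{\Lambda_{1:t}}(m_t-m_{t-1})$, $t=1,\dots,T$. We will show these increments are mutually uncorrelated and uncorrelated with $m_0$, and then compute each second moment. The right filtration is the enlarged one $\mathcal G_t=\sigma(\mathcal F_t, C_{1:t})$, with $\mathcal G_0=\mathcal F_0$. Then $D_t$ is $\mathcal G_t$-measurable: it depends on $C_{1:t}$, on $\Lambda_{1:t}$ (which is $\mathcal F_{t-1}$-measurable), and on $m_t,m_{t-1}$. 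We will use the convention that $D_t$ only needs to be evaluated on the event $C_{1:t}=1$, where $\mathcal F_t$ is actually revealed, since the factor $C_{1:t}$ kills it otherwise.

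\textbf{Step 1: $E[D_t\mid\mathcal G_{t-1}]=0$.} We condition further on $\mathcal F_{t-1}$ and $C_{1:t-1}$, which are contained in $\mathcal G_{t-1}$. On $\{C_{1:t-1}=1\}$ we use Assumption~\ref{asn-anno-ignorability}. The stage-$t$ decision $C_t$ is drawn with probability $\lambda_t(\mathcal F_{t-1})$ independently of $\mathcal F_T$ given the prefix, so
\[
E[C_t(m_t-m_{t-1})\mid \mathcal F_{t-1},C_{1:t-1}=1]=\lambda_t(\mathcal F_{t-1})\,E[m_t-m_{t-1}\mid\mathcal F_{t-1}].
\]
The right-hand side is $0$ because $m_t=E[\Gamma_T\mid\mathcal F_t]$ and the tower property gives $E[m_t\mid\mathcal F_{t-1}]=m_{t-1}$. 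The key point to check is that conditioning on $C_{1:t-1}$ does not distort the law of $\mathcal F_t$ given $\mathcal F_{t-1}$. This follows from ignorability at every earlier stage: each $C_j$ depends only on $\mathcal F_{j-1}\subseteq\mathcal F_{t-1}$ and independent randomization. On $\{C_{1:t-1}=0\}$ we have $D_t=0$ trivially. Because $m_0$ and all $D_s$ with $s<t$ are $\mathcal G_{t-1}$-measurable, iterated expectations give $\mathrm{Cov}(m_0,D_t)=0$ and $\mathrm{Cov}(D_s,D_t)=0$ for $s<t$, using square integrability, which follows from Assumptions~\ref{asn-bdd-reward} and~\ref{asn-seq_overlap}. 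Hence
\[
\mathrm{Var}(\tilde\Gamma)=\mathrm{Var}(m_0)+\sum_t E[D_t^2].
\]

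\textbf{Step 2: compute $E[D_t^2]$.} We have $D_t^2=\frac{C_{1:t}}{\Lambda_{1:t}^2}(m_t-m_{t-1})^2$, since $C_{1:t}^2=C_{1:t}$. We condition on $\mathcal F_T$ and use forward monotonicity (Definition~\ref{def:forward-monotone-sampling}) together with Assumption~\ref{asn-anno-ignorability} applied stage by stage. Given $\mathcal F_T$, the product rule gives $P(C_{1:t}=1\mid\mathcal F_T)=\prod_{j\le t}\lambda_j(\mathcal F_{j-1})=\Lambda_{1:t}$. Therefore
\[
E[D_t^2]=E\!\left[\frac{(m_t-m_{t-1})^2}{\Lambda_{1:t}}\right].
\]
Finally, $\Lambda_{1:t}$ is $\mathcal F_{t-1}$-measurable, so conditioning on $\mathcal F_{t-1}$ yields
\[
E\!\left[\frac{E[(m_t-m_{t-1})^2\mid\mathcal F_{t-1}]}{\Lambda_{1:t}}\right]=E\!\left[\frac{\mathrm{Var}(m_t\mid\mathcal F_{t-1})}{\Lambda_{1:t}}\right],
\]
which is exactly the $t$-th term in \eqref{eqn-variance-decomposition}.

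\textbf{Main obstacle.} The delicate point is the conditional-independence bookkeeping in both steps. We must justify that $P(C_{1:t}=1\mid\mathcal F_T)=\Lambda_{1:t}(\mathcal F_{t-1})$ and that the law of $\mathcal F_t$ given $(\mathcal F_{t-1},C_{1:t-1}=1)$ equals its law given $\mathcal F_{t-1}$ alone. We will prove both by induction on $t$. At each stage we peel off the last factor with Assumption~\ref{asn-anno-ignorability}, using that $\lambda_j$ depends only on $\mathcal F_{j-1}\subseteq\mathcal F_{t-1}$. The induction also handles the measurability subtlety that $\mathcal F_t$ is only observed on $\{C_{1:t}=1\}$: we work with the full-data variables and the indicators throughout.
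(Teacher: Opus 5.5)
Your proof is correct, and your Step~2 coincides with the paper's treatment of the diagonal terms: $C_{1:t}^2=C_{1:t}$, then $\mathbb{E}[C_{1:t}\mid\mathcal F_T]=\Lambda_{1:t}$, then conditioning on $\mathcal F_{t-1}$. The difference is in how the cross terms are killed.

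\textbf{The paper's route.} The paper never enlarges the filtration. For $s<t$ it uses $C_{1:s}C_{1:t}=C_{1:t}$ and conditions on the full data $\mathcal F_T$ to integrate the indicators out. This leaves $\mathbb{E}[(m_s-m_{s-1})(m_t-m_{t-1})/\Lambda_{1:s}]$, which vanishes by the martingale property of $(m_t)$ in $(\mathcal F_t)$ alone. The $m_0$ cross terms are handled the same way.

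\textbf{Your route.} You instead show that the weighted increments are martingale differences in $\mathcal G_t=\sigma(\mathcal F_t,C_{1:t})$. This is essentially the argument the paper gives for Proposition~\ref{prop:seq-annotation-unbiased}, not for this proposition.

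\textbf{What each buys.} The paper's route is more economical. A single weight-collapse identity covers both the cross and the diagonal terms, and no statement about the law of $\mathcal F_t$ given the annotation history is needed. Your route yields a genuine martingale representation of $\tilde\Gamma^{\pi_e}-m_0$ in a filtration that contains the annotation draws. That is the structure a martingale CLT or an adaptive-design analysis would use. It also surfaces the conditional-independence step that the paper's proof of Proposition~\ref{prop:seq-annotation-unbiased} asserts without comment.

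\textbf{Your ``main obstacle''.} This needs no separate induction on laws. Iterating annotation ignorability gives $P(C_{1:t-1}=1\mid\mathcal F_T)=\Lambda_{1:t-1}$, which is $\mathcal F_{t-2}$-measurable. Hence $\mathbb{E}[C_{1:t-1}\,g\mid\mathcal F_{t-1}]=\Lambda_{1:t-1}\,\mathbb{E}[g\mid\mathcal F_{t-1}]$ for every integrable $\mathcal F_T$-measurable $g$. That is exactly your ``no distortion'' fact, and it comes from the same identity the paper uses.
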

\begin{proof}
    See Appendix~\ref{pf-variance-decomposition}.
\end{proof}
\paragraph{Closed-form simplifications of the estimator}
The estimator is defined with slightly different $\mu_t$ functions depending on whether the decision process is Markovian (\Cref{setting-mdp}) or non-Markovian (\Cref{setting-nmdp}). 

\begin{remark}[Tabular sufficiency of the density ratio]\label{rem-mu-tabular} 
Suppose $\pi_e$ depends only on $A_t, S_t'$.
If we have a Markov decision process (\Cref{setting-mdp}) under reward-only annotation (\Cref{setting-reward-only}), then the cumulative importance weight $\rho_{1:t}(S_t,A_t)$ is $\mathcal{F}_0$-measurable, as is $\mu_t(S_t, A_t)$, and can therefore be learned via the standard recursion.

If we have a Markov decision process (\Cref{setting-mdp}) with joint reward-state annotation (\Cref{setting-intermediate}), instead the required density ratio satisfies a recursion with respect to $S'_t,A_t$-conditional expectations only under an additional restrictive assumption that $S_t \indep \{A_{t-1},S_{t-1}'\} \mid S_t'$. Otherwise, $\mu_t(S_t,A_t)$ satisfies the analogous recursion with $\E\left[\mu_{t-1}\cdot (\pi_e(a_t\mid s_t')/\pi_b(a_t\mid s_t')) \vert S_t, A_t\right].$
\end{remark}

For a non-Markovian decision process, under the forward monotone annotation strategy, we can equivalently pull out $\mu_t(s',a')$ from the $\tilde{S}_t$ projection.

To provide more intuition, we first consider the two-stage case, where $T=2$. The always-observed data is $\mathcal{O}_0 = \{ (S_1', A_1, \tilde{R}_1, \tilde{S}_2, A_2, \tilde{R}_2) \}$.%

For $T=2$, the first annotation reveals $(R_1,S_2)$,
  while the
  second reveals $R_2$. For an NMDP setting, starting from
  $$ m_0(\mathcal{F}_0)
=
V_1+
  \sum_{t=1}^T\gamma^{t-1}\mu_t
  \bigl(b_t(\mathcal F_0)+\gamma\tilde V_{t+1}(\mathcal F_0)
  -\tilde Q_t(\mathcal F_0)\bigr)
  ,$$
  with $\tilde Q_1(\mathcal F_0)=Q_1$ and $\tilde V_{T+1}(\mathcal F_0)=0$, the additional incremental projections are
  \begin{align*}
  m_1^{\pi_e}(\mathcal F_1)-m_0^{\pi_e}(\mathcal F_0)
  &= \mu_1^{\pi_e}
     \Bigl\{R_1-b_1(\mathcal F_0)
     +\gamma\bigl(V_2^{\pi_e}-\tilde
     V_2^{\pi_e}(\mathcal F_0)\bigr)\Bigr\}
     \\
  &\quad
  +\gamma\Bigl\{
     \mu_2^{\pi_e}\bigl(b_2(\mathcal F_1)-Q_2^{\pi_e}\bigr)
     -\E\bigl[\mu_2^{\pi_e}
        \{R_2-Q_2^{\pi_e}\}
        \mid\mathcal F_0\bigr]\Bigr\},\\
  m_2^{\pi_e}(\mathcal F_2)-m_1^{\pi_e}(\mathcal F_1)
  &=\gamma\mu_2^{\pi_e}(R_2-b_2(\mathcal F_1)).
  \end{align*}
By \Cref{asn-annotation-exclusion-restriction}, $\E[R_{t+1}\mid \mathcal{F}_t]-\E[R_{t+1}\mid \mathcal{F}_{t-1}]=0.$

The estimator is simplest when only annotating rewards. In that case, 
\begin{align*}
  m_t^{\pi_e}(\mathcal{F}_t) - m_{t-1}^{\pi_e}(\mathcal{F}_{t-1})&=
    \gamma^{t-1}\left\{ \mu_t^{\pi_e}
    (R_t-b_t(\mathcal F_{t-1}))\right\}. \tag{for N/MDP with reward-only annotation}
\end{align*}
For MDPs with joint reward and state annotation
(\Cref{setting-intermediate}), annotation $t$ reveals $(R_t,S_{t+1})$.
The exact projection increment is
\begin{align}
m_t^{\pi_e}(\mathcal F_t)
-m_{t-1}^{\pi_e}(\mathcal F_{t-1})
&=\gamma^{t-1}\Bigl\{
  \mu_t^{\pi_e}(R_t+\gamma V_{t+1}^{\pi_e})
  -\E[\mu_t^{\pi_e}(R_t+\gamma V_{t+1}^{\pi_e})
       \mid\mathcal F_{t-1}]
  \Bigr\}
  \\
&\quad
+\sum_{j=t+1}^{T}\gamma^{j-1}\Bigl\{
  \E[\mu_j^{\pi_e}
      (R_j+\gamma V_{j+1}^{\pi_e}-Q_j^{\pi_e})
      \mid\mathcal F_t] \label{eqn-mdp-increment}
  -\E[\mu_j^{\pi_e}
      (R_j+\gamma V_{j+1}^{\pi_e}-Q_j^{\pi_e})
      \mid\mathcal F_{t-1}]
  \Bigr\}. \nonumber
\end{align}
Here 
$V_{T+1}^{\pi_e}=0$, and the sum is empty when $t=T$.
While $\mathcal{F}_{t-1}$ reveals 
$\mu_t^{\pi_e}Q_t^{\pi_e}$, the other terms in the last 
 sum record how the new annotation updates predictions
of future weighted residuals. 

\end{revision}

\begin{revision}

\end{revision}

\paragraph{Further simplified estimator via local approximation}
The final estimator that we use is a local approximation that
  retains corrections
  to the current reward and next-stage value functions,
  while omitting
  updates to predictions of more distant score
  components.
  \begin{align*}
    \tilde \Gamma_T^{\pi_e}
    &=m_0^{\pi_e}(\mathcal F_0)+\sum_{t=1}^{T}\gamma^{t-1}\frac{C_{1:t}}{\Lambda_{1:t}}
    \Bigl[\mu_t^{\pi_e}\bigl\{R_t-b_t(\mathcal F_{t-1})
      +\gamma\bigl(V_{t+1}^{\pi_e}-\tilde V_{t+1}^{\pi_e}(\mathcal F_{t-1})\bigr)\bigr\}-\gamma\mu_{t+1}^{\pi_e}
      \bigl(Q_{t+1}^{\pi_e}-\tilde Q_{t+1}^{\pi_e}(\mathcal F_{t-1})\bigr)\Bigr].
\end{align*}

The local approximation combines two approximations. The first approximation can be viewed as using a misspecified predictor, 
$$ \text{estimating } \E[ \mu_{t+1}Q_{t+1}\mid\mathcal F_{t-1} ]\text{ with }
       \E[ \mu_{t+1}(S_{t+1},A_{t+1}) \mid \mathcal{F}_{t-1}]\,
  \E[Q_{t+1}\mid\mathcal F_{t-1}].$$
The second simplification omits updates to future value predictions induced by annotation $t$, which reveals $(R_t,S_{t+1})$:
\[
\E[Q_{t'}\mid\mathcal F_t]
=\E[Q_{t'}\mid\mathcal F_{t-1}],
\qquad t'\ge t+2.
\]
When the corresponding density ratio is
  $\mathcal F_0$-measurable, such omitted updates have
  mean zero by iterated expectation, e.g. $\E[\mu_2\{b_2(\mathcal F_1)-b_2(\mathcal F_0)\}]=0$, though omission can change the variance. In general, these predictions frozen to earlier annotation updates have a different asymptotic variance, larger than that in \Cref{prop-variance-decomposition}. But they result in the same unbiased estimation, due to the following result under approximate projections. 

  \begin{remark}[Modifications for MDPs with joint states and rewards]
      Finally, in the MDP setting with joint states and rewards, when $T>2$, we require the following adjustments since $\mu_t(S_t,A_t)$ is not $\mathcal{F}_0$-measurable. We add a small correction to the estimator that restores telescoping. 
    
    \begin{equation}\tilde{\Gamma}_{T,M}^{\pi_e}=\hat{m}_0+\sum_{t=1}^{T-1} \frac{C_{1: t}}{\Lambda_{1: t}}\left(\hat{m}_t-\hat{m}_{t-1}\right)+\frac{C_{1: T}}{\Lambda_{1: T}}\left(\hat{\Gamma}_T^{\pi_e}-\hat{m}_{T-1}\right)
    \label{eqn-mdp-estimator}
    \end{equation}

  \end{remark}

Our estimator is fairly robust to \textit{approximate} projections onto $\mathcal{F}_{t-1}$, which enables such approximations, so long as standard DRL product-error rates are satisfied. We summarize this in the following, where $\hat{\bm\eta}_m = \{ \{\hat\mu_t,\hat Q_t,\hat b_t\}_{t=1}^T,
\{\hat{\tilde V}_t,\hat{\tilde Q}_t\}_{t=2}^T\}$.

\begin{proposition}[Bias under approximate projections]
\label{prop:approximate-projection-bias}
Consider a Markov decision process and suppose that \Cref{setting-mdp} and \cref{asn-seq-igno-stilde,asn-seq_overlap,asn-anno-ignorability}
hold. Condition on an independent training sample. With the actual annotation probabilities $\bm\Lambda$, and  $\hat m_T=\hat\Gamma_T^{\pi_e}$, then
\begin{align*}
\E\big[\tilde\Gamma(O;\bm\Lambda,\hat{\bm\eta}_m)\big]-\Phi^{\pi_e}
&=\E_{\pi_b}[\hat m_T-m_T]
=\sum_{t=1}^T\gamma^{t-1}\E_{\pi_b}\big[
(\hat\mu_t-\mu_t)\{\gamma(\hat V_{t+1}-V_{t+1})-(\hat Q_t-Q_t)\}\big].
\end{align*}
\end{proposition}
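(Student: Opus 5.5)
The plan is to split the bias into an annotation part, which vanishes by telescoping, and a full-data part, which reduces to the standard doubly robust product-error calculation.

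\textbf{Step 1: the annotation layer contributes no bias.} I would condition on the independent training sample, so that $\hat{\bm\eta}_m$ and hence every $\hat m_t$ is a fixed function. By annotation ignorability (\Cref{asn-anno-ignorability}) and forward monotonicity, $\E[C_t\mid \mathcal F_T, C_{1:t-1}=1]=\lambda_t(\mathcal F_{t-1})$. Iterating this from $t$ down to $1$ gives
\[
\E\!\left[\frac{C_{1:t}}{\Lambda_{1:t}}\,\Big|\,\mathcal F_T\right]=1 .
\]
This holds because $\Lambda_{1:t}$ is built from the true probabilities and is $\mathcal F_{t-1}\subseteq\mathcal F_T$ measurable; overlap (\Cref{asn-seq_overlap}) keeps $\Lambda_{1:t}\ge c_\lambda^t>0$. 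Each increment $\hat m_t-\hat m_{t-1}$ is $\mathcal F_T$-measurable, so
\[
\E[\tilde\Gamma\mid\mathcal F_T]=\hat m_0+\sum_{t=1}^T(\hat m_t-\hat m_{t-1})=\hat m_T,
\]
as in \eqref{eq:approximate-projection-telescope}. Nothing here uses that $\hat m_t$ is a correct projection, which is why approximate projections are allowed. For the MDP variant \eqref{eqn-mdp-estimator}, the last increment is $\hat\Gamma_T-\hat m_{T-1}$, so the telescoped sum again equals $\hat\Gamma_T=\hat m_T$.

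\textbf{Step 2: reduce to a full-data statement.} By the tower property,
\[
\E[\tilde\Gamma]-\Phi^{\pi_e}=\E_{\pi_b}[\hat\Gamma_T]-\Phi^{\pi_e}.
\]
Since the oracle score satisfies $\E_{\pi_b}[\Gamma_T]=\Phi^{\pi_e}$ (\Cref{prop:seq-annotation-unbiased}, or directly from \citealt{kallus2019double}) and $m_T=\Gamma_T$, this equals $\E_{\pi_b}[\hat m_T-m_T]$. That is the first equality of the proposition.

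\textbf{Step 3: compute the remaining bias.} I would write
\[
\hat\Gamma_T=\hat V_1(S_1)+\sum_t\gamma^{t-1}\hat\mu_t\{R_t+\gamma\hat V_{t+1}(S_{t+1})-\hat Q_t(S_t,A_t)\},
\]
and then add and subtract $\mu_t$ in each summand. The pieces are handled as follows.
\begin{itemize}
\item Under \Cref{setting-mdp} and \Cref{asn-seq-igno-stilde}, $\E_{\pi_b}[R_t+\gamma V_{t+1}(S_{t+1})\mid S_t,A_t]=Q_t$. Hence $\E_{\pi_b}[\hat\mu_t(R_t+\gamma V_{t+1}-Q_t)]=0$, which leaves $\hat\mu_t\{\gamma(\hat V_{t+1}-V_{t+1})-(\hat Q_t-Q_t)\}$.
\item Replacing $\hat\mu_t$ by $\mu_t$ in these differences gives terms that I change measure on using $\mu_t=p_{\pi_e}(S_t,A_t)/p_{\pi_b}(S_t,A_t)$:
\[
\E_{\pi_b}[\mu_t(\hat Q_t-Q_t)]=\E_{\pi_e}[\hat Q_t-Q_t].
\]
\item For the value differences, I first condition on $(S_t,A_t)$ and then use that the state-action marginal at $t+1$ under $\pi_e$ is the $\pi_e$-marginal of $S_{t+1}$ composed with $\pi_e(\cdot\mid S_{t+1})$, together with $\hat V=\sum_a\pi_e\hat Q$. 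This gives
\[
\E_{\pi_b}[\mu_t\gamma(\hat V_{t+1}-V_{t+1})]=\gamma\,\E_{\pi_e}[\hat Q_{t+1}-Q_{t+1}].
\]
\item Summing over $t$ with weights $\gamma^{t-1}$, these $\mu_t$ terms telescope against $\E[\hat V_1-V_1]=\E_{\pi_e}[\hat Q_1-Q_1]$.
\end{itemize}
What remains is exactly $\sum_t\gamma^{t-1}\E_{\pi_b}[(\hat\mu_t-\mu_t)\{\gamma(\hat V_{t+1}-V_{t+1})-(\hat Q_t-Q_t)\}]$.

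\textbf{Main obstacle.} I expect the delicate step to be the measure-change identity for $\mu_t\hat V_{t+1}$ in the joint-annotation MDP setting. It needs $\mu_t$ to be the true marginal ratio in $(S_t,A_t)$, and it needs the $\pi_e$ transition $S_t\to S_{t+1}$ to agree with the $\pi_b$ transition. The second requirement holds because transitions do not depend on the policy under \Cref{asn-seq-igno-stilde}. I would verify the identity first at $t=1$, where $\mu_1=\pi_e/\pi_b$, and then proceed by induction on $t$. Boundedness (\Cref{asn-bdd-reward}, \Cref{asn-seq_overlap}) ensures all expectations are finite.
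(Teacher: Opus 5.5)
Your proposal is correct and follows essentially the paper's route. The first equality is the telescoping identity \eqref{eq:approximate-projection-telescope} (equivalently the weight-collapse Lemma~\ref{lem:collapse}) with $\hat m_T=\hat\Gamma_T^{\pi_e}$, and the second is the same Bellman-plus-change-of-measure telescoping computation of $\E[\Delta m_T]$ that appears in the proof of Lemma~\ref{lem:mean_diff_decomposition}. The induction you anticipate for the measure-change step is unnecessary: $\mu_t$ is defined directly as the marginal ratio $p_{\pi_e}(S_t,A_t)/p_{\pi_b}(S_t,A_t)$ and the MDP transition kernel does not depend on the policy, so the identity holds in one step for each $t$.
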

Crucially, no consistency of $\hat m_t$, $t<T$, is required for unbiased estimation. When $\hat m_t$ are misspecified, such that they converge to a probability limit $m^\dagger_t \neq m_t$, we retain orthogonal estimation and unbiasedness, but the variance minimization can be suboptimal.  We therefore recommend the first batch uniformly sample independent trajectories to enable estimation of the full-state nuisances $\mu_t, \hat V, \hat Q$, to ensure the above error is $o_p(n^{-\frac 12})$ under typical assumptions. 

Let $m_t^\dagger$ be the frozen predictions with full-data endpoint $m_T^\dagger=\Gamma_T^{\pi_e}$. Let $m_t=\E[\Gamma_T^{\pi_e}\mid\mathcal F_t]$ be the exact projections, and let $\bm\eta_m^\dagger$ collect the frozen predictions. 

\begin{proposition}[Surrogate variance]\label{prop-surrogate-variance}

Assume annotation ignorability, overlap and finite second moments. Let $m_t=\E[\Gamma_T^{\pi_e}\mid\mathcal F_t]$ and let $m_t^\dagger$ be $\mathcal F_t$-measurable frozen predictions with $m_T^\dagger=\Gamma_T^{\pi_e}$. Let $\bm\eta_m$ and $\bm\eta_m^\dagger$ collect these respective predictions. Using the realized annotation probabilities, and set $\Lambda_{1:0}=1$. The budget-constraint class is fixed throughout.

\textit{(i) Surrogate variance.}
   With finite second moments and $\Lambda_{1:0}=1$,
\begin{equation}
\begin{aligned}
\Var(\tilde\Gamma(O;\bm\Lambda,\bm\eta_m^\dagger))
&={\Var(m_0)+\sum_{t=1}^{T}\E\!\left[
\frac{\Var(m_t\mid\mathcal F_{t-1})}{\Lambda_{1:t}}\right]}
+\underbrace{\sum_{t=1}^{T}\E\!\left[
\left(\frac{1}{\Lambda_{1:t}}-\frac{1}{\Lambda_{1:t-1}}\right)
(m_{t-1}^\dagger-m_{t-1})^2\right]}_{\text{variance cost of freezing}}.
\end{aligned}
\label{eq:approx-variance}
\end{equation}
\textit{(ii) Optimization guarantee.} Suppose the second term above, the variance cost, is at most $\varepsilon$ uniformly over feasible designs. If $\bm\Lambda^\ast$ minimizes the surrogate variance in part~(i), then
\[
\Var(\tilde\Gamma(O;\bm\Lambda^\ast,\bm\eta_m^\dagger))
-\inf_{\bm\Lambda}\Var(\tilde\Gamma(O;\bm\Lambda,\bm\eta_m^\dagger))
\le\varepsilon.
\]
If an estimated or further simplified variance criterion differs uniformly from the surrogate variance by at most $\delta$, its minimizer $\hat{\bm\Lambda}$ instead satisfies the same bound with $\varepsilon+2\delta$. The guarantee holds on the event of these uniform bounds and also controls the shortfall in maximized variance reduction relative to a fixed reference variance.

\textit{(iii) Approximation bound.} Suppose every feasible design satisfies $\Lambda_{1:T}\ge\underline\Lambda>0$ almost surely. Then
\begin{equation}
\begin{aligned}
0&\le\sup_{\bm\Lambda}\Bigl\{
\Var(\tilde\Gamma(O;\bm\Lambda,\bm\eta_m^\dagger))
-\Var(\tilde\Gamma(O;\bm\Lambda,\bm\eta_m))\Bigr\}
\le (\underline\Lambda^{-1}-1)
\sum_{t=1}^{T}\|m_{t-1}^\dagger-m_{t-1}\|_2^2.
\end{aligned}
\label{eq:freezing-bound}
\end{equation}
 
\end{proposition}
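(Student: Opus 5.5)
The plan is to rewrite the estimator with frozen predictions by summation by parts, so that it becomes the full-data score plus a sum of martingale differences in the annotation indicators. The variance then splits into orthogonal pieces, and the freezing cost appears as an explicit nonnegative term.

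\textbf{Step 1: summation by parts.} Write $W_t=C_{1:t}/\Lambda_{1:t}$ with $W_0=1$, and $\Delta W_t=W_t-W_{t-1}$. Summation by parts applied to $\tilde\Gamma(O;\bm\Lambda,\bm\eta_m^\dagger)=m_0^\dagger+\sum_{t=1}^T W_t(m_t^\dagger-m_{t-1}^\dagger)$ gives $W_T m_T^\dagger-\sum_{t=1}^T\Delta W_t\, m_{t-1}^\dagger$. Using $m_T^\dagger=\Gamma_T^{\pi_e}$, this becomes
\[
\tilde\Gamma(O;\bm\Lambda,\bm\eta_m^\dagger)=\Gamma_T^{\pi_e}+\sum_{t=1}^T \Delta W_t\,\bigl(\Gamma_T^{\pi_e}-m_{t-1}^\dagger\bigr).
\]

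\textbf{Step 2: martingale structure.} Let $\mathcal G_t=\sigma(\mathcal F_T,C_{1:t})$. By annotation ignorability (\Cref{asn-anno-ignorability}) and forward monotonicity, $P(C_t=1\mid\mathcal G_{t-1})=C_{1:t-1}\lambda_t(\mathcal F_{t-1})$, so $\E[\Delta W_t\mid\mathcal G_{t-1}]=0$. Since $\Gamma_T^{\pi_e}$ and $m_{t-1}^\dagger$ are $\mathcal G_0$-measurable, the summands are martingale differences with respect to $(\mathcal G_t)$. They are therefore mutually uncorrelated and uncorrelated with $\Gamma_T^{\pi_e}$.

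\textbf{Step 3: conditional second moments.} On the event $\{C_{1:t-1}=1\}$ we have $\E[(\Delta W_t)^2\mid\mathcal G_{t-1}]=\Lambda_{1:t-1}^{-2}(\lambda_t^{-1}-1)$. Taking the expectation of this over $C_{1:t-1}$ given $\mathcal F_T$ yields
\[
\E[(\Delta W_t)^2\mid\mathcal F_T]=\frac{1}{\Lambda_{1:t}}-\frac{1}{\Lambda_{1:t-1}},
\]
which is $\mathcal F_{t-1}$-measurable. Hence
\[
\Var(\tilde\Gamma(O;\bm\Lambda,\bm\eta_m^\dagger))=\Var(\Gamma_T^{\pi_e})+\sum_{t=1}^T\E\!\left[\left(\tfrac{1}{\Lambda_{1:t}}-\tfrac{1}{\Lambda_{1:t-1}}\right)\bigl(\Gamma_T^{\pi_e}-m_{t-1}^\dagger\bigr)^2\right].
\]

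\textbf{Step 4: separating the freezing cost.} Split $\Gamma_T^{\pi_e}-m_{t-1}^\dagger=(\Gamma_T^{\pi_e}-m_{t-1})-(m_{t-1}^\dagger-m_{t-1})$. The weight and $m_{t-1}^\dagger-m_{t-1}$ are $\mathcal F_{t-1}$-measurable, and $\E[\Gamma_T^{\pi_e}-m_{t-1}\mid\mathcal F_{t-1}]=0$, so the cross term vanishes. The squared second piece is exactly the freezing cost in \eqref{eq:approx-variance}. The remaining part is the same formula evaluated with $m^\dagger=m$, i.e. $\Var(\tilde\Gamma(O;\bm\Lambda,\bm\eta_m))$. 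By \Cref{prop-variance-decomposition} this equals $\Var(m_0)+\sum_t\E[\Var(m_t\mid\mathcal F_{t-1})/\Lambda_{1:t}]$. As a direct check, one can expand $\E[(\Gamma_T^{\pi_e}-m_{t-1})^2\mid\mathcal F_{t-1}]=\sum_{s\ge t}\E[\Var(m_s\mid\mathcal F_{s-1})\mid\mathcal F_{t-1}]$ and telescope the weights. This proves (i).

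\textbf{Part (ii).} Write $V^\dagger(\bm\Lambda)=S(\bm\Lambda)+e(\bm\Lambda)$, where $S$ is the surrogate variance from part (i) and $e(\bm\Lambda)$ is the freezing cost, with $0\le e\le\varepsilon$. For any feasible $\bm\Lambda$,
\[
V^\dagger(\bm\Lambda^\ast)\le S(\bm\Lambda^\ast)+\varepsilon\le S(\bm\Lambda)+\varepsilon\le V^\dagger(\bm\Lambda)+\varepsilon.
\]
Taking the infimum over $\bm\Lambda$ gives the bound. If $\hat{\bm\Lambda}$ minimizes a criterion $\hat S$ with $|\hat S-S|\le\delta$ uniformly, then $S(\hat{\bm\Lambda})\le\hat S(\hat{\bm\Lambda})+\delta\le\hat S(\bm\Lambda)+\delta\le S(\bm\Lambda)+2\delta$, which gives $\varepsilon+2\delta$. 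Subtracting both sides from a fixed reference variance gives the statement about the shortfall in variance reduction.

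\textbf{Part (iii).} By (i), the difference $\Var(\tilde\Gamma(O;\bm\Lambda,\bm\eta_m^\dagger))-\Var(\tilde\Gamma(O;\bm\Lambda,\bm\eta_m))$ is exactly the freezing cost. It is nonnegative because $\Lambda_{1:t}\le\Lambda_{1:t-1}$. For the upper bound, $\Lambda_{1:t-1}\le 1$ and $\Lambda_{1:t}\ge\Lambda_{1:T}\ge\underline\Lambda$ give
\[
\frac{1}{\Lambda_{1:t}}-\frac{1}{\Lambda_{1:t-1}}\le\frac{1}{\Lambda_{1:t}}-1\le\underline\Lambda^{-1}-1,
\]
which yields \eqref{eq:freezing-bound}.

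\textbf{Main obstacle.} I expect Step 3 to be the delicate step. It needs $\lambda_t$ to depend only on $\mathcal F_{t-1}$ and not on unrevealed information, which is what \Cref{asn-anno-ignorability} provides. It also needs the iterated conditioning on the nested indicators $C_{1:t}$ to be handled carefully so that $\E[W_{t-1}^2\mid\mathcal F_T]=1/\Lambda_{1:t-1}$. Once that identity holds, all remaining steps are orthogonality arguments.
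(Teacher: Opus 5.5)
Your proposal is correct and follows essentially the same route as the paper. The shared steps are: summation by parts to write the frozen-prediction estimator as $\Gamma_T^{\pi_e}$ plus conditionally centered, orthogonal weight increments with conditional variance $1/\Lambda_{1:t}-1/\Lambda_{1:t-1}$; a squared-error decomposition around $m_{t-1}$ combined with \Cref{prop-variance-decomposition} for (i); the same sandwich of inequalities for (ii); and the bound $0\le 1/\Lambda_{1:t}-1/\Lambda_{1:t-1}\le\underline\Lambda^{-1}-1$ for (iii). You spell out details the paper's sketch leaves implicit, namely the enlarged filtration $\sigma(\mathcal F_T,C_{1:t})$ and the explicit computation of $\E[(\Delta W_t)^2\mid\mathcal F_T]$.
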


Our approximation posits that intermediate annotation updates $S_{t'}, t'<t$ are less informative than revealing the true conditioning state $S_t$ itself. 
$$ \E\!\left[\{\E[V_t\mid\mathcal F_k]-\E[V_t\mid\mathcal F_{k-1}]\}^{2}\right]
\le
\E\!\left[\{V_t(S_t)-\E[V_t\mid\mathcal F_{t-2}]\}^{2}\right], k < t-2.
$$

\paragraph{Practical estimation of nuisance functions}

\paragraph{Estimation scheme of $\mu_t$} Note that we overload notation so that $\mu_t$ is the corresponding inverse propensity term in either the non-Markovian or Markovian setting. In the non-Markovian setting, $\mu_t = \prod_{i=1}^{t}\frac{\pi_e(a_i\mid s_i')}{\pi_b(a_i\mid s_i')}$. In the Markovian setting, $\mu_t$ is the state-projected behavior density ratio which generally satisfies the recursive relation 
\[
    \mu_t(s_t,a_t) \underset{Asn. \ref{asn-seq-igno-stilde}}{=}
\E\left[\prod_{i=1}^{t}\frac{\pi_e(a_i\mid s_i')}{\pi_b(a_i\mid s_i')} \bigg\vert s_t, a_t\right] =\E\left[\mu_{t-1}\cdot \frac{\pi_e(a_t\mid s_t')}{\pi_b(a_t\mid s_t')} \bigg\vert s_t, a_t\right].
\]
If the behavior policy is not known, this could be resolved by replacing the behavior policy in the objective above by its estimate $\hat\pi_b$, or the estimate of the importance ratio $\rho=\pi_e/\pi_b$. 

\paragraph{Estimation scheme of $Q_t^{\pi_e}$ and $\tilde Q_t^{\pi_e}$.}
The estimation of the state-action value $Q_t$ and its proxy projection $\tilde Q_t$ is conducted jointly in a backward order. Given a convention that $V_{T+1}=\tilde V_{T+1}=Q_{T+1}=\tilde Q_{T+1}\equiv 0$, for $t=T,T-1,\dots,1$, we first learn $Q_t$ via debiased fitted Q-evaluation (FQE) using the following objective:
\[
    Q_t \in \argmin_{q_t}\E\left[\frac{C_{1:t-1}}{\Lambda_{1:t-1}}(\tilde Y_t^{q_t}-q_t)^2\right],
\]
where
\begin{align*}
    \tilde Y_t^{q_t}
    &=b_t(\mathcal{F}_{t-1})+\gamma\tilde V_{t+1}^{\pi_e}(\mathcal{F}_{t-1})
    +\frac{C_t}{\lambda_t}\left\{R_t+\gamma V_{t+1}^{\pi_e}-\bigl(b_t(\mathcal{F}_{t-1})+\gamma\tilde V_{t+1}^{\pi_e}(\mathcal{F}_{t-1})\bigr)\right\}.
\end{align*}
We then learn $\tilde Q_t^{\pi_e}(\mathcal{F}_{t-2})$ by regressing the estimated $Q_t^{\pi_e}$ on $\mathcal{F}_{t-2}$:
\[
    \tilde Q_t^{\pi_e}(\mathcal{F}_{t-2}) = \E[Q_t^{\pi_e}(S_t, A_t)\mid \mathcal{F}_{t-2}]
\]
The resulting estimate of $\tilde Q_t^{\pi_e}(\mathcal{F}_{t-2})$ is used to construct $\tilde V_t^{\pi_e}(\mathcal{F}_{t-2})$, which is subsequently used in the estimation of $Q_{t-1}^{\pi_e}$. In practice, however, it is hard to justify the regression $\E[Q_t^{\pi_e}(S_t, A_t) \mid  \mathcal{F}_{t-2}]$ due to the high-dimensionality of $\tilde S_t$. The practical way to bypass this challenge includes learning the state representation $\phi:\tilde{\mathcal{S}}\to \mathcal{S}$ that takes noisy proxy state as an input and gives the rich observation state as an output. This enables considering the regression $\E[Q_t^{\pi_e}(S_t, A_t)\mid \phi(\tilde{S}_t), A_t]$ (in the case of MDP) or $\E[Q_t^{\pi_e}(S_t, A_t)\mid \{\phi(S_i), A_i\}_{i\leq t}]$ (in the case of NMDP), though it is susceptible to misspecification bias of the representation.

Similarly, we can conduct similar debiasing for other nuisance functions to fully leverage the data. 

\subsection{Batch-adaptive experiment scheme}

In this section, we introduce the adaptive experiment scheme. Following the convergent split batch-adaptive experiment (CSBAE) framework of \citet{li2024csbae}, we partition the data along two dimensions: $M$ \emph{batches}, indexed by their sequential order of collection in the adaptive experiment, and $K$ \emph{folds} to preserve independence between a nuisance estimate and the data it is evaluated on. For simplicity, we focus on the two-batch setting ($M=2$) throughout this section.

\paragraph{Adaptive protocol}
The adaptive experiment uses the first batch as pilot data to obtain initial estimates of the
nuisance functions that will determine the annotation policy in subsequent batches. To ensure
reliable initial estimation, we annotate the entire trajectory in the first batch with a
prespecified probability $p$, that is,
\[
    C_{i;1:T}\sim\mathrm{Ber}(p),
\]
where $C_{i;1:T}=1$ indicates that the full trajectory of unit $i$ is annotated; \textcolor{revisionblue}{all experiments in \Cref{sec:experiments} take $p=1$ and assign units to the first batch by independent Bernoulli($\kappa_1$) draws with $\kappa_1=0.3B$, so that $30\%$ of the budget is spent uniformly on the first batch, the nuisances are fit on it, and the remaining units receive the mixture-corrected probabilities $[(\Lambda^*-\kappa_1)/(1-\kappa_1)]_\varepsilon^1$; there is no separate pilot sample} equivalently, the pilot batch uses the forward-monotone design $\hat\Lambda_{1,1:t}^{(k)}=p$ for every $t=1,\dots,T$, where $\Lambda_{b,1:t}^{(k)}$ denotes the cumulative annotation probability from time $1$ to $t$, at batch-$b$, fold-$k$. Starting from the second batch, annotation at each stage $t$ is instead carried out according to the annotation probability obtained by solving the constrained optimization problem described in \cref{sec:annotation_prob_derivation}. This probability is updated after each batch using only the data collected in preceding batches within the same fold, which preserves the independence between the data being annotated and the plug-in nuisance estimates constructed via cross-fitting after the experiment for final inference. The batch-$b$ probability is chosen so that the design pooled across all batches attains the empirical budget-optimal target. The probabilities $\{\hat{\bm\Lambda}_{b}\}_{b=1}^{M}$ are solved so that the weighted average $\sum_b(N_b/N)\hat\Lambda_{b,1:t}$ equals the solution of the optimization problem.

\paragraph{Cross-fitting with CSBAE (\citep{li2024csbae})}
After the adaptive experiment concludes, we pool the data collected across all batches and estimate the nuisance functions required for the OPE estimator using $K$-fold cross-fitting. For each fold, the nuisance functions are estimated using observations outside that fold and then evaluated on the held-out observations. This construction helps preserve the independence between nuisance estimation and score evaluation required by our asymptotic analysis introduced in the later section. The resulting feasible estimator is
\begin{equation}\label{eqn:pooled_adaptive}
    \hat\psi_{\mathrm{ad}}=\frac{1}{n}\sum_{k=1}^{K}\sum_{(b,i)\in\mathcal I_k}
    \tilde\Gamma\Big(O_{b,i};\hat{\bm\Lambda}^{(-k)},\hat{\bm\eta}^{(-k)}_{m}\Big)\quad \text{s.t.}\quad
    \tilde\Gamma(O; \bm\Lambda, \bm\eta_m)= m_0+\sum_{t=1}^{T}\frac{C_{1:t}}{\Lambda_{1:t}}\big( m_t- m_{t-1}\big),
\end{equation}
where $\mathcal I_k$ denotes the set of data labeled as (batch, unit) in fold $k$ and $O_{b,i}$ for $(b,i)\in\mathcal{I}_k$ denotes the realized outcome of unit $i$ at batch--fold pair $(b, k)$. Also, $\hat{\bm\Lambda}^{(-k)}$ denotes the set of annotation probabilities trained on data except fold $k$ across time $t=1,\dots, T$, i.e., $\hat{\bm{\Lambda}}^{(-k)}=\{\hat\Lambda_{1:t}^{(-k)}: t=1,\dots, T\}$, and $\hat{\bm\eta}_{m}^{(-k)}$ denotes the set of nuisance functions $\hat{\bm\eta}_{m,t}^{(-k)}$ associated with projections $\hat m_t^{(-k)}$, $t=1,\dots, T$, i.e, $\hat{\bm\eta}_{m}^{(-k)}=\{\hat{\bm\eta}_{m,t}^{(-k)}:t=1,\dots, T\}$. Note that in the final inference (post-experiment) stage, the annotation probability $\Lambda_{1:t}$ is estimated from the realizations, not evaluated from solving the optimization problem as in the experiment stage.

\subsection{Optimized allocation probabilities.} \label{sec:annotation_prob_derivation}
Next we discuss our optimization problem, which optimizes the part of the asymptotic variance (\Cref{eqn-variance-decomposition}) that depends on the annotation probabilities. The annotation design problem allocates the budget across stages so as to minimize it. Since $\{(m_t^{\pi_e}, \mathcal{F}_t), t=1,\dots, T\}$ is a martingale, the stagewise conditional variance is the second moment of the increment,
\[
  \Var\big(m_t^{\pi_e}(\mathcal{F}_t)\mid \mathcal{F}_{t-1}\big)
    = \E\big[ \big(m_t^{\pi_e}(\mathcal{F}_t)- m_{t-1}^{\pi_e}(\mathcal{F}_{t-1})\big)^2 \mid \mathcal{F}_{t-1}\big].
\]
In the reward-state annotation setting \cref{setting-intermediate}, for instance, the conditional variance simplifies as
\[
    \Var\Big[\mu_t^{\pi_e}(R_t+\gamma V_{t+1}^{\pi_e})-\gamma\mathbf{1}\{t<T\}\,\mu_{t+1}^{\pi_e} Q_{t+1}^{\pi_e} \,\Big|\, \mathcal{F}_{t-1}\Big] .
\]
We state the optimization problem in terms of a generic $$\sigma_t^2(\mathcal{F}_{t-1}) 
= 
  \Var\big(m_t^{\pi_e}(\mathcal{F}_t)\mid \mathcal{F}_{t-1}\big).
  $$ For different settings such as non-Markovian or Markovian decision processes, reward or joint reward-and-state annotation, or different approximate projections, different increments of conditional variance enter the above additively. 
  
  Next we discuss the annotation budget. Note that annotating stage $t$ requires $C_{1:t}=1$, which occurs with probability $\Lambda_{1:t}(\mathcal{F}_{t-1})$, so the expected number of gold labels is $\sum_{t=1}^{T}\E[\Lambda_{1:t}(\mathcal{F}_{t-1})]$. Therefore, the optimal annotation problem with variance minimization objective, subject to the budget $B$ and to forward monotonicity, can be written as
\begin{equation} \label{eqn-opt_prob}
\min_{\{\Lambda_{1:t}\}_{t=1}^{T}\in[0,1]^{T}}
\left\{\sum_{t=1}^{T}\,\E\left[\frac{\sigma_t^2(\mathcal{F}_{t-1})}{\Lambda_{1:t}(\mathcal{F}_{t-1})}\right] : \sum_{t=1}^{T}\E\big[\Lambda_{1:t}(\mathcal{F}_{t-1})\big]\le B;\ \Lambda_{1:t+1}(\mathcal{F}_{t}) \le \Lambda_{1:t}(\mathcal{F}_{t-1}),\ \forall t\in[T-1]\right\}, \tag{OPT}
\end{equation}
where $B$ is the annotation budget. The first constraint enforces the budget, and the second enforces forward monotonicity of the resulting annotation policy.
Since both $\sigma_t^2$ and $\Lambda_{1:t}$ are $\mathcal{F}_{t-1}$-measurable and $\sigma_t^2\ge0$, the optimization problem \eqref{eqn-opt_prob} is convex, thus we can leverage KKT condition to find the optimal solution. The optimal solution introduced in the following theorem has an intuitive structure: the annotation probability at each stage is proportional to the conditional \emph{standard deviation} of that stage annotation's contribution to the score, normalized so that it binds the budget constraint.

\paragraph{Optimal annotation probabilities: special case of $T=2$}
For simplicity, first let us again specialize to the two-stage case, $T=2$, whose optimization problem is
\begin{align}
    \min_{\lambda_1,\,\Lambda_{1:2}\in[0,1]}\quad
        & \E\left[\frac{\sigma_1^2(\mathcal{F}_0)}{\lambda_1(\mathcal{F}_0)}\right]
        +\E\left[\frac{\sigma_2^2(\mathcal{F}_1)}{\Lambda_{1:2}(\mathcal{F}_1)}\right]\notag\\
    \text{s.t.}\quad
        & \E\big[\lambda_1(\mathcal{F}_0)+\Lambda_{1:2}(\mathcal{F}_1)\big]\le B;\qquad
          \Lambda_{1:2}(\mathcal{F}_1)\le\lambda_1(\mathcal{F}_0).
    \tag{OPT ($T=2$)}\label{eqn-twostage}
\end{align}

\begin{theorem}[Optimal allocation probabilities, $T=2$.]\label{thm-twostage-optimal-allocation-T=2}
    Define the normalizing constant $$\beta =\frac{1}{B^2} \left(\E\left[\sqrt{\sigma_1^2(\mathcal{F}_0)} + \sqrt{\sigma_2^2(\mathcal{F}_1)} \right]\right)^2.$$
    Provided $\sigma_1(\mathcal{F}_0)\le\sqrt\beta$ almost surely (so that $\lambda_1^*\le1$), the solution to \eqref{eqn-twostage} is:
    \begin{align*}
        \lambda_1^*(\mathcal{F}_0) = \frac{  \sqrt{\sigma_1^2(\mathcal{F}_0)} }{\sqrt{\beta}},\qquad
        \Lambda_{1:2}^*(\mathcal{F}_1) =
        \frac{\sqrt{\sigma_2^2(\mathcal{F}_1)} }{\sqrt{\beta}} \qquad \mathrm{if} \qquad \Lambda_{1:2}^\ast(\mathcal{F}_1)\leq\lambda_1^\ast(\mathcal{F}_0).
    \end{align*}
    This closed form gives the interior solution under the feasibility condition above; when the
    forward monotonicity constraint binds instead, the same pointwise truncation for
    \(\Lambda_{1:2}^*\) holds conditional on \(\lambda_1^*\), while \(\lambda_1^*\) is determined
    by the boundary KKT condition in Appendix~\ref{pf:thm-twostage-optimal-allocation-T=2}, a
    scalar fixed-point equation that can be numerically solved easily (convex decreasing or
    concave increasing).
\end{theorem}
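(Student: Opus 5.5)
The plan is to treat \eqref{eqn-twostage} as a convex program over measurable functions and solve it by Lagrangian duality. Each pair $(\sigma_t^2,\Lambda_{1:t})$ is $\mathcal F_{t-1}$-measurable, and $x\mapsto \sigma^2/x$ is convex on $x>0$, so the objective is convex in $(\lambda_1,\Lambda_{1:2})$. The budget constraint is linear in these functions, and the monotonicity constraint $\Lambda_{1:2}\le\lambda_1$ is a pointwise linear constraint. Slater's condition holds: constant choices $\lambda_1=\Lambda_{1:2}=B/2-\epsilon$ are strictly feasible when $B<2$. So strong duality holds, and the KKT conditions characterize the optimum.

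First I drop the monotonicity constraint and the upper bounds $\le 1$. I introduce a scalar multiplier $\nu\ge0$ for the budget and minimize pointwise. Conditional on $\mathcal F_0$ (resp.\ $\mathcal F_1$), the Lagrangian separates into
\[
\frac{\sigma_1^2}{\lambda_1}+\nu\lambda_1
\qquad\text{and}\qquad
\frac{\sigma_2^2}{\Lambda_{1:2}}+\nu\Lambda_{1:2},
\]
because by the tower property $\E[\Lambda_{1:2}(\mathcal F_1)]$ and $\E[\sigma_2^2/\Lambda_{1:2}]$ are expectations of $\mathcal F_1$-measurable quantities. The pointwise minimizers are $\lambda_1=\sigma_1/\sqrt\nu$ and $\Lambda_{1:2}=\sigma_2/\sqrt\nu$. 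Since the objective is strictly decreasing in each probability, the budget binds whenever some $\sigma_t>0$. Imposing $\E[\lambda_1+\Lambda_{1:2}]=B$ gives $\sqrt\nu=\E[\sigma_1+\sigma_2]/B$, which is exactly $\sqrt\beta$. The hypothesis $\sigma_1\le\sqrt\beta$ ensures $\lambda_1^*\le1$. If the candidate also satisfies $\Lambda_{1:2}^*\le\lambda_1^*$ (and hence $\Lambda_{1:2}^*\le1$), it is feasible for the full problem. It then solves the full problem, because it solves a relaxation. This gives the interior closed form.

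For the boundary case, I add a pointwise multiplier $\xi(\mathcal F_1)\ge0$ for $\Lambda_{1:2}-\lambda_1\le0$, with complementary slackness. The stationarity conditions are:
\begin{align*}
-\sigma_2^2/\Lambda_{1:2}^2+\nu+\xi&=0,\\
-\sigma_1^2/\lambda_1^2+\nu-\E[\xi\mid\mathcal F_0]&=0.
\end{align*}
Conditional on $\lambda_1$, the second-stage problem is therefore the truncation $\Lambda_{1:2}^*=\min\{\sigma_2/\sqrt\nu,\lambda_1^*\}$, where $\nu$ is the budget multiplier. Substituting this into the first-stage condition gives a scalar equation in $\lambda_1$ for each $\mathcal F_0$:
\[
\frac{\sigma_1^2}{\lambda_1^2}=\nu-\E\!\left[\Big(\frac{\sigma_2^2}{\lambda_1^2}-\nu\Big)_+\,\Big|\,\mathcal F_0\right].
\]
Equivalently, $\sigma_1^2+\E[(\sigma_2^2-\nu\lambda_1^2)_+\mid\mathcal F_0]=\nu\lambda_1^2$. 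The left side is nonincreasing in $\lambda_1$ and the right side is increasing, so there is a unique root. The monotone/convex structure referred to in the statement follows from this. The budget equation then pins down $\nu$, which is again monotone.

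The main obstacle is rigor in the infinite-dimensional KKT step: justifying strong duality and the interchange to pointwise minimization over measurable functions with a conditional multiplier $\xi$. My plan here is to avoid the general theory. Instead I verify optimality directly by a weak-duality inequality. For any feasible $(\lambda_1,\Lambda_{1:2})$, the pointwise bound $\sigma^2/x\ge 2\sigma\sqrt\nu-\nu x$ plus the multiplier terms yields objective $\ge L(\nu,\xi)$, and equality holds at the candidate. This requires only integrability of $\sigma_t$.
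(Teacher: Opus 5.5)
Your proposal is correct and follows essentially the paper's argument. The shared route is a budget multiplier with pointwise first-order conditions giving $\lambda_1^*=\sigma_1/\sqrt\beta$ and $\Lambda_{1:2}^*=\sigma_2/\sqrt\beta$; in the boundary case, a pointwise multiplier for $\Lambda_{1:2}\le\lambda_1$ whose complementarity gives the truncation, and substitution gives the scalar equation $Z=\sigma_1^2+\mathbb{E}[(\sigma_2^2-Z)_+\mid\mathcal F_0]$ with $Z=\nu\lambda_1^2$. Your relaxation argument for the interior case and the explicit weak-duality check are slightly more careful than the paper's appeal to KKT sufficiency, and you are right that in the boundary case the budget multiplier must be re-solved from the budget equation rather than reusing the interior $\beta$.
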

\begin{proof}
    See Appendix~\ref{pf:thm-twostage-optimal-allocation-T=2}.
\end{proof}

\subsection{Feasible batch-adaptive annotation.}
\label{subsec:batch-adaptive-annotation}

\providecommand{\mixprob}[3]{\left[\frac{#1-\kappa_1 #2}{1-\kappa_1}\right]_{\varepsilon}^{#3}}

\Cref{alg:batch-adaptive-seq-annotation} gives the feasible two-batch implementation of
\Cref{thm-twostage-optimal-allocation-T=2}. The pilot batch estimates the nuisances; the adaptive batch applies the
plug-in allocation from \Cref{eqn-twostage} with a mixture correction so that the pooled design
targets the optimal prefix-label distribution.

\begin{algorithm}[t]
\caption{Two-batch adaptive prefix annotation ($T=2$)}
\label{alg:batch-adaptive-seq-annotation}
\begin{algorithmic}[1]
\Require $\mathcal D_0,\pi_e,\widehat\pi_b,B,\kappa_1,p,K,\varepsilon$
\Ensure $\widehat\psi^{\pi_e}$

\State Assign units to $K$ folds $\{D^{(k)}\}_{k=1}^{K}$; split into pilot and adaptive batches
$(\mathcal I_{\rm pil},\mathcal I_{\rm ad})$ with $|\mathcal I_{\rm pil}|/n=\kappa_1$, stratifying
the split within each fold.

\State \textbf{Pilot batch.} For $i\in\mathcal I_{\rm pil}$, draw a single
$C_{i;1:T}\sim{\rm Bern}(p)$ and annotate the full trajectory if $C_{i;1:T}=1$; equivalently,
the pilot uses the forward-monotone design $\Lambda^{0}_{1:t}=p$ for all $t$, i.e.\
$(\lambda_{\rm{pil},1},\Lambda_{\rm{pil},1:2})=(p,p)$.

\State Let $[x]_\ell^u=\min\{u,\max\{\ell,x\}\}$ and let $\mixprob{x}{x_{\rm{pil}}}{u}$ denote the
pilot/adaptive mixture correction, i.e.\ the projected solution in $x_{\rm ad}$ of
$\kappa_1 x_{\rm{pil}}+(1-\kappa_1)x_{\rm ad}=x$.

\For{$k=1,\dots,K$}
  \State \textbf{Design-time estimation.} Estimate the closed-form
  allocation of \Cref{thm-twostage-optimal-allocation-T=2} from the fold-$k$ pilot units
  $D^{(k)}\cap\mathcal I_{\rm pil}$, and form the clipped plug-in optimum
  $(\widehat\lambda_{\rm{pil},1}^{\ast(k)},\widehat\Lambda_{\rm{pil},1:2}^{\ast(k)})$ at budget $B$.

  \For{$i\in D^{(k)}\cap\mathcal I_{\rm ad}$}
    \State Set $\lambda_{{\rm ad},i,1}^{(k)}
    =\mixprob{\widehat\lambda_{\rm{pil},1}^{\ast(k)}(\mathcal F_{i,0})}{p}{1}$
    and draw $C_{i,1}\sim{\rm Bern}(\lambda_{{\rm ad},i,1}^{(k)})$.

    \State If $C_{i,1}=1$, reveal $(R_{i,1},S_{i,2})$, set
    $\Lambda_{\rm{ad},i,1:2}^{(k)}
    =\mixprob{\widehat\Lambda_{\rm{pil},1:2}^{\ast(k)}(\mathcal F_{i,1})}{p}{\lambda_{{\rm ad},i,1}^{(k)}}$, and draw $C_{i,2}\sim{\rm Bern}(\Lambda_{\rm{ad},i,1:2}^{(k)}/\lambda_{{\rm ad},i,1}^{(k)})$; otherwise set $C_{i,2}=0$.
  \EndFor
\EndFor

\State \label{step:record} \textbf{Record the realized design.} For every unit, store the prefix
probabilities $\bm\Lambda_i=(\Lambda_{i,1},\Lambda_{i,1:2})$, equal to $(p,p)$ on
$\mathcal I_{\rm pil}$ and to $(\lambda_{{\rm ad},i,1}^{(k)},\Lambda_{\rm{ad},i,1:2}^{(k)})$ on
$\mathcal I_{\rm ad}$.

\State \textbf{Inference.} Pool both batches and, under the same fold assignment, refit the
cross-fitted nuisances $\hat{\bm\Lambda}^{(-k)}=(\hat\Lambda_{1:t}^{(-k)})$ and
$\widehat{\bm\eta}_m^{(-k)}=(\widehat\mu_t^{(-k)},\widehat Q_t^{(-k)},\widehat b_t^{(-k)},\widehat{\widetilde Q}_t^{(-k)})$
for each $k$.

\State \Return $\displaystyle\widehat\psi^{\pi_e}=n^{-1}\sum_{k=1}^{K}\sum_{i\in D^{(k)}}\tilde\Gamma\big(O_{i},\hat{\bm\Lambda}^{(-k)},\widehat{\bm\eta}^{(-k)}\big)$
\end{algorithmic}
\end{algorithm}

\section{Analysis}\label{sec:analysis}

In this section, we establish the statistical properties of our proposed estimator. We show that both non-adaptive version and the batch-adaptive version of estimator are asymptotically normal, given the proper convergence rates of the nuisance estimators.

\subsection{Asymptotic normality under the non-adaptive experiment}

We begin with the non-adaptive setting, in which the annotation probability remains fixed across batches, i.e., $\Lambda_{b,1:t}=\Lambda_{1:t}$ for all $b$, because it serves as an oracle benchmark that the batch-adaptive estimator seeks to achieve asymptotically. In this setting, the feasible estimator takes the form of
\begin{equation}\label{eqn:pooled_non_adaptive}
    \hat\psi_{\mathrm{nad}}=\frac{1}{n}\sum_{k=1}^{K}\sum_{i\in\mathcal{D}^{(k)}}
    \tilde\Gamma\Big(O_{i};\hat{\bm\Lambda}^{(-k)},\hat{\bm\eta}^{(-k)}_{m}\Big)\quad \text{s.t.}\quad
    \tilde\Gamma(O, \bm\Lambda, \bm\eta_m)= m_0+\sum_{t=1}^{T}\frac{C_{1:t}}{\Lambda_{1:t}}\big( m_t- m_{t-1}\big),
\end{equation}
where $\mathcal{D}^{(k)}$ denotes the data in fold $k$. Asymptotic normality holds for $\hat\psi_{\mathrm{nad}}$ under the following assumptions.

\begin{assumption}[Bounded estimators]\label{asn-bdd-estimator}
The estimators are globally bounded: $c_\lambda\le\hat\lambda_t^{(-k)}\le1$, $0\le\hat\mu_t^{(-k)}\le C_{\hat\mu}$, and $\limsup_{n\to\infty} \|\hat f_t^{(-k)}\|_\infty < \infty$ for $\hat f \in \{\hat Q, \hat b, \hat{\tilde{Q}}\}$, $k=1,\dots, K$, $t=1,\dots, T$. 
\end{assumption}
\begin{remark}
    The \cref{asn-bdd-estimator} ensures that every projection estimates $\hat m_t$ is uniformly bounded.
\end{remark}

\begin{assumption}[Rate assumption]\label{asn-rate}
With the conventions $\Delta Q_{T+1}^{(-k)}=\Delta\tilde Q_{T+1}^{(-k)}=\Delta\mu_{T+1}^{(-k)}:=0$, the following product error rates hold for all $1\leq k\leq K$ and $1\leq t\leq T$:
\begin{align*}
    &\textup{(R1)}\qquad \|\Delta\mu_t^{(-k)}\|_2\Big(\|\Delta Q_t^{(-k)}\|_2+\|\Delta Q_{t+1}^{(-k)}\|_2\Big)=o_p(n^{-1/2}),\\
    &\textup{(R2)}\qquad \sum_{j\leq t}\|\Delta\lambda_j^{(-k)}\|_2\Big(\|\Delta b_t^{(-k)}\|_2+\|\Delta Q_{t+1}^{(-k)}\|_2 +\|\Delta\mu_t^{(-k)}\|_2+\|\Delta\mu_{t+1}^{(-k)}\|_2\Big)=o_p(n^{-1/2}).
\end{align*}
We refer to (R1) as the double reinforcement learning (DRL) product error and to (R2) as the annotation product error.
\end{assumption}
\begin{theorem}[Asymptotic normality under the non-adaptive setting]\label{thm:clt_non-adaptive}
Suppose \cref{asn-anno-ignorability,asn-seq-igno-stilde,asn-seq_overlap,asn-bdd-reward,asn-bdd-estimator,asn-rate} hold, together with $\|\hat\lambda_t^{(-k)}-\lambda_t\|_2=o_p(1)$, $\|\hat\mu_t^{(-k)}-\mu_t\|_2=o_p(1)$, $\|\hat Q_t^{(-k)}-Q_t\|_2=o_p(1)$, $\|\hat b_t^{(-k)}-b_t\|_2=o_p(1)$, and $\|\hat{\tilde Q}_t^{(-k)}-\tilde Q_t\|_2=o_p(1)$ for all $1\le k\le K$ and $1\le t\le T$. Then
\[
    \sqrt{n}\bigl(\hat\psi_{\mathrm{nad}}-\Phi^{\pi_e}\bigr)
    \Longrightarrow
    \mathcal N(0,\sigma^2), \qquad \text{where} \qquad \sigma^2 = \Var(m_0) + \sum_{t=1}^{T}
\E\left[ \frac{\Var(m_t\mid\mathcal F_{t-1})}{\Lambda_{1:t}} \right],
\]
and $\sigma^2 \in(0,\infty)$.
\end{theorem}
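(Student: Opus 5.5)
The argument is the standard cross-fitted double machine learning decomposition, adapted to the martingale annotation score. Write $\psi(O;\bm\Lambda,\bm\eta_m)=\tilde\Gamma(O;\bm\Lambda,\bm\eta_m)$ for the oracle score, with true projections $m_t=\E[\Gamma_T^{\pi_e}\mid\mathcal F_t]$ and true prefix probabilities. For each fold $k$, with $n_k=|\mathcal D^{(k)}|$, we decompose
\[
\frac{1}{n_k}\sum_{i\in\mathcal D^{(k)}}\tilde\Gamma(O_i;\hat{\bm\Lambda}^{(-k)},\hat{\bm\eta}_m^{(-k)})-\Phi^{\pi_e}
=(\mathbb P_{n,k}-P)\psi+(\mathbb P_{n,k}-P)(\hat\psi^{(-k)}-\psi)+P(\hat\psi^{(-k)}-\psi),
\]
where $P$ averages over a fresh observation conditional on the out-of-fold training data. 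The first term, summed over folds, is $n^{-1}\sum_i(\psi(O_i)-\Phi^{\pi_e})$. Its mean is zero by \Cref{prop:seq-annotation-unbiased}. Its variance is exactly $\sigma^2$ by \Cref{prop-variance-decomposition}. The classical Lindeberg--L\'evy CLT then applies: $\psi$ is bounded because rewards, $\mu_t$ and $Q_t$ are bounded (\cref{asn-bdd-reward,asn-seq_overlap}) and $\Lambda_{1:t}\ge c_\lambda^T$. We get $\sigma^2>0$ from $\Var(m_0)+\sum_t\E[\Var(m_t\mid\mathcal F_{t-1})]\ge\Var(m_T)>0$ together with $\Lambda_{1:t}\le 1$, assuming a nondegenerate score. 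Finiteness follows from boundedness.

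The second, empirical-process term is handled by cross-fitting. Conditional on the training folds, the summands are i.i.d. with mean zero, so Chebyshev gives the term as $O_p(n^{-1/2}\|\hat\psi^{(-k)}-\psi\|_2)$. By \cref{asn-bdd-estimator}, $\hat m_t$ is uniformly bounded and $1/\hat\Lambda_{1:t}\le c_\lambda^{-T}$. A telescoping Lipschitz argument then bounds $\|\hat\psi^{(-k)}-\psi\|_2$ by a constant times $\sum_t\bigl(\|\hat\lambda_t-\lambda_t\|_2+\|\hat\mu_t-\mu_t\|_2+\|\hat Q_t-Q_t\|_2+\|\hat b_t-b_t\|_2+\|\hat{\tilde Q}_t-\tilde Q_t\|_2\bigr)$. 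This is $o_p(1)$ by the consistency hypotheses, so the second term is $o_p(n^{-1/2})$.

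The main obstacle is the bias term $P(\hat\psi^{(-k)}-\psi)$, which must be shown to be $o_p(n^{-1/2})$ using only the product rates of \cref{asn-rate}. The plan is to condition on $\mathcal F_T$ and use \cref{asn-anno-ignorability}, which gives $\E[C_{1:t}\mid\mathcal F_T]=\Lambda_{1:t}$. Writing $r_t=\Lambda_{1:t}/\hat\Lambda_{1:t}$ and summing by parts, we obtain
\[
\E[\hat\psi\mid\mathcal F_T]=\hat m_T+\sum_{t=1}^T(r_t-1)(\hat m_t-\hat m_{t-1}),
\]
using the convention $r_0=1$. This splits the bias into two pieces. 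The first, $\E[\hat m_T-m_T]$, equals by \Cref{prop:approximate-projection-bias} the DRL product $\sum_t\gamma^{t-1}\E[(\hat\mu_t-\mu_t)\{\gamma(\hat V_{t+1}-V_{t+1})-(\hat Q_t-Q_t)\}]$. Cauchy--Schwarz bounds it by (R1), noting $\|\hat V_{t+1}-V_{t+1}\|_2\lesssim\|\hat Q_{t+1}-Q_{t+1}\|_2$ under overlap. For the second piece, note that $r_t-1$ is $\mathcal F_{t-1}$-measurable. Taking $\E[\cdot\mid\mathcal F_{t-1}]$ of the increment, the true increment $m_t-m_{t-1}$ has conditional mean zero. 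Hence $\E[(r_t-1)(\hat m_t-\hat m_{t-1})]=\E[(r_t-1)\,\E[(\hat m_t-m_t)-(\hat m_{t-1}-m_{t-1})\mid\mathcal F_{t-1}]]$. Here $|r_t-1|\lesssim\sum_{j\le t}|\hat\lambda_j-\lambda_j|$, and the local-approximation increment is built from $\hat\mu_t,\hat b_t,\hat Q_{t+1},\hat\mu_{t+1}$. Cauchy--Schwarz therefore bounds this piece by the annotation product (R2). The delicate part is verifying that every component of $\hat m_t-m_t$ appearing in the increment is covered by the norms in (R2). The $\hat{\tilde Q}$ and $\hat{\tilde V}$ errors enter at both $t$ and $t-1$ and should cancel within the conditional expectation, and I would check this cancellation carefully term by term for the local-approximation estimator.

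Summing the three terms over the $K$ folds, weighting fold $k$ by $n_k/n$, gives $\sqrt n(\hat\psi_{\rm nad}-\Phi^{\pi_e})=n^{-1/2}\sum_i(\psi(O_i)-\Phi^{\pi_e})+o_p(1)\Rightarrow\mathcal N(0,\sigma^2)$ by Slutsky's theorem.
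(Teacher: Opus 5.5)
Your proposal is correct and follows essentially the paper's own route. It uses the same per-fold split into three terms: an oracle term (i.i.d.\ CLT with variance from \Cref{prop-variance-decomposition}), an empirical-process term (conditional Chebyshev plus $L_2$-consistency), and a drift term. In the drift term, your weight $r_t-1=\Lambda_{1:t}/\hat\Lambda_{1:t}-1$ is exactly what the paper's $-C_{1:t}\Delta\Lambda_{1:t}/(\hat\Lambda_{1:t}\Lambda_{1:t})$ becomes after collapsing $C_{1:t}$ in \Cref{lem:mean_diff_decomposition}; the telescoped part gives the DRL product controlled by (R1), and the martingale property removes the first-order annotation error so that the remainder falls under (R2).

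For the step you leave open, the paper does not rely on the $\hat{\tilde V}$ and $\hat{\tilde Q}$ errors cancelling. It bounds each piece of $\Delta(m_t-m_{t-1})$ by Cauchy--Schwarz and uses \Cref{lem-v-q-error} to dominate $\|\Delta\tilde V_{t+1}\|_2$ and $\|\Delta\tilde Q_{t+1}\|_2$ by a multiple of $\|\Delta Q_{t+1}\|_2$, which is how (R2) absorbs them.
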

\begin{proof}
    See Appendix~\ref{pf-clt_non-adaptive}.
\end{proof}
\subsection{Asymptotic normality under the batch-adaptive experiment}

Next, we present our main result, the asymptotic normality of the batch-adaptive estimator. In the batch-adaptive setting, the annotation probability used on batch-$b$, fold-$k$ experiment is derived from the data in previous batches at the same fold, say $\mathcal{D}_{1:b-1}^{(k)}$. Therefore, we define the cumulative annotation probability for batch-$b$, fold-$k$ experiment as $\hat\Lambda_{b,1:t}^{(k)}$. Accordingly, we modify the boundedness and rate assumptions as follows.

\begin{assumption}[Bounded estimators under the batch-adaptive setting]
\label{asn-bdd_estimators_adaptive}
    The estimators are globally bounded: $c_\lambda\le\hat\lambda_{b,t}^{(k)}\le1$, $c_{\hat\Lambda}\le\hat\Lambda_{1:t}^{(-k)}\le1$, $0\le\hat\mu_{t}^{(-k)}\le C_{\hat\mu}$, and $\limsup_{n\to\infty}\|\hat f_{t}^{(-k)}\|_\infty<\infty$ for $\hat f\in\{\hat Q,\hat b,\hat{\tilde Q}\}$, $b=1,\dots,M$, $k=1,\dots,K$, $t=1,\dots,T$.
\end{assumption}

\begin{assumption}[Rate assumption under the batch-adaptive setting]
\label{asn-rate_adaptive}
    With the conventions $\Delta Q_{T+1}^{(-k)}=\Delta\tilde Q_{T+1}^{(-k)}=\Delta\mu_{T+1}^{(-k)}:=0$, the following product error rates hold for all $1\leq b\leq M$, $1\leq k\leq K$, and $1\le t\le T$:
\begin{align*}
    \textup{(R1)}\qquad &\|\Delta\mu_t^{(-k)}\|_2\Big(\|\Delta Q_t^{(-k)}\|_2+\|\Delta Q_{t+1}^{(-k)}\|_2\Big)=o_p(n^{-1/2}),\\
    \textup{(R2a)}\qquad &\sum_{j\leq t}\|\Delta\hat\lambda_{b,j}^{(k)}\|_2\Big(\|\Delta b_t^{(-k)}\|_2+\|\Delta Q_{t+1}^{(-k)}\|_2+\|\Delta\mu_t^{(-k)}\|_2+\|\Delta\mu_{t+1}^{(-k)}\|_2\Big)=o_p(n^{-1/2}),\\
    \textup{(R2b)}\qquad &\sum_{j\leq t}\|\Delta\hat\lambda_{j}^{(-k)}\|_2\Big(\|\Delta b_t^{(-k)}\|_2+\|\Delta Q_{t+1}^{(-k)}\|_2+\|\Delta\mu_t^{(-k)}\|_2+\|\Delta\mu_{t+1}^{(-k)}\|_2\Big)=o_p(n^{-1/2}).
\end{align*}
We refer to (R1) as the double reinforcement learning (DRL) product error and to (R2a)--(R2b) as the annotation product errors.
\end{assumption}
\begin{remark}
We now have two annotation product error terms. (R2a) arises from the annotation probability used during the experiment for each batch and fold. (R2b) arises from the annotation probability estimated over the pooled data using cross-fitting after the experiment. 
\end{remark}
\begin{theorem}[Asymptotic normality under the batch-adaptive setting] \label{thm:clt_adaptive}
    Suppose \cref{asn-anno-ignorability,asn-seq-igno-stilde,asn-seq_overlap,asn-bdd-reward,asn-bdd_estimators_adaptive,asn-rate_adaptive} hold, together with $\|\hat\lambda_{b,t}^{(k)}-\lambda_{b,t}^{\ast}\|_2=o_p(1)$, $\|\hat\lambda_{t}^{(-k)}-\lambda_t^\ast\|_2=o_p(1)$, $\|\hat\mu_t^{(-k)}-\mu_t\|_2=o_p(1)$, $\|\hat Q_t^{(-k)}-Q_t\|_2=o_p(1)$, $\|\hat b_t^{(-k)}-b_t\|_2=o_p(1)$, and $\|\hat{\tilde Q}_t^{(-k)}-\tilde Q_t\|_2=o_p(1)$ for all $1\leq b\leq M$, $1\le k\le K$ and $1\le t\le T$. Then,
    \[
        \sqrt{n}\bigl(\hat\psi_{\mathrm{ad}}-\Phi^{\pi_e}\bigr)
        \Longrightarrow
        \mathcal N(0,\sigma_\ast^2), \qquad \text{where} \qquad \sigma^2 = \Var(m_0) + \sum_{t=1}^{T}
    \E\left[ \frac{\Var(m_t\mid\mathcal F_{t-1})}{\Lambda_{1:t}^\ast} \right] \in(0,\infty).
    \]
\end{theorem}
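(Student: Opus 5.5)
The plan is to reduce the batch-adaptive estimator to the non-adaptive argument of \Cref{thm:clt_non-adaptive}, but with a martingale central limit theorem across batches in place of the i.i.d. CLT. First, for each fold $k$, write $\hat\psi_{\mathrm{ad}}-\Phi^{\pi_e}$ as a sum of three pieces:
\begin{itemize}
\item[(a)] the oracle term $n^{-1}\sum_{(b,i)}\{\tilde\Gamma(O_{b,i};\bm\Lambda_b,\bm\eta_m)-\Phi^{\pi_e}\}$, evaluated at the \emph{realized} batch design $\bm\Lambda_b=\hat{\bm\Lambda}^{(k)}_b$ used to draw $C$, and at the true nuisances;
\item[(b)] an empirical-process term, namely the centered difference between the score at $(\hat{\bm\Lambda}^{(-k)},\hat{\bm\eta}_m^{(-k)})$ and at the oracle;
\item[(c)] the conditional bias of that difference.
\end{itemize}

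For (b), I will use cross-fitting: conditional on the out-of-fold data, the held-out scores in a batch are independent. Boundedness (\Cref{asn-bdd_estimators_adaptive}, so $1/\hat\Lambda\le c_{\hat\Lambda}^{-1}$) and $L_2$-consistency of every nuisance then make the conditional variance $o_p(1)$. A Chebyshev bound gives (b)$=o_p(n^{-1/2})$.

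For (c), I will redo the computation behind \Cref{prop:approximate-projection-bias}, but with two sources of annotation probabilities: the true design $\Lambda_b$ that generated $C$, and the post-hoc estimate $\hat\Lambda^{(-k)}$. Taking conditional expectations over $C$ via \Cref{asn-anno-ignorability}, the increments become $(\Lambda_{b,1:t}/\hat\Lambda_{1:t}-1)(\hat m_t-\hat m_{t-1})$ plus the DRL remainder $\E[\hat m_T-m_T]$. Expanding $\Lambda/\hat\Lambda-1$ as a sum of $\Delta\lambda_j$ terms, and noting that the increment's mean under $\mathcal F_{t-1}$ is driven by $\Delta b_t,\Delta Q_{t+1},\Delta\mu_t,\Delta\mu_{t+1}$, Cauchy--Schwarz gives products controlled by (R2a)--(R2b). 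The DRL remainder is controlled by (R1). Hence (c)$=o_p(n^{-1/2})$.

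The main work is (a). Order the observations by batch and within batch, and let $\mathcal G_{b-1}$ be the sigma-field of previous batches. Conditional on $\mathcal G_{b-1}$, the design $\hat{\bm\Lambda}_b^{(k)}$ is fixed and the batch-$b$ units are i.i.d. By \Cref{prop:seq-annotation-unbiased}, $\tilde\Gamma-\Phi^{\pi_e}$ has conditional mean zero, so the terms form a martingale difference array. I will apply a martingale CLT, for example Hall--Heyde, and check two conditions:
\begin{itemize}
\item[(i)] A conditional Lindeberg condition. This is immediate from boundedness of $R_t$, $\mu_t$, the $m_t$, and $1/\hat\Lambda$.
\item[(ii)] Convergence of the summed conditional variance. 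By \Cref{prop-variance-decomposition} applied conditionally, batch $b$ contributes $(N_b/n)[\Var(m_0)+\sum_t\E(\Var(m_t\mid\mathcal F_{t-1})/\hat\Lambda_{b,1:t}^{(k)})]$.
\end{itemize}

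Step (ii) is the obstacle. Convergence $\|\hat\lambda_{b,t}^{(k)}-\lambda_{b,t}^\ast\|_2=o_p(1)$, with $1/\hat\Lambda$ bounded, gives convergence of each batch term by dominated convergence. What remains is to check that the $N_b/n$-weighted combination of the batch limits equals the stated $\sigma_\ast^2$ with $\Lambda^\ast_{1:t}$.

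My plan is to use the mixture-correction construction: $\sum_b (N_b/n)\Lambda^\ast_{b,1:t}=\Lambda^\ast_{1:t}$. I will also use the observation that the post-hoc cross-fitted $\hat\Lambda^{(-k)}$ estimates this pooled propensity, so that reweighting by the pooled $\Lambda^\ast_{1:t}$, via the term already handled in (c), yields the variance $\E[\Var(m_t\mid\mathcal F_{t-1})/\Lambda^\ast_{1:t}]$. The score is reweighted by the pooled $\hat\Lambda^{(-k)}$ rather than the batch-specific ones. So I will define the oracle term in (a) with $\Lambda^\ast_{1:t}$ in the denominator. Its conditional second moment in batch $b$ is then $\E[\Lambda^\ast_{b,1:t}\Var(m_t\mid\mathcal F_{t-1})/(\Lambda^{\ast}_{1:t})^2]$. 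Summing over $b$ with weights $N_b/n$ collapses, by the mixture identity, to $\E[\Var(m_t\mid\mathcal F_{t-1})/\Lambda^\ast_{1:t}]$.

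Positivity and finiteness of the limit $\sigma_\ast^2$ follow from overlap and bounded rewards. Slutsky's lemma then combines (a)--(c).
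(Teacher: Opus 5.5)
\paragraph{Gap in step (c).} Your step (c) does not go through as written. The factor $\Lambda_{b,1:t}/\hat\Lambda^{(-k)}_{1:t}-1$ is not an estimation error. The realized batch design $\hat\Lambda^{(k)}_{b,1:t}$ converges to the batch-specific $\Lambda^\ast_{b,1:t}$. The post-hoc cross-fitted $\hat\Lambda^{(-k)}_{1:t}$ instead converges to the pooled $\Lambda^\ast_{1:t}=\sum_b\kappa_b\Lambda^\ast_{b,1:t}$. For the pilot batch the factor therefore tends to $p/\Lambda^\ast_{1:t}-1\neq 0$.

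This factor multiplies $\hat g_t:=\E[\hat m_t-\hat m_{t-1}\mid\mathcal F_{t-1}]$, which is first order in the nuisance errors. With reward-only annotation, for instance, $\hat g_t=\gamma^{t-1}\hat\mu_t(b_t-\hat b_t)$. Each batch's bias is therefore of order $\|\Delta b_t\|_2+\|\Delta Q_{t+1}\|_2$ rather than a product, and (R2a)--(R2b) do not make $\sqrt n$ times it vanish. Expanding into $\Delta\lambda_j$ terms is only possible after a cancellation you have not performed.

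\paragraph{The missing cancellation.} You need to aggregate across batches within a fold before applying Cauchy--Schwarz. The function $\hat g_t$ depends only on $\hat{\bm\eta}_m^{(-k)}$. Once $C$ is integrated out, it is averaged over the same population law of a fresh trajectory in every batch. The fold-$k$ drift therefore equals
\[
\sqrt n\,\frac{n_k}{n}\Bigl\{\E[\hat m_T]-\Phi^{\pi_e}+\sum_{t=1}^T\E\Bigl[\Bigl(\sum_{b=1}^M\frac{n_{b,k}}{n_k}\,\frac{\hat\Lambda^{(k)}_{b,1:t}}{\hat\Lambda^{(-k)}_{1:t}}-1\Bigr)\hat g_t\Bigr]\Bigr\}.
\]
The mixture identity turns the inner factor into $\{\sum_b\kappa_b(\hat\Lambda^{(k)}_{b,1:t}-\Lambda^\ast_{b,1:t})+(\Lambda^\ast_{1:t}-\hat\Lambda^{(-k)}_{1:t})\}/\hat\Lambda^{(-k)}_{1:t}$. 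This holds up to the split error $n_{b,k}/n_k-\kappa_b$, which vanishes under the stratified split and is otherwise $O_p(n^{-1/2})$ times an $o_p(1)$ factor.

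Only at this point do you have $\Delta\lambda$ errors times $\hat g_t$, which (R2a)--(R2b) control. So the identity $\sum_b\kappa_b\Lambda^\ast_{b,1:t}=\Lambda^\ast_{1:t}$, which you invoke only for the variance, is also what removes the bias. The paper's coupling argument meets the same term: $\E[\tilde C_{1:t}\Delta(m_t-m_{t-1})/\Lambda^\ast_{1:t}]$ does not telescope within a batch when $\tilde C$ is drawn at $\Lambda^\ast_{b,1:t}$. The paper needs the same cross-batch cancellation but does not spell it out.

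\paragraph{Comparison with the paper's route, and minor points.} With this repair, your route is a legitimate alternative to the paper's.
\begin{itemize}
\item The paper couples $C$ to counterfactual draws $\tilde C$ from the nonrandom $\Lambda^\ast_{b,1:t}$. Its oracle term is then a sum of independent terms and an ordinary CLT applies. The price is a coupling error $\E|\hat\Lambda^{(k)}_{b,1:t}-\Lambda^\ast_{b,1:t}|$ in the empirical-process term.
\item You keep the realized $C$ and use a martingale CLT. In (b) both scores then share $C$, so only consistency of $\hat\Lambda^{(-k)}$ is needed there. You should condition on $\mathcal D^{(k)}_{1:b-1}$ as well as $\mathcal D^{(-k)}$, since the batch-$b$ draws depend on earlier same-fold batches.
\item In your route, consistency of $\hat\lambda^{(k)}_{b,t}$ enters through the conditional-variance condition instead. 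The batch-$b$ term there is $\E[\hat\Lambda^{(k)}_{b,1:t}\Var(m_t\mid\mathcal F_{t-1})/(\Lambda^\ast_{1:t})^2]$; it only converges to the expression you wrote.
\item Overlap and bounded rewards give $\sigma_\ast^2<\infty$ but not $\sigma_\ast^2>0$. Since $\Lambda^\ast_{1:t}\le 1$ implies $\sigma_\ast^2\ge\Var(\Gamma_T^{\pi_e})$, a nondegeneracy condition such as $\Var(\Gamma_T^{\pi_e})>0$ suffices.
\end{itemize}
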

\begin{proof}
    See Appendix~\ref{pf-clt_adaptive}.
\end{proof}

Finally, the asymptotic inference results also extend to our family of simplified estimators, including for the more complex MDP setting with reward and state annotation and the corrected estimator of \Cref{eqn-mdp-estimator}. This comes at a small cost, as we then suffer a small cost in suboptimal variance. 

\begin{proposition}[Batch-adaptive asymptotic normality with realized annotation probabilities]
\label{prop:clt_adaptive_endpoint}
Under \cref{setting-mdp,asn-seq-igno-stilde,asn-seq_overlap,asn-anno-ignorability,asn-bdd-reward}, use the independent-fold batch-adaptive experiment and cross-fitting scheme of \Cref{thm:clt_adaptive}. In \cref{eqn:pooled_adaptive}, use the \textit{realized} annotation probabilities $\hat{\bm\Lambda}_b^{(k)}$ and set $\hat m_T^{(-k)}=\hat\Gamma_T^{\pi_e,(-k)}$ as in \Cref{prop:approximate-projection-bias}.

Assume condition~\textup{(R1)} of \Cref{asn-rate_adaptive}, the bounds in \Cref{asn-bdd_estimators_adaptive} for the fitted state-action functions and recorded probabilities, and uniformly bounded adapted predictions. For every batch and fold, require
\[
\begin{aligned}
\|\hat\mu_t^{(-k)}-\mu_t\|_2+\|\hat Q_t^{(-k)}-Q_t\|_2=o_p(1), \qquad 
\|\hat\lambda_{b,t}^{(k)}-\lambda_{b,t}^\ast\|_2=o_p(1), \qquad 
\|\hat m_t^{(-k)}-m_t^\dagger\|_2=o_p(1)
\end{aligned}
\]
with deterministic limits common across folds. The intermediate limits may differ from the exact projections; $m_T^\dagger=\Gamma_T^{\pi_e}$.
Then
\begin{equation}
\begin{gathered}
\sqrt n(\hat\psi_{\mathrm{ad}}-\Phi^{\pi_e})\Longrightarrow\mathcal N(0,\sigma^2),\qquad 
\sigma^2=\sum_{b=1}^M\kappa_b\Var\!\left[
\tilde\Gamma(\tilde O_b;\bm\Lambda_b^\ast,\bm\eta_m^\dagger)\right],
\end{gathered}
\label{eq:batch-variance}
\end{equation}
provided $\sigma^2>0$. 
\end{proposition}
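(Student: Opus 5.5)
The plan is the standard cross-fitted decomposition, with the oracle term now evaluated at the limits $(\bm\Lambda_b^\ast,\bm\eta_m^\dagger)$ rather than at the exact projections. Fix a fold $k$ and a batch $b$. We write
\[
\tilde\Gamma\big(O_{b,i};\hat{\bm\Lambda}_b^{(k)},\hat{\bm\eta}_m^{(-k)}\big)-\Phi^{\pi_e}
=\underbrace{\big\{\tilde\Gamma(O_{b,i};\bm\Lambda_b^\ast,\bm\eta_m^\dagger)-\Phi^{\pi_e}\big\}}_{\text{oracle}}
+\underbrace{\big\{\tilde\Gamma(O_{b,i};\hat{\bm\Lambda}_b^{(k)},\hat{\bm\eta}_m^{(-k)})-\tilde\Gamma(O_{b,i};\bm\Lambda_b^\ast,\bm\eta_m^\dagger)\big\}}_{\text{remainder}}.
\]
We then average over $(b,i)\in\mathcal I_k$ and sum over $k$. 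The oracle term has mean zero for two reasons. First, the telescoping identity \eqref{eq:approximate-projection-telescope} holds for any $\mathcal F_t$-measurable predictions once the true (here, realized) probabilities are used. Second, $m_T^\dagger=\Gamma_T^{\pi_e}$ and $\E[\Gamma_T^{\pi_e}]=\Phi^{\pi_e}$, by \Cref{prop:seq-annotation-unbiased} with the frozen predictions. Within batch $b$ the units are i.i.d. given the limiting design, so a Lindeberg CLT for a triangular array of independent batches gives $\sqrt n\cdot$(oracle average) $\Rightarrow\mathcal N(0,\sum_b\kappa_b\Var[\tilde\Gamma(\tilde O_b;\bm\Lambda_b^\ast,\bm\eta_m^\dagger)])$. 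Bounded rewards and the overlap bounds $\Lambda\ge c$ supply the moment condition, and $\sigma^2>0$ is assumed.

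For the remainder, we use the fact that the batch-$b$ probabilities $\hat{\bm\Lambda}_b^{(k)}$ are built only from earlier batches in fold $k$, while $\hat{\bm\eta}_m^{(-k)}$ uses only out-of-fold data. We would therefore condition on $\mathcal G_{b,k}=\sigma(\mathcal D^{(-k)},\mathcal D^{(k)}_{1:b-1})$, under which the batch-$b$, fold-$k$ units are i.i.d. We split the remainder into its conditional mean and a centered part.

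The centered part is handled by Chebyshev conditionally on $\mathcal G_{b,k}$. Its conditional variance is bounded by $\E[(\text{remainder})^2\mid\mathcal G_{b,k}]$, which tends to zero in probability by the $L_2$ consistency hypotheses and the uniform boundedness of the recorded probabilities and adapted predictions. Thus the centered part is $o_p(n^{-1/2})$ after scaling.

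The conditional mean needs two observations. First, because the realized probabilities are the ones actually used to draw $C_{1:t}$, annotation ignorability (\Cref{asn-anno-ignorability}) gives
\[
\E\big[\tilde\Gamma(O;\hat{\bm\Lambda}_b^{(k)},\hat{\bm\eta}_m^{(-k)})\mid\mathcal G_{b,k}\big]=\E[\hat m_T^{(-k)}\mid\mathcal G_{b,k}],
\]
and the same identity holds for the oracle, whose mean is $\Phi^{\pi_e}$. This is exactly why the proposition uses realized probabilities: no (R2)-type annotation product error appears, and intermediate $\hat m_t$ need not be consistent for $m_t$. Second, \Cref{prop:approximate-projection-bias} yields
\[
\E[\hat m_T^{(-k)}-\Gamma_T^{\pi_e}\mid\mathcal G_{b,k}]=\sum_t\gamma^{t-1}\E\big[(\hat\mu_t-\mu_t)\{\gamma(\hat V_{t+1}-V_{t+1})-(\hat Q_t-Q_t)\}\big].
\]
By Cauchy--Schwarz and (R1) this is $o_p(n^{-1/2})$. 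The bound $\|\hat V_{t+1}-V_{t+1}\|_2\lesssim\|\hat Q_{t+1}-Q_{t+1}\|_2$ follows from $\pi_e$ being fixed and from the overlap of $\pi_b$.

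The main obstacle is the adaptive dependence. The design probabilities in batch $b$ are random and data-dependent, and the limit variance must use $\bm\Lambda_b^\ast$ rather than $\hat{\bm\Lambda}_b^{(k)}$. The plan handles this in three steps. First, the difference between using $\hat{\bm\Lambda}_b^{(k)}$ and $\bm\Lambda_b^\ast$ in the weights is absorbed into the centered remainder. Its second moment is controlled by $\|\hat\lambda_{b,t}^{(k)}-\lambda_{b,t}^\ast\|_2\to0$, together with the lower bounds $c_\lambda$ on $1/\Lambda$, via the identity $1/\hat\Lambda-1/\Lambda^\ast=(\Lambda^\ast-\hat\Lambda)/(\hat\Lambda\Lambda^\ast)$ applied telescopically over $j\le t$. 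Second, its conditional mean is already exact by the realized-probability argument above. Third, to make the oracle averages across batches jointly asymptotically normal, we would either invoke a martingale CLT over the batch-ordered sum, or note that with fixed $M$ and proportions $\kappa_b=\lim N_b/n$, the batch oracle averages are independent (the oracle terms involve deterministic limits). Their independent CLTs then combine to give \eqref{eq:batch-variance}.
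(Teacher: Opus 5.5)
Your skeleton matches the paper's: condition on $\mathcal G_{b,k}$, apply weight collapse with the realized probabilities, use \Cref{prop:approximate-projection-bias} plus (R1) for the drift, and use conditional Chebyshev for the centered part. The gap is your choice of oracle, which breaks the bookkeeping. You define the oracle as $\tilde\Gamma(O_{b,i};\bm\Lambda_b^\ast,\bm\eta_m^\dagger)$. Its indicators are the realized ones, drawn with $\hat{\bm\Lambda}_b^{(k)}$, but they are weighted by $\bm\Lambda_b^\ast$. Weight collapse applied to this object gives
\[
\mathbb{E}\bigl[\tilde\Gamma(O_{b,i};\bm\Lambda_b^\ast,\bm\eta_m^\dagger)\mid\mathcal G_{b,k}\bigr]-\Phi^{\pi_e}
=\sum_{t=1}^{T}\mathbb{E}\Bigl[\Bigl(\frac{\hat\Lambda_{b,1:t}^{(k)}}{\Lambda_{b,1:t}^\ast}-1\Bigr)\bigl(\mathbb{E}[m_t^\dagger\mid\mathcal F_{t-1}]-m_{t-1}^\dagger\bigr)\Bigm|\mathcal G_{b,k}\Bigr].
\]
The frozen predictions are not a martingale. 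For $t=T$ the second factor is $m_{T-1}-m_{T-1}^\dagger$, which the proposition explicitly allows to be nonzero. So this discrepancy is only $O_p(\|\hat\lambda_b^{(k)}-\lambda_b^\ast\|_2)$, and with merely $o_p(1)$ consistency of $\hat\lambda$, $\sqrt n$ times it need not vanish.

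Consequently three of your claims are false: that the oracle has mean $\Phi^{\pi_e}$, that ``the same identity holds for the oracle,'' and that the conditional mean of the weight-difference piece is ``already exact.'' The remainder carries exactly the opposite drift. The two drifts cancel in the sum, so the result itself is not in danger, but as written neither your oracle CLT nor your $o_p(1)$ remainder bound holds. Relatedly, these oracle summands are not independent across batches, because their indicators depend on earlier same-fold batches through $\hat{\bm\Lambda}_b^{(k)}$. The ``independent batch CLTs'' option is therefore not available for them.

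The paper avoids this by keeping the realized weights and swapping only the predictions, i.e.\ comparing with $\tilde\Gamma(O_{b,i};\hat{\bm\Lambda}_b^{(k)},\bm\eta_m^\dagger)$. Weight collapse and $m_T^\dagger=\Gamma_T^{\pi_e}$ then give conditional mean exactly $\Phi^{\pi_e}$. The remainder becomes a pure prediction-substitution term:
\begin{itemize}
\item its conditional mean is exactly $\mathbb{E}[\hat m_T^{(-k)}-\Gamma_T^{\pi_e}\mid\mathcal G_{b,k}]$;
\item its second moment is bounded, via the lower bound on recorded probabilities, by a constant times $\sum_{t<T}\|\hat m_t^{(-k)}-m_t^\dagger\|_2^2+\|\hat\Gamma_T^{\pi_e,(-k)}-\Gamma_T^{\pi_e}\|_2^2$;
\item no $\hat\lambda$ appears anywhere in it.
\end{itemize}
The oracle is then a batch-ordered martingale-difference array. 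The condition $\|\hat\lambda_{b,t}^{(k)}-\lambda_{b,t}^\ast\|_2=o_p(1)$ is used only to show that its conditional variances converge to $\Var[\tilde\Gamma(\tilde O_b;\bm\Lambda_b^\ast,\bm\eta_m^\dagger)]$.

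Alternatively, you can keep your $\bm\Lambda_b^\ast$-weighted oracle but evaluate it on a coupled counterfactual $\tilde O_{b,i}$, whose indicators are drawn under $\bm\Lambda_b^\ast$ from the same uniforms. Then your mean-$\Phi^{\pi_e}$, i.i.d.\ and independence claims become true. The $C_{1:t}-\tilde C_{1:t}$ mismatch enters only the remainder's second moment, at order $\sum_{j\le t}\mathbb{E}|\hat\lambda_{b,j}^{(k)}-\lambda_{b,j}^\ast|$. The remainder's mean is still exact, because weight collapse is applied to each piece under its own true probabilities.
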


\section{Experiments}\label{sec:experiments}
First, we outline some variants in implementation that we consider when relevant. Such implementation variants can improve empirical performance or adapt our general method to the challenges or opportunities of specific applications. 
\begin{revision}
\paragraph{Practical conditional variance estimates: plug-in proxies for $\sigma_t^2$.}
The optimal probabilities of \Cref{sec:annotation_prob_derivation} depend on $\sigma_t^2(\mcF_{t-1})$, which is unknown and must be estimated from data. In every experiment the implemented design replaces it by a \emph{design signal} $\hat s_t^2(\mcF_{t-1})$, any nonnegative $\mcF_{t-1}$-measurable quantity, and sets $\lambda_t\propto\hat s_t$ subject to the floor and prefix constraints of \eqref{eqn-twostage}. By \Cref{prop:seq-annotation-unbiased}, different choices of signals only affect efficiency through how closely $\hat s_t$ tracks $\sigma_t$. Different options include: 
\begin{itemize}
\item \emph{Residual regression.} The standard way to estimate the conditional variance $\hat s_t^2$ is by regressing squared residuals on $\mcF_{t-1}$, which must be learned from the first batch.
    \item 
\emph{Binary rewards and a link function.} When outcomes are known to be binary, we can leverage the conditional variance expression for a Bernoulli outcome, $p(1-p)$. For a binary reward, $\hat s_t^2=\hat\mu_t^2\,\hat b_t(1-\hat b_t)$, where $\hat b_t$ estimates $b_t=\E[R_t\mid\mcF_{t-1}]$ on the first batch of \Cref{subsec:batch-adaptive-annotation} (when the revealed reward also determines the next state, the same two-point variance is applied to the full stage increment; see \Cref{apx:arena}). 
\item \emph{Always-observed silver label variance.} Sometimes an always-observed silver or surrogate label $\tilde R_t$ is available for every unit, such as in our casenote annotation application with zero-shot LLM annotation. Further, we may have an ensemble of the silver labels available, such as LLM annotations from different models. 
Then we can use silver-label disagreement from zero-shot models as a variance proxy.
Such zero-shot variance proxies are therefore $\mcF_0$-measurable. 
\end{itemize}
\paragraph{Implemented allocation protocol.}
The experiments run a simplified variant of \Cref{alg:batch-adaptive-seq-annotation} that differs from it in three respects. We use the known annotation probabilities in $\tilde\Gamma$ rather than re-estimation, which helps in small pilots. Second, the fold-$k$ conditional variance $\hat s_t$ is fit on first-batch units \emph{outside} fold $k$ so that we can use out-of-fold data, $(K-1)/K$ of the first batch rather than $1/K$. The cross-fold design does introduce additional dependence beyond the independence structure assumed in \Cref{sec:analysis}. We see this may have led to slightly worsened coverage, although variance improvements persist. Third, the second-batch probabilities are the optimized allocation of the \emph{residual} budget $B-\kappa_1$, which clips the optimal annotation probabilities to feasibility after an initial pilot. 

\paragraph{Trajectory-wise annotation as an additional constraint on allocation optimization.}
Forward monotone annotation is most relevant when reward variance differs from stage to stage, and can be learned. If instead reward variance is similar at every timestep, an alternative is to annotate whole trajectories. We give an example in the two-stage setting. Since $\lambda_1$ is $\mcF_0$-measurable, $\E[\sigma_2^2(\mcF_1)/\lambda_1(\mcF_0)]=\E[\E[\sigma_2^2(\mcF_1)\mid\mcF_0]/\lambda_1(\mcF_0)]$, and the restricted problem reduces to a single-stage Neyman problem on the summed increment variances:
\begin{equation}\label{eqn-bundle}
\min_{\lambda_1\in[c_\lambda,1]}\;\E\!\left[\frac{\sigma_1^2(\mcF_0)+\gamma^2\,\E[\sigma_2^2(\mcF_1)\mid\mcF_0]}{\lambda_1(\mcF_0)}\right]
\quad\text{s.t.}\quad 2\,\E[\lambda_1(\mcF_0)]\le B,
\end{equation}
The budget-normalized solution is:
\[
\lambda_1^{\rm traj}(\mcF_0)\;\propto\;\sqrt{\sigma_1^2(\mcF_0)+\gamma^2\,\E[\sigma_2^2(\mcF_1)\mid\mcF_0]}
\]
The solution is therefore suboptimal for the original problem, but still results in variance reduction compared to uniform allocation. 
\end{revision}

\subsection{Simulated data}
First, we illustrate our methods in a simulation with known ground truth.
For the data-generating process, we summarize here and refer to \Cref{apx-experiments} for the full details. Each unit has correlated (5-dimensional) Gaussian states across stages, binary
  treatments assigned by truncated logistic behavior policies with overlap bounded below by 0.05, and additive
  rewards depending on state, treatment, and a second-stage treatment interaction. The key design feature is
  heteroskedasticity targeted to the harder-to-learn arm: the less common control arm has much larger conditional
  reward variance at both stages, creating a setting where adaptive reward annotation can prioritize observations
  with high variance and high overlap importance.

\paragraph{Results.}

\begin{figure}
    \includegraphics[width=0.5\textwidth]{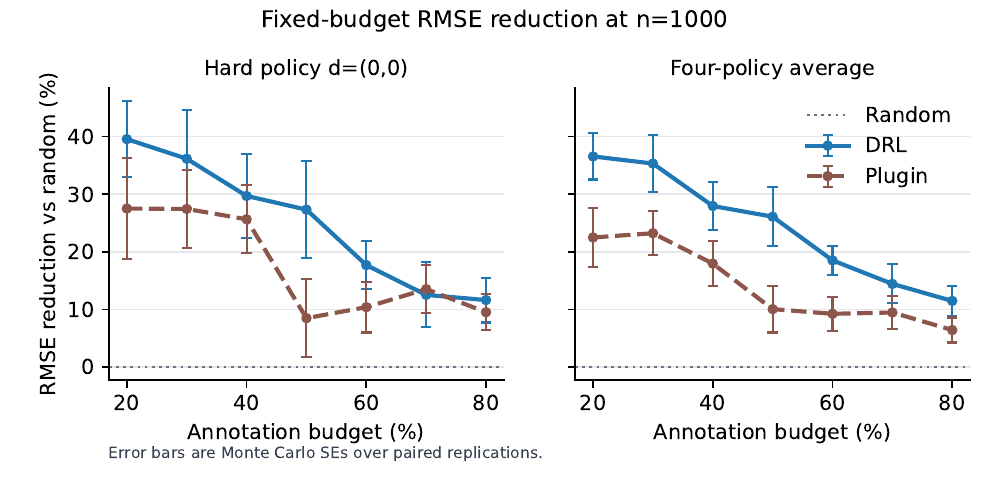}    \includegraphics[width=0.5\textwidth]{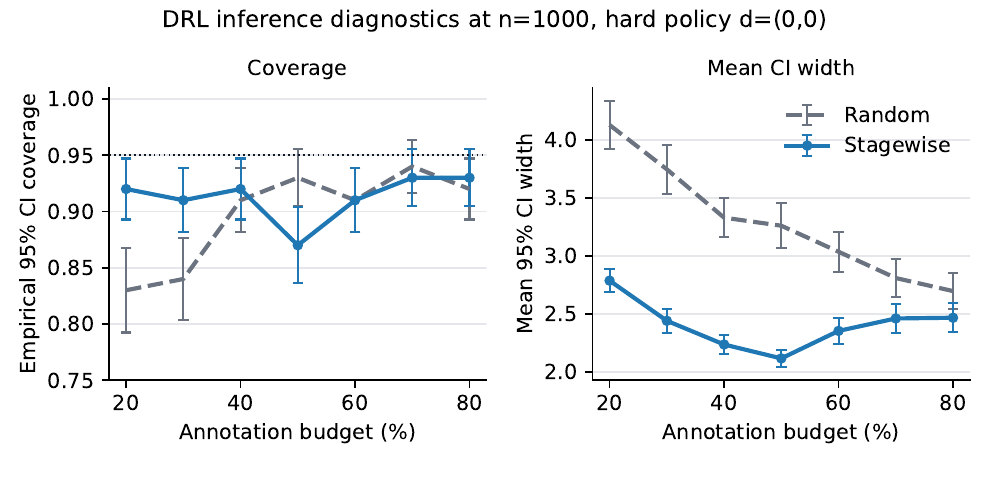}
    \caption{\textcolor{revisionblue}{Simulated data results. Left two subplots showcase RMSE reduction for plugin estimator and DRL. Right two subplots show, for DRL, coverage and confidence interval width diagnostics stagewise vs. random sampling.}}\label{fig:simulated}
\end{figure}

\begin{figure}
    \includegraphics[width=0.25\textwidth]{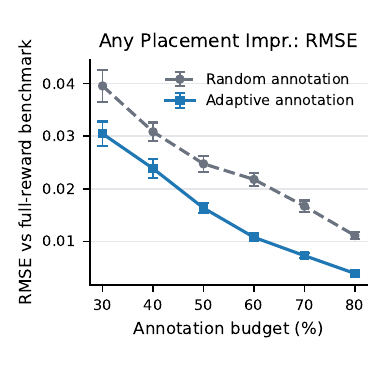}\includegraphics[width=0.25\textwidth]{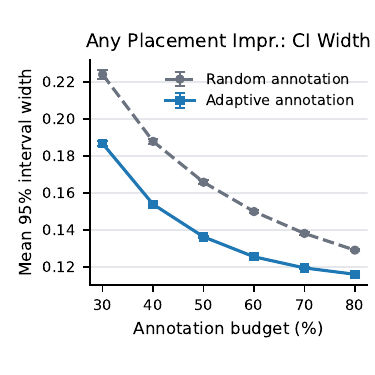}\includegraphics[width=0.25\textwidth]{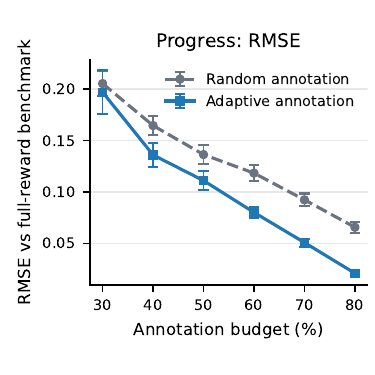}\includegraphics[width=0.25\textwidth]{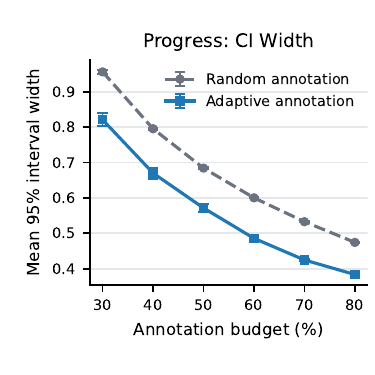}
    \caption{\textcolor{revisionblue}{Real-world data results, casenote data. Left two subplots showcase RMSE and interval width for the ``Any Placement'' outcome, with the binary-link variance estimate $\hat s_t^2=\hat\mu_t^2\hat b_t(1-\hat b_t)$, $\hat b_t$ fit on the first batch ($\kappa_1=0.3B$, $p=1$, i.e.\ $30\%$ of the budget). Right two subplots show RMSE and interval width for the ``Progress'' outcome, whose variance proxy is the sample variance $\widehat{\Var}(\tilde R_t)$ of per-outreach labels. Since this is $\mcF_0$-measurable, we omit the initial pilot and the entire budget is allocated by the stagewise optimized probabilities of \Cref{thm-twostage-optimal-allocation-T=2}.}}\label{fig:bg}
\end{figure}

\Cref{fig:simulated} illustrates the improvements from adaptive sampling. The first two subplots showcase the \textit{reduction in root-mean-squared error (RMSE) of target policy value}, where solid blue is the double reinforcement learning (DRL) estimator of \citet{kallus2019double} and dashed red is naive plug-in estimation (g-computation) with kernel ridge regression. We consider learning from a dataset of size $n=1000$, where we sample 100 Monte Carlo replications. The x-axis ranges over different budgets, ranging from low $20\%$ to high $80\%$ budget on the right-hand side. The first subplot shows results for the harder target policy $(A_1=0,A_2=0)$ while the second averages over the four possible unpersonalized dynamic treatment regimes. The adaptive allocation reduces DRL RMSE by $37$--$40\%$ at a $20\%$ budget, decaying to $\sim$$12\%$ at $80\%$. However, plug-in DRL has poor coverage, omitted from plots to preserve scale.
The second two subplots show the coverage (at 95\% nominal level) and mean confidence interval width diagnostics for DRL under stagewise vs. random sampling. \textcolor{revisionblue}{Adaptive sampling improves coverage where random sampling undercovers most (0.92 vs.\ 0.83 at a $20\%$ budget) at substantially smaller (therefore more informative) interval widths throughout ($15$--$35\%$ narrower).} However, our inferential results are asymptotic, and could potentially be further improved.

\subsection{Real-world data: casenotes from street outreach for homelessness services}

\textbf{Data}
We revisit the data of our motivating application. We work with a two-year cohort of clients from a homelessness services nonprofit and evaluate impacts on average client outcomes. Of course, there are extraordinary measurement challenges in homelessness services, so we restricted the cohort to 777 clients who were observed regularly throughout the entire two-year period. The nonprofit typically seeks to engage clients for outreach at least three times a month, and it's recorded when they intend to outreach and do not find the client. Casenotes are written after every outreach attempt and record unstructured interactions with outreach workers, ranging from initial engagement about services, to personal conversations that reveal important eligibility information, to actively working towards completing a housing application by collecting required documentation and income support and attending various required appointments. 

Our annotation task focuses on structured coding of \textit{progress towards a housing application}. We conducted extensive conversations with the nonprofit to refine the coding schema and recruited several Masters of Social Work students to code the casenotes accordingly. The schema is ordinal, with labels $(0)$ No progress made, $(0.975)$ conversation or meeting client need, $(2)$ discussing plans to complete intermediate tasks for housing application or other personal goals, $(2.5)$ signing documents and paperwork towards either of these, $(3)$ completing an appointment, $(3.5)$ improved condition, $(4)$ new placement. (Similar ordinal schemas appear in prior literature on impacts of street outreach \citep{ng2004outreach,levy2010homeless}.) We used the initial annotations to fine-tune local language models (Qwen 2.5 14B and others) in order to approximate ground-truth, which we are not able to manually obtain for the 180k casenotes. Fine-tuned models achieve 73\%-75\% accuracy; we take their \textcolor{revisionblue}{silver labels for a
controlled real-data annotation simulation}  in the experiments.

\textbf{Experiment set-up.}
We embed the two-year Breaking Ground cohort as a two-stage dynamic treatment regime problem. Stage 1 corresponds
  to the first six months after cohort entry, with treatment discretized as the client's outreach intensity quintile and
  intermediate reward defined from observed progress during that period. \textcolor{revisionblue}{First-period progress summaries (constructed moments of the progress trajectory within that period) enter the stage-2 history. For simplicity, we consider reward annotation only.} Stage 2 corresponds to months 6-12, with treatment again defined by outreach-intensity quintile during this period. We consider two different estimands. \begin{itemize}
      \item The \textbf{progress} task treats the reward as cumulative
  progress, combining first-period and second-period progress rewards $R=(R_1, R_2)$, with the max progress over each action period. 
  \item The \textbf{placement improvement} task focuses only on a terminal
  housing outcome: ``any placement improvement'' is a binary indicator for whether the client's observed placement
  status improved at any point relative to baseline during the second-year outcome window. 
  \end{itemize}Thus, the progress
  analysis asks whether adaptive annotation improves estimation of progress-based rewards, while the placement-
  improvement analysis asks whether adaptive annotation improves estimation and inference for a more directly policy-relevant
  terminal housing outcome, $R=(0,Y_2)$ (where $Y\in\{0,1\}$ is housing placement improvement). The progress is a dense, early-observed signal that is predictive of the sparse terminal placement improvement outcome. While 22.5\% of the clients see a placement improvement by the end of the 2-year period, more achieve intermediate progress, i.e. 66.3\% of clients achieve progress $\geq 2$ (start making plans, appointments) in year 2. 

The target policy increases outreach intensity by increasing each client's outreach intensity by one quintile. \textcolor{revisionblue}{Based on full-data DRL estimates, for the progress reward task, shifting outreach up one quintile has DRL-estimated value $4.75$ of total max-progress $(R_1+R_2)$, $95\% \mathrm{CI}: (4.56, 4.94)$, an absolute increase of $0.37$ progress units upon existing outreach. For any placement improvement, the corresponding DRL estimate is $0.260$ ($95\% \mathrm{CI}: (0.202, 0.317)$), an absolute increase of $3.5$ percentage points. Baselines for these increases are the observed means on the analysis population, i.e., the value of the existing outreach policy evaluated on its own data ($4.38$ progress units; $22.4\%$ placement improvement).}

\Cref{fig:bg} illustrates the results. Of course, the ground-truth is unknown, but we give a sense of uncertainty quantification by repeatedly subsampling 600 clients from the cohort for $100$ Monte Carlo replications --- however resulting standard errors are of course optimistically small. 

\textcolor{revisionblue}{The different rewards use different variance measures $\hat s_t^2$. \textit{Placement Improvement} is a binary outcome, so its conditional variance is exactly $b_t(1-b_t)$ for $b_t=\E[R_t\mid\mcF_{t-1}]$. We combine the fitted $b_t$ with this structural knowledge of the estimator and obtain $\hat s_t^2=\hat\mu_t^2\hat b_t(1-\hat b_t)$. The first two subplots show that adaptive annotation reduces DRL RMSE by about $23\%$ at budgets of $30$--$40\%$ and by $34$--$65\%$ at budgets of $50\%$ and above, and interval width by $16$--$18\%$ at budgets up to $60\%$. For the Progress task, the rewards are themselves LLM-coded, and we can use the within-window sample variance of the silver (LLM) labels: $\hat s_t^2=\hat\mu_t^2\widehat{\Var}(\tilde R_t)$, available for every casenote before annotation, so $\kappa_1=0$, the entire budget is allocated by the optimized probabilities, and the value functions are then fit on the annotated rows with inverse-inclusion weights. The design is the stagewise allocation of \Cref{thm-twostage-optimal-allocation-T=2}, so annotation of a client may stop after the first stage. This zero-gold-cost design reduces DRL RMSE by $17$--$32\%$ at budgets of $40$--$60\%$ and by $45$--$68\%$ at $70$--$80\%$, and interval width by $14$--$20\%$ across budgets. 
Crucially, structure-specific prediction error estimation for $\sigma^2$ is important for improvements from adaptive allocations upon random. }

\textcolor{revisionblue}{Since the ground truth is unknown here, \Cref{apx-simulator} repeats the two protocols on a simulator that we fit to the same cohort data. Fitting the simulator allows us to evaluate coverage relative to a ground-truth, although the resulting model-based effect estimates are less credible. We see that annotated-DRL intervals cover the true policy value at $0.89$--$0.96$ across budgets, and adaptive annotation reduces RMSE to the truth, relative to random allocation, by $12$--$38\%$ (placement) and $33$--$51\%$ (progress). }

\subsection{Real-world data: sequential preferences in LMArena}\label{sec:arena}

We also study annotation of human preferences using a large dataset from Arena, where users input a prompt, then compare two LLM model responses and vote for either response, a tie, or ``both bad.''\footnote{\url{https://huggingface.co/datasets/lmarena-ai/arena-human-preference-140k}} A user's prompt is routed to different potential LLM models. Arena must balance exploring many models to obtain evaluation data against the user experience, which may suffer if users read many completions from weaker models. We build a $T=2$ OPE task by keeping the first two prompts of each session. Action $A_t$ is the displayed model pair's category: two closed models, two open-weight models, or mixed. \citet{singh2026leaderboard} alleged that open-source models received differential coverage, while LMArena discussed user experience preservation as part of the reason for stratified sampling of models, rather than uniform sampling. We let stagewise rewards be the outcome  $U_t\in\{\mathrm{A},\mathrm{B},\mathrm{tie},\mathrm{both\ bad}\}$ denote the vote at stage $t$; we define the reward $R_t=\mathbf{1}\{U_t\neq\mathrm{both\ bad}\}$, which is the user's indication that at least one response is acceptable. Reward models such as Skywork-Reward-V2-Llama-3.1-8B are used as features for predicting human responses, as well as tf-idf embeddings of prompts and both responses. 

We analyze joint reward and state annotation. We evaluate two policies: \emph{equal exposure}, assigning probability $1/3$ to each pair category, and a smaller \emph{tilt} toward open-weight pairs centered at the behavior policy. We use the Bernoulli outcome binary-link variance form.
\Cref{apx:arena} gives further sampling, fitting and evaluation details.

\textcolor{revisionblue}{\Cref{fig:arena} shows the results over $1{,}000$ replications of $21{,}685$ sessions. For scale, the full-annotation estimates are $\Phi^{\pi_b}=0.967$, $\Phi^{\text{equal}}=0.958$ and $\Phi^{\text{tilt}}=0.966$, so equal exposure costs about $0.01$ (significant). Under equal exposure, adaptive annotation reduces RMSE against the full-annotation benchmark by $55$--$62\%$ at every budget and interval width by $34\%$ at a $30\%$ budget. Random allocation's RMSE ($0.0086$) at a $30\%$ budget  is still of the same order of magnitude of the equal-exposure effect, while the adaptive design's ($0.0033$) can enable inferential conclusions sooner.  }

\begin{figure}[t]
    \includegraphics[width=0.25\textwidth]{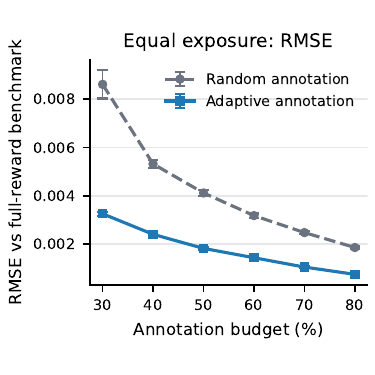}\includegraphics[width=0.25\textwidth]{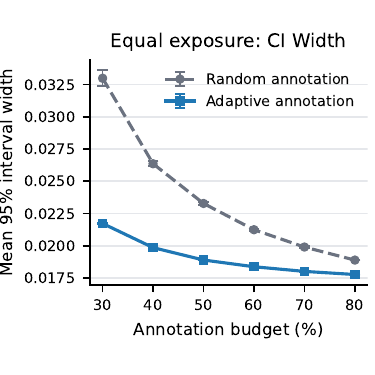}\includegraphics[width=0.25\textwidth]{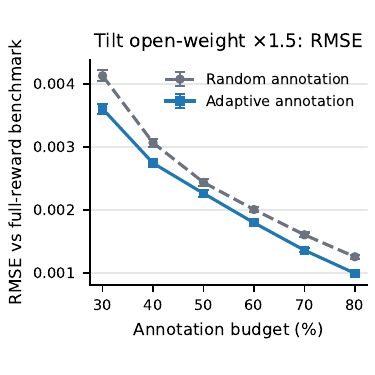}\includegraphics[width=0.25\textwidth]{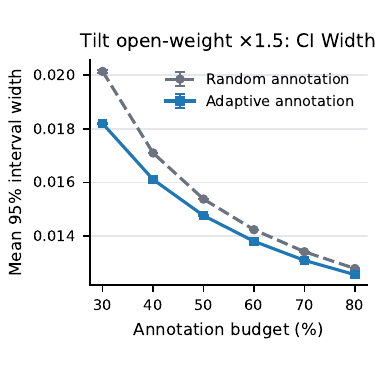}
    \caption{\textcolor{revisionblue}{LMArena results, with all labels used by the adaptive design charged to the annotation budget; $1{,}000$ Monte Carlo replications of $21{,}685$ sessions, error against each replication's full-annotation estimate. Left two subplots show RMSE and interval width for the \emph{equal exposure} policy ($\pi_e$ uniform over the three model-pair categories). Right two subplots show the same for the \emph{tilt} policy ($\pi_b$ with open-weight pair mass $\times 1.5$). }}\label{fig:arena}
\end{figure}

\bibliographystyle{plainnat}
\bibliography{big-rl,sequential-annotation,active-annotation}

\clearpage
\appendix
\begin{itemize}
  \item \Cref{apx-proofs-of-method}: Proofs of method
  \item \Cref{apx-proofs-of-estimation-results}: Proofs of Estimation Results
  \item \Cref{apx-experiments}: Details on experiments
\end{itemize}

\newpage
\section{Proofs of method}\label{apx-proofs-of-method}
\subsection{Proof of \Cref{prop:seq-annotation-unbiased}}
\begin{proof} \label{pf:seq-annotation-unbiased}
    Let
    \[
    D_t^{\pi_e}
    =
    m_t^{\pi_e}(\mathcal F_t)
    -
    m_{t-1}^{\pi_e}(\mathcal F_{t-1}).
    \]
    Since $m_t^{\pi_e}=\mathbb E[\Gamma^{\pi_e}\mid \mathcal F_t]$, the sequence
    $(m_t^{\pi_e})_{t=0}^T$ is a martingale with respect to
    $(\mathcal F_t)_{t=0}^T$. Hence
    \[
    \mathbb E[D_t^{\pi_e}\mid \mathcal F_{t-1}]=0.
    \]

    We show that every weighted annotation increment has mean zero. For any
    $t=1,\ldots,T$, by iterated expectation,
    \begin{align*}
    \mathbb E\left[
    \frac{C_{1:t}}{\Lambda_{1:t}}
    D_t^{\pi_e}
    \right]
    &=
    \mathbb E\left[
    \frac{C_{1:t-1}}{\Lambda_{1:t-1}}
    \,
    \mathbb E\left[
    \frac{C_t}{\lambda_t(\mathcal F_{t-1})}
    D_t^{\pi_e}
    \;\middle|\;
    C_{1:t-1},\mathcal F_{t-1}
    \right]
    \right].
    \end{align*}
    By the design of the annotation protocol, conditional on
    $C_{1:t-1}=1$ and $\mathcal F_{t-1}$, the stage-$t$ annotation indicator is
    drawn before the new stage-$t$ information is revealed, with probability
    $\lambda_t(\mathcal F_{t-1})$. Therefore,
    \[
    \mathbb E\left[
    C_tD_t^{\pi_e}
    \mid
    C_{1:t-1}=1,\mathcal F_{t-1}
    \right]
    =
    \lambda_t(\mathcal F_{t-1})
    \mathbb E[D_t^{\pi_e}\mid \mathcal F_{t-1}].
    \]
    Thus
    \[
    \mathbb E\left[
    \frac{C_{1:t}}{\Lambda_{1:t}}
    D_t^{\pi_e}
    \right]
    =
    \mathbb E\left[
    \frac{C_{1:t-1}}{\Lambda_{1:t-1}}
    \mathbb E[D_t^{\pi_e}\mid \mathcal F_{t-1}]
    \right]
    =0.
    \]
    Substituting this,
    \[
    \mathbb E[\tilde{\Gamma}^{\pi_e}]
    =
    \mathbb E[m_0^{\pi_e}(\mathcal F_0)]
    +
    \sum_{t=1}^T
    \mathbb E\left[
    \frac{C_{1:t}}{\Lambda_{1:t}}
    D_t^{\pi_e}
    \right]
    =
    \mathbb E[m_0^{\pi_e}(\mathcal F_0)].
    \]
    Finally, by the tower property,
    \[
    \mathbb E[m_0^{\pi_e}(\mathcal F_0)]
    =
    \mathbb E[\mathbb E[\Gamma^{\pi_e}\mid \mathcal F_0]]
    =
    \mathbb E[\Gamma^{\pi_e}]
    =
    \Phi^{\pi_e}.
    \]
    \end{proof}

\subsection{Proof of \Cref{prop-variance-decomposition}}
\begin{proof} \label{pf-variance-decomposition}
    Since \(m_t^{\pi_e}={\mathrm E}[\Gamma_T^{\pi_e}\mid {\cal F}_t]\),
    the sequence \(m_t^{\pi_e}\) is a martingale with respect to the filtration \(({\cal F}_t)_{t=0}^T\). Define the difference $D_t$, which satisfies that
    $$D_t^{\pi_e}=m_t^{\pi_e}-m_{t-1}^{\pi_e}, \qquad {\mathrm E}[D_t^{\pi_e}\mid {\cal F}_{t-1}]=0.$$

    Denote
    \[
    W_t=\frac{\prod_{j=1}^t C_j}{\prod_{j=1}^t \lambda_j({\cal F}_{j-1})}, \qquad \Lambda_{1:t}=\prod_{j=1}^t \lambda_j({\cal F}_{j-1}).
    \]
    Cross-terms \(W_tD_t^{\pi_e}\) have conditional expectation zero. Indeed, for \(s<t\),
    \begin{align*}
    {\mathrm E}[W_sD_s^{\pi_e}W_tD_t^{\pi_e}]
    &=
    {\mathrm E}
    \left[
    \frac{I_sI_t}{\Lambda_{1:s}\Lambda_{1:t}}
    D_s^{\pi_e}D_t^{\pi_e}
    \right] \\
    & =\E\left[ \frac{I_t}{\Lambda_{1:s}\Lambda_{1:t}} D_s^{\pi_e}D_t^{\pi_e}\right] \tag{$I_s I_t = I_t$ by forward monotonicity} \\
    &= \E\left[ \frac{D_s^{\pi_e}D_t^{\pi_e}}{\Lambda_{1:s}} \right] \tag{by iter exp. and sequential ignorability} \\
    & = \E\left[ \frac{D_s^{\pi_e}}{\Lambda_{1:s}} \E\left[ D_t^{\pi_e} \mid \mathcal{F}_{t-1} \right]\right] \tag{by iter exp. on $\mathcal{F}_{t-1}$ since $D_t^{\pi_e}$ is $\mathcal{F}_{t-1}$-measurable} \\
    \end{align*}

    The same argument gives orthogonality for cross-terms of \(m_0^{\pi_e}, D_t^{\pi_e}\). Therefore only diagonal
    terms remain:
    \[
    {\mathrm {Var}}(\widehat{\Gamma}_T^{\pi_e})
    =
    {\mathrm {Var}}(m_0^{\pi_e})
    +
    \sum_{t=1}^T
    {\mathrm E}[(W_tD_t^{\pi_e})^2].
    \]
    For each \(t\),
    \[
    {\mathrm E}[(W_tD_t^{\pi_e})^2]
    =
    {\mathrm E}
    \left[
    \frac{I_t}{\Lambda_{1:t}^2}
    (D_t^{\pi_e})^2
    \right]
    =
    {\mathrm E}
    \left[
    \frac{1}{\Lambda_{1:t}}
    (D_t^{\pi_e})^2
    \right],
    \]
    using \({\mathrm E}[I_t\mid {\cal F}_T]=\Lambda_{1:t}\). Finally,
    \[
    {\mathrm E}[(D_t^{\pi_e})^2\mid {\cal F}_{t-1}]
    =
    {\mathrm {Var}}(m_t^{\pi_e}\mid {\cal F}_{t-1}),
    \]
    because \(m_{t-1}^{\pi_e}={\mathrm E}[m_t^{\pi_e}\mid {\cal F}_{t-1}]\).
    Substituting this into the previous display proves the result.
\end{proof}

\subsection{Proof of \Cref{thm-twostage-optimal-allocation-T=2}} \label{pf:thm-twostage-optimal-allocation-T=2}
\begin{proof}
The optimization problem is strictly convex, since $1/x$ is strictly convex on the domain where $x > 0$ and affine transformations preserve convexity. Therefore, the KKT conditions characterize the optimal solution.\\

First consider the case of an interior solution, where the constraint $0 \leq \Lambda_{1:2}(\mathcal{F}_1) \leq \lambda_1^*(\mathcal{F}_0) \leq 1$ is not binding, i.e., $$0 <
\frac{ \sqrt{\sigma_2^2(\mathcal{F}_1)} }{\sqrt{\beta}} <
\frac{\sqrt{\sigma_1^2(\mathcal{F}_0)} }{\sqrt{\beta}}
< 1.$$

Note that (by construction of $\beta$), the given solutions satisfy the KKT conditions.
The Lagrangian of \Cref{eqn-twostage} is:
\begin{align*}
    \mathcal{L}(\lambda_1, \Lambda_{1:2}) = \E\left[  \frac{ \sigma_1^2(\mathcal{F}_0) }{\lambda_1(\mathcal{F}_0)} \right] + \E\left[  \frac{ \sigma_2^2(\mathcal{F}_1) }{\Lambda_{1:2}(\mathcal{F}_1)} \right] + \beta (\E[ \lambda_1(\mathcal{F}_0)+ \Lambda_{1:2}(\mathcal{F}_1)] - B)
\end{align*}
Taking derivatives with respect to $\lambda_1$ and $\Lambda_{1:2}$, we obtain the first-order conditions:
\begin{align*}
    \frac{\partial \mathcal{L}}{\partial \lambda_1} &= -\frac{\sigma_1^2(\mathcal{F}_0) }{\lambda_1^2(\mathcal{F}_0)} + \beta = 0
    \implies \lambda_1^*(\mathcal{F}_0) = \frac{  \sqrt{\sigma_1^2(\mathcal{F}_0)} }{\sqrt{\beta}} \\
    \frac{\partial \mathcal{L}}{\partial \Lambda_{1:2}} &= -\frac{ \sigma_2^2(\mathcal{F}_1) }{\Lambda_{1:2}^2(\mathcal{F}_1)} + \beta = 0 \implies       \Lambda_{1:2}^*(\mathcal{F}_1) =
    \frac{  \sqrt{\sigma_2^2(\mathcal{F}_1)} }{\sqrt{\beta}}
\end{align*}

The optimal solutions satisfy the budget equality constraint, plugging in the first-order solutions to the budget constraint, we obtain
$$ \frac{1}{\sqrt{\beta}} \left(\E\left[  \sqrt{\sigma_1^2(\mathcal{F}_0)} + \sqrt{\sigma_2^2(\mathcal{F}_1)} \right]\right) = B.
$$

\begin{revision}
Next consider the boundary case where \(\Lambda_{1:2}(\mathcal F_1)\leq \lambda_1(\mathcal F_0)\) binds. With multiplier
\(\eta(\mathcal F_1)\geq0\) for \(\Lambda_{1:2}(\mathcal F_1)-\lambda_1(\mathcal F_0)\leq0\), the KKT conditions are
\[
-\frac{\sigma_2^2(\mathcal F_1)}{\Lambda_{1:2}(\mathcal F_1)^2}+\beta+\eta(\mathcal F_1)=0,
\qquad
-\frac{\sigma_1^2(\mathcal F_0)}{\lambda_1(\mathcal F_0)^2}
+\beta-\E[\eta(\mathcal F_1)\mid \mathcal F_0]=0.
\]
Complementarity gives $\Lambda_{1:2}^*(\mathcal F_1)
=
\min\left\{
\lambda_1^*(\mathcal F_0),
\sqrt{\frac{\sigma_2^2(\mathcal F_1)}{\beta}}
\right\},
\eta(\mathcal F_1)
=
\left(
\frac{\sigma_2^2(\mathcal F_1)}{\lambda_1^*(\mathcal F_0)^2}
-\beta
\right)_+$.
Substituting into the KKT condition for \(\lambda_1\), and writing
\(Z(\mathcal F_0)=\beta\lambda_1^*(\mathcal F_0)^2\), yields
\[
Z(\mathcal F_0)
=
\sigma_1^2(\mathcal F_0)
+
\E[
\{\sigma_2^2(\mathcal F_1)-Z(\mathcal F_0)\}_+
\mid \mathcal F_0].
\]
This equation has a unique solution since
\(z-\E[\{\sigma_2^2(\mathcal F_1)-z\}_+\mid\mathcal F_0]\) is strictly increasing. Therefore
\[
\lambda_1^*(\mathcal F_0)=\sqrt{\frac{Z(\mathcal F_0)}{\beta}},
\qquad
\Lambda_{1:2}^*(\mathcal F_1)
=
\min\left\{
\lambda_1^*(\mathcal F_0),
\sqrt{\frac{\sigma_2^2(\mathcal F_1)}{\beta}}
\right\}.
\]
The budget constraint binds because the objective is decreasing in both variables, and convexity makes the KKT conditions sufficient.
\end{revision}
\end{proof}

\begin{proof}{Proof of \Cref{prop-surrogate-variance}}
    \textbf{Proof of the proposition.}
For part~(i), set $C_{1:0}=1$. Summation by parts gives
\[
\tilde\Gamma(O;\bm\Lambda,\bm\eta_m^\dagger)-\Gamma_T^{\pi_e}
=\sum_{t=1}^{T}\left(\frac{C_{1:t}}{\Lambda_{1:t}}
-\frac{C_{1:t-1}}{\Lambda_{1:t-1}}\right)
(\Gamma_T^{\pi_e}-m_{t-1}^\dagger).
\]
Conditional on the complete trajectory, the weight increments are orthogonal, centered, and have variances $1/\Lambda_{1:t}-1/\Lambda_{1:t-1}$. Total variance and conditional squared-error decomposition around $m_{t-1}$ give part~(i), using Proposition~2 for the exact projections.

For part~(ii), nonnegative variance cost and optimality give
\[
\begin{aligned}
\Var(\tilde\Gamma(O;\bm\Lambda^\ast,\bm\eta_m^\dagger))
&\le\Var(\tilde\Gamma(O;\bm\Lambda^\ast,\bm\eta_m))+\varepsilon=\inf_{\bm\Lambda}\Var(\tilde\Gamma(O;\bm\Lambda,\bm\eta_m))+\varepsilon\\
&\le\inf_{\bm\Lambda}\Var(\tilde\Gamma(O;\bm\Lambda,\bm\eta_m^\dagger))+\varepsilon.
\end{aligned}
\]
Uniform criterion error $\delta$ adds at most $2\delta$: compare the criteria once at the selected design and once at the surrogate-optimal design.

For part~(iii), decreasing prefix probabilities imply $0\le1/\Lambda_{1:t}-1/\Lambda_{1:t-1}\le\underline\Lambda^{-1}-1$; substitute into part~(i). 
\end{proof}
\newpage
\section{Proofs of Estimation Results}\label{apx-proofs-of-estimation-results}

\subsection{Proofs of \Cref{thm:clt_non-adaptive}} \label{pf-clt_non-adaptive}

Recall that the non-adaptive estimator can be written as 
\[
    \hat\psi_{\mathrm{nad}}=\sum_{k=1}^{K}\frac{n_k}{n}\cdot \frac{1}{n_k}\sum_{i\in\mathcal{D}^{(k)}}
    \tilde\Gamma\left(O_{i};\hat{\bm{\Lambda}}^{(-k)},\hat{\bm{\eta}}^{(-k)}_{m}\right),
\]
where $n_k=|\mathcal{D}^{(k)}|$, thus $\sum_{k=1}^{K} n_k=n$. Therefore,
\begin{align*}
    \sqrt{n}(\hat{\psi}_{\mathrm{nad}}-\Phi^{\pi_e})=\sum_{k=1}^{K} \sqrt{\frac{n_k}{n}}\cdot \sqrt{n_k}\left(\E_{n_k}\left[\tilde\Gamma(O; \hat{\bm\Lambda}^{(-k)}, \hat{\bm\eta}_m^{(-k)})\right]-\Phi^{\pi_e}\right),
\end{align*}
where we use $\mathbb{E}_{n_k}$ to refer empirical mean over the data in $\mathcal{D}^{(k)}$, for $k=1,\cdots, K$.
Then it suffices to probe the asymptotic behavior of the $k$th fold term, as the same logic applies to the other terms. Note that the $k$th term allows the following decomposition
\begin{align}
    \sqrt{n_k}\left(\E_{n_k}[\tilde{\Gamma} (O; \hat{\bm\Lambda}^{(-k)}, \hat{\bm\eta}_m^{(-k)})] -\Phi^{\pi_e}\right)  &=   \label{eqn:term_empricial_process_non_adaptive}\\
    &+ \underbrace{ \sqrt{n_k} (\mathbb{E}_{n_k}-\mathbb{E}) \left[ \tilde{\Gamma}(O; \bm\Lambda, \bm\eta_m)\big\vert \mathcal{D}^{(-k)}\right]}_{\text{oracle term}} \label{eqn:term_oracle_non_adaptive}\\
    &+ \underbrace{ \sqrt{n_k} \left( \E[\tilde{\Gamma} (O; \hat{\bm\Lambda}^{(-k)}, \hat{\bm\eta}_m^{(-k)})\big\vert \mathcal{D}^{(-k)}] -\Phi^{\pi_e} \right)}_{\text{drift term}},
    \label{eqn:drift_non_adaptive}
\end{align}
We established the asymptotic behavior of the empirical process, oracle, and drift terms in \cref{prop-emp-process-term-non-adaptive,prop-oracle-term-non-adaptive,prop-drift-term-non-adaptive}, respectively. \Cref{prop-emp-process-term-non-adaptive} shows that the empirical process term is $o_p(1)$ via Chebyshev's inequality, by conditioning on $\mathcal D^{(-k)}$ and showing that the conditional mean of the term is zero while its conditional variance goes to zero in probability. \Cref{prop-oracle-term-non-adaptive} shows that the oracle term converges in distribution to $\mathcal N(0,\sigma^2)$ by the central limit theorem. \Cref{prop-drift-term-non-adaptive} shows that the drift term is $o_p(1)$ under the assumed product error rates. Combining these three results yields
\[
    \sqrt{n_k}\left(\E_{n_k}[\tilde{\Gamma} (O; \hat{\bm\Lambda}^{(-k)}, \hat{\bm\eta}_m^{(-k)})] -\Phi^{\pi_e}\right) \Longrightarrow \mathcal{N}(0, \sigma^2)
\]
by Slutsky's theorem. Since each fold is independent, combining all terms together yields
\[
    \sqrt{n}(\hat{\psi}_{\mathrm{nad}}-\Phi^{\pi_e})=\sum_{k=1}^{K} \sqrt{\frac{n_k}{n}}\cdot \sqrt{n_k}\left(\E_{n_k}\left[\tilde\Gamma(O; \hat{\bm\Lambda}^{(-k)}, \hat{\bm\eta}_m^{(-k)})\right]-\Phi^{\pi_e}\right) \Longrightarrow\mathcal{N}\left(0,\sum_{k=1}^{K}\frac{n_k}{n}\sigma^2\right) =\mathcal{N}(0, \sigma^2).
\]
Before delving into the proofs of these three propositions, we introduce useful decompositions and reduction that appear frequently throughout the proofs.

\begin{lemma}[Useful decompositions of $\tilde\Gamma(O;\hat{\bm\Lambda},\hat{\bm\eta}_m)-\tilde\Gamma( O;\bm\Lambda,\bm\eta_m)$]\label{lem:diff_decomposition}
Let $\Delta(\cdot)=\hat{(\cdot)}-(\cdot)$ be the nuisance estimation error. Then, the difference between the
feasible estimator and the feasible oracle estimator decomposes as
\begin{equation*}\label{eqn:decomp-top}
\begin{aligned}
    \tilde\Gamma(O;\hat{\bm\Lambda},\hat{\bm\eta}_m)-\tilde\Gamma( O;\bm\Lambda,\bm\eta_m)
    =\Delta m_0
    &-\sum_{t=1}^{T}\frac{C_{1:t}}{\hat{\Lambda}_{1:t}\Lambda_{1:t}}\Delta\Lambda_{1:t} \Delta\big(m_t-m_{t-1}\big)
    -\sum_{t=1}^{T}\frac{C_{1:t}}{\hat{\Lambda}_{1:t}\Lambda_{1:t}}\Delta\Lambda_{1:t} \big(m_t-m_{t-1}\big) \\
    &+\sum_{t=1}^{T}\frac{C_{1:t}}{\Lambda_{1:t}}\Delta\big(m_t-m_{t-1}\big).
\end{aligned}
\end{equation*}
The three summands $\Delta m_0$, $(m_t-m_{t-1})$, and $\Delta(m_t-m_{t-1})$ decompose further as
follows.\\

\noindent(i) Full imputation estimation error $\Delta m_0$:
\begin{equation*}\label{eqn:decomp-m0}%
\begin{aligned}
    \Delta m_0
    =\Delta V_1
    &+\sum_{t=1}^{T}\gamma^{t-1}
        \Delta\mu_t\big(\Delta b_t+\gamma\Delta \tilde V_{t+1}-\Delta\tilde Q_{t}\big)
+\sum_{t=1}^{T}\Delta\mu_t\big(b_t+\gamma\tilde V_{t+1}-\tilde Q_t\big) \\
    &+\sum_{t=1}^{T} \mu_t\big(\Delta b_t+\gamma\Delta\tilde V_{t+1}-\Delta\tilde Q_t\big), \\
\end{aligned}
\end{equation*}

\noindent(ii) Oracle projection increment $m_t-m_{t-1}$:
\begin{equation*}\label{eqn:decomp-increment}%
\begin{aligned}
    (m_t-m_{t-1})
    &=\gamma^{t-1}\mu_t\big\{(R_t-b_t)+\gamma(V_{t+1}-\tilde V_{t+1})\big\} 
    +\gamma^{t}\,\mathbf{1}\{t<T\}\,\mu_{t+1}\big(\tilde Q_{t+1}-Q_{t+1}\big), \\
\end{aligned}
\end{equation*}

\noindent(iii) Increment estimation error $\Delta (m_t-m_{t-1})$:
\begin{equation*} \label{eqn:decomp-increment-error}
\begin{aligned} 
    \Delta(m_t-m_{t-1})
    &=\gamma^{t-1}\Delta\mu_t\big(-\Delta b_t +\gamma\Delta V_{t+1}-\gamma\Delta\tilde V_{t+1}\big) 
    +\gamma^{t-1}\Delta\mu_t\big\{(R_t-b_t)+\gamma(V_{t+1}-\tilde V_{t+1})\big\} 
    \\
    &
    +\gamma^{t-1}\mu_t(-\Delta b_t+\gamma\Delta V_{t+1}-\gamma\Delta\tilde V_{t+1}) 
    +\mathbf{1}\{t<T\} \gamma^{t}
        \Delta\mu_{t+1}\big(\Delta\tilde Q_{t+1}-\Delta Q_{t+1}\big) 
        \\
    &
    +\mathbf{1}\{t<T\}\gamma^t \Delta\mu_{t+1}\big(\tilde Q_{t+1}-Q_{t+1}\big)
    +\mathbf{1}\{t<T\}\gamma^t \mu_{t+1}\big(\Delta\tilde Q_{t+1}-\Delta Q_{t+1}\big).
\end{aligned}
\end{equation*}
\end{lemma}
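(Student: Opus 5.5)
The plan is to prove the top-level identity first and then the three expansions (i)--(iii) separately. All of them are purely algebraic, holding pointwise for every realization of $O$ and $C_{1:T}$; no expectations are taken.

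\textbf{Top-level identity.} Write $\hat D_t=\hat m_t-\hat m_{t-1}$ and $D_t=m_t-m_{t-1}$, so that $\hat D_t=D_t+\Delta D_t$ with $\Delta D_t=\Delta(m_t-m_{t-1})$. Subtracting the two estimators term by term gives
\[
\tilde\Gamma(O;\hat{\bm\Lambda},\hat{\bm\eta}_m)-\tilde\Gamma(O;\bm\Lambda,\bm\eta_m)
=\Delta m_0+\sum_{t=1}^T C_{1:t}\Bigl(\frac{\hat D_t}{\hat\Lambda_{1:t}}-\frac{D_t}{\Lambda_{1:t}}\Bigr).
\]
The key step is the elementary identity
\[
\frac{1}{\hat\Lambda_{1:t}}=\frac{1}{\Lambda_{1:t}}-\frac{\Delta\Lambda_{1:t}}{\hat\Lambda_{1:t}\Lambda_{1:t}},
\qquad \Delta\Lambda_{1:t}=\hat\Lambda_{1:t}-\Lambda_{1:t}.
\]
Substituting it gives
\[
\frac{\hat D_t}{\hat\Lambda_{1:t}}-\frac{D_t}{\Lambda_{1:t}}
=\frac{\Delta D_t}{\Lambda_{1:t}}-\frac{\Delta\Lambda_{1:t}}{\hat\Lambda_{1:t}\Lambda_{1:t}}(D_t+\Delta D_t).
\]
These are exactly the three sums in the display. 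The second-order term $\Delta\Lambda\,\Delta D$ is kept separately from the first-order term $\Delta\Lambda\,D$, since later only the former needs a product rate.

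\textbf{Part (ii).} I will write the oracle projections under the local approximation used in the estimator. This means $m_t$ contains $\mu_t(R_t+\gamma V_{t+1})$ for the revealed stage, with $b_{t'}$, $\tilde V_{t'}$ and $\tilde Q_{t'}$ frozen at their pre-revelation projections for later stages. Subtracting $m_{t-1}$ from $m_t$, the only terms that change at stage $t$ are $R_t$ versus $b_t$, $V_{t+1}$ versus $\tilde V_{t+1}$, and $Q_{t+1}$ versus $\tilde Q_{t+1}$. These give
\[
\gamma^{t-1}\mu_t\bigl\{(R_t-b_t)+\gamma(V_{t+1}-\tilde V_{t+1})\bigr\}
-\gamma^{t}\mu_{t+1}\bigl(Q_{t+1}-\tilde Q_{t+1}\bigr).
\]
For $t=T$ the last term vanishes, because $V_{T+1}=Q_{T+1}=0$. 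Here I use that $\mu_t$ is $\mathcal F_0$-measurable (Remark~\ref{rem-mu-tabular}), so it factors out of the projections.

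\textbf{Parts (i) and (iii).} Both follow from the bilinear identity
\[
\hat\mu\hat f-\mu f=\Delta\mu\,\Delta f+\Delta\mu\,f+\mu\,\Delta f,
\]
applied to each product of a density ratio and a regression function. For (i), I apply it to
\[
m_0=V_1+\sum_t\gamma^{t-1}\mu_t\bigl(b_t+\gamma\tilde V_{t+1}-\tilde Q_t\bigr).
\]
For (iii), I apply it to the two products in (ii), with $f=(R_t-b_t)+\gamma(V_{t+1}-\tilde V_{t+1})$ and $g=\tilde Q_{t+1}-Q_{t+1}$. Then $\Delta f=-\Delta b_t+\gamma\Delta V_{t+1}-\gamma\Delta\tilde V_{t+1}$, because $R_t$ is data and not estimated, and $\Delta g=\Delta\tilde Q_{t+1}-\Delta Q_{t+1}$. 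This reproduces the six displayed terms.

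\textbf{Main obstacle.} The main obstacle is bookkeeping rather than mathematics. First, the $\gamma^{t-1}$ factors must be tracked consistently; the displayed (i) appears to drop them in the last two sums, so I would either restore them or absorb them into $\mu_t$. Second, it must be confirmed that the oracle $m_t$ against which errors are measured is the locally approximated projection, not the exact one. Otherwise (ii) acquires extra future-prediction update terms of the kind in \eqref{eqn-mdp-increment}.
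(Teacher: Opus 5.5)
Your proposal is correct and follows essentially the same route as the paper. The paper obtains the top-level identity and parts (i) and (iii) from the same product identity $\hat x\hat y-xy=\Delta x\,\Delta y+\Delta x\,y+x\,\Delta y$ (at the top level with $x=1/\Lambda_{1:t}$), and simply asserts (ii) as ``the terms where revealing $R_t$ and $S_{t+1}$ impacts''; your frozen-projection telescoping makes (ii) precise, and both of your caveats are borne out: the paper's own derivation keeps $\gamma^{t-1}$ in every sum of $\Delta m_0$ (so the displayed (i) has a typo), and its justification of (ii) holds for the locally approximated increments rather than the exact $\mathbb{E}[\Gamma_T^{\pi_e}\mid\mathcal F_t]$ when $T>2$.
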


\begin{lemma}[Useful decomposition of the mean of $\tilde \Gamma(O;\hat{\bm\Lambda},\hat{\bm\eta}_m)-\tilde\Gamma(O; \bm\Lambda, \bm\eta_m)$] \label{lem:mean_diff_decomposition}
The mean of the difference between the feasible estimator and the feasible oracle estimator decomposes as
\begin{equation} \label{eqn:drift_mean}
    \begin{aligned}
        \E[\tilde{\Gamma}(O;\hat{\bm\Lambda},\hat{\bm\eta}_m) - \tilde\Gamma(O;\bm\Lambda, \bm\eta_m)] &= \sum_{t=1}^{T}\E\left[\gamma^{t-1}\Delta\mu_t(\gamma\Delta V_{t+1}-\Delta Q_t)\right]\\
        &+\sum_{t=1}^{T} \gamma^{t-1} \E\left[\frac{C_{1:t}}{\hat{\Lambda}_{1:t}\Lambda_{1:t}}\Delta\Lambda_{1:t} \Delta\mu_t(\Delta b_t -\gamma \Delta V_{t+1} + \gamma\Delta \tilde{V}_{t+1})\right]\\
        &+\sum_{t=1}^{T}\gamma^{t-1}\E\left[\frac{ C_{1:t}}{\hat{\Lambda}_{1:t}\Lambda_{1:t}}\Delta\Lambda_{1:t}\mu_t(\Delta b_t -\gamma \Delta V_{t+1}+\gamma\Delta \tilde{V}_{t+1})\right]\\
        &+\sum_{t=1}^{T}\gamma^{t-1}\E\left[\frac{C_{1:t}}{\hat{\Lambda}_{1:t}\Lambda_{1:t}}\Delta\Lambda_{1:t} \Delta\mu_t\{(b_t - R_t)+\gamma(\tilde{V}_{t+1}- V_{t+1})\}\right]  \\
        &+\sum_{t=1}^{T-1}\gamma^t\E\left[\frac{ C_{1:t}}{\hat{\Lambda}_{1:t}\Lambda_{1:t}}\Delta\Lambda_{1:t}\Delta \mu_{t+1}(\Delta Q_{t+1}- \Delta\tilde{Q}_{t+1})\right] \\
        &+\sum_{t=1}^{T-1}\gamma^t\E\left[\frac{ C_{1:t}}{\hat{\Lambda}_{1:t}\Lambda_{1:t}}\Delta\Lambda_{1:t}\mu_{t+1}(\Delta Q_{t+1}- \Delta \tilde Q_{t+1})\right]  \\
        &+\sum_{t=1}^{T-1}\gamma^{t} \E\left[\frac{C_{1:t}}{\hat{\Lambda}_{1:t}\Lambda_{1:t}}\Delta\Lambda_{1:t} \Delta \mu_{t+1}(Q_{t+1}- \tilde Q_{t+1})\right] . 
    \end{aligned}
\end{equation}
\end{lemma}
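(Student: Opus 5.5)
The plan is to take expectations term by term in the decomposition of \Cref{lem:diff_decomposition}, conditional on the independent training sample. Conditioning makes $\hat{\bm\Lambda}$ and $\hat{\bm\eta}_m$ fixed functions. Two facts will be used repeatedly. First, by \Cref{asn-anno-ignorability} and forward monotonicity, $\E[C_{1:t}\mid\mathcal F_T]=\Lambda_{1:t}$ for the true annotation probabilities. Second, $\hat\Lambda_{1:t}$, $\Lambda_{1:t}$ and hence $\Delta\Lambda_{1:t}$ are all $\mathcal F_{t-1}$-measurable.

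\textbf{Step 1: the terms without $\Delta\Lambda$ telescope.} For the first and last groups in \Cref{lem:diff_decomposition}, iterated expectation over $\mathcal F_T$ replaces $C_{1:t}/\Lambda_{1:t}$ by $1$. Telescoping then gives
\[
\E\Big[\Delta m_0+\textstyle\sum_{t=1}^T\Delta(m_t-m_{t-1})\Big]=\E[\Delta m_T]=\E[\hat\Gamma_T^{\pi_e}-\Gamma_T^{\pi_e}].
\]
This is the standard DRL bias calculation of \citet{kallus2019double}. I would expand $\hat\mu_t(R_t+\gamma\hat V_{t+1}-\hat Q_t)-\mu_t(R_t+\gamma V_{t+1}-Q_t)$ into a $\Delta\mu_t\times(\text{residual})$ part, a $\mu_t\times\Delta(\cdot)$ part, and a $\Delta\mu_t\times\Delta(\cdot)$ part, and handle each as follows.
\begin{itemize}
\item The $\Delta\mu_t\times(\text{residual})$ part has mean zero by the Bellman equation, since $\E[R_t+\gamma V_{t+1}-Q_t\mid S_t,A_t]=0$.
\item The $\mu_t\times\Delta(\cdot)$ parts telescope against $\Delta V_1$. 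This uses the density-ratio identity $\E_{\pi_b}[\mu_{t+1}\Delta Q_{t+1}]=\E_{\pi_b}[\mu_t\,\Delta V_{t+1}(S_{t+1})]$, which follows from the recursion for $\mu_t$ under \Cref{asn-seq-igno-stilde}, together with $\E[\mu_1\Delta Q_1]=\E[\Delta V_1]$.
\end{itemize}
What remains is $\sum_t\gamma^{t-1}\E[\Delta\mu_t(\gamma\Delta V_{t+1}-\Delta Q_t)]$, the first line of \eqref{eqn:drift_mean}.

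\textbf{Step 2: the $\Delta\Lambda$ terms.} The two middle groups of \Cref{lem:diff_decomposition} combine to
\[
-\sum_t\frac{C_{1:t}\Delta\Lambda_{1:t}}{\hat\Lambda_{1:t}\Lambda_{1:t}}\big[(m_t-m_{t-1})+\Delta(m_t-m_{t-1})\big].
\]
The oracle increment $(m_t-m_{t-1})$ from part (ii) has conditional mean zero given $\mathcal F_{t-1}$. Indeed, $b_t=\E[R_t\mid\mathcal F_{t-1}]$, $\tilde V_{t+1}$ and $\tilde Q_{t+1}$ are projections onto $\mathcal F_{t-1}$, and $\mu_t,\mu_{t+1}$ are pulled out as in \Cref{rem-mu-tabular}. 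The weight $C_{1:t}\Delta\Lambda_{1:t}/(\hat\Lambda_{1:t}\Lambda_{1:t})$ reduces after conditioning to an $\mathcal F_{t-1}$-measurable factor; this uses $\E[C_t\mid C_{1:t-1},\mathcal F_{T}]=\lambda_t(\mathcal F_{t-1})$. So this piece has zero mean.

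For the $\Delta(m_t-m_{t-1})$ piece I substitute expansion (iii) and distribute the minus sign. Each of its six summands then maps to one of the remaining six lines of \eqref{eqn:drift_mean}. For example, $-\Delta\mu_t\{(R_t-b_t)+\gamma(V_{t+1}-\tilde V_{t+1})\}$ becomes $\Delta\mu_t\{(b_t-R_t)+\gamma(\tilde V_{t+1}-V_{t+1})\}$. Likewise, $-\gamma^t\Delta\mu_{t+1}(\tilde Q_{t+1}-Q_{t+1})$ becomes $\gamma^t\Delta\mu_{t+1}(Q_{t+1}-\tilde Q_{t+1})$, and similarly for the others. These are kept unsimplified because they are exactly the annotation product terms that (R2) later controls.

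The main obstacle is Step 1's telescoping identity in the joint reward–state MDP setting. There $\mu_t(S_t,A_t)$ is not $\mathcal F_0$-measurable, so the exchange $\E[\mu_{t+1}\Delta Q_{t+1}]=\E[\mu_t\Delta V_{t+1}]$ must be argued carefully on the full-data law. I would first verify it under $\pi_b$ with $\hat m_T=\hat\Gamma_T^{\pi_e}$, as in \Cref{prop:approximate-projection-bias}, reusing that proposition's bias expression directly.
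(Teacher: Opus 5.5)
Your proposal is correct and matches the paper's proof essentially step for step. You take expectations in \Cref{lem:diff_decomposition} and telescope the unweighted terms to the DRL bias $\E[\Delta m_T]$, using the Bellman equation for the $\Delta\mu_t\times$residual part and change of measure plus the tower rule for the $\mu_t\times\Delta(\cdot)$ part. You then discard the $\Delta\Lambda_{1:t}(m_t-m_{t-1})$ terms because the weight collapses to an $\mathcal F_{t-1}$-measurable factor and the increment is a martingale difference, and you substitute expansion (iii) with the sign flipped for the remaining terms.

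The one cosmetic difference is in how $\E[m_t-m_{t-1}\mid\mathcal F_{t-1}]=0$ is justified. The paper takes it directly from the definition $m_t=\E[\Gamma_T^{\pi_e}\mid\mathcal F_t]$. You instead verify it from the closed form (ii) using $\mathcal F_0$-measurability of $\mu_t$, which the paper assumes in this setting anyway.
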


\begin{lemma}[Nuisance error reductions, from $V$ to $Q$]\label{lem-v-q-error}
For $t=1,\dots, T$, the following inequality holds
\begin{align*} 
         &\|\Delta V_t^{(-k)}\|_{2,\pi_b}^2 \leq \epsilon_b^{-1}\|\Delta Q_t^{(-k)}\|_{2, \pi_b}^2,\\
         & \|\Delta \tilde V_{t}^{(-k)}\|_{2,\pi_b}^2 \leq \epsilon_b^{-1} \|\Delta Q_{t}^{(-k)}\|_{2,\pi_b}^{2}, \\
        &  \|\Delta\tilde Q_t^{(-k)}\|_{2,\pi_b}^{2} \leq  \|\Delta Q_t^{(-k)}\|_{2,\pi_b}^{2}.
\end{align*}
\end{lemma}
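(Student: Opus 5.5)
The plan is to derive all three inequalities from two elementary facts. First, $V_t$ is a $\pi_e$-average of $Q_t$ over actions. Second, the tilde quantities are conditional expectations, which contract the $L_2(\pi_b)$ norm. Throughout, I take the estimated value function to be $\hat V_t^{(-k)}(s)=\sum_a\pi_e(a\mid s)\hat Q_t^{(-k)}(s,a)$ with $\pi_e$ known. I take the estimated projections to be $\hat{\tilde Q}_t^{(-k)}=\E[\hat Q_t^{(-k)}(S_t,A_t)\mid\mathcal F_{t-2}]$ and $\hat{\tilde V}_t^{(-k)}=\E[\hat V_t^{(-k)}(S_t)\mid\mathcal F_{t-2}]$, where the conditional expectation is under the behavior law $p_{\pi_b}$ and the training fold is held fixed. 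This matches the construction in the estimation scheme for $\tilde Q_t^{\pi_e}$. Consequently $\Delta V_t(s)=\sum_a\pi_e(a\mid s)\Delta Q_t(s,a)$, $\Delta\tilde Q_t=\E[\Delta Q_t\mid\mathcal F_{t-2}]$ and $\Delta\tilde V_t=\E[\Delta V_t\mid\mathcal F_{t-2}]$.

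\textbf{First inequality.} Fix $s$. Since $\pi_e(\cdot\mid s)$ is a probability vector, Jensen's inequality gives $(\Delta V_t(s))^2\le\sum_a\pi_e(a\mid s)(\Delta Q_t(s,a))^2$. Next I change measure at the action level: $\pi_e(a\mid s)\le 1=\pi_b(a\mid s)/\pi_b(a\mid s)\le\varepsilon_b^{-1}\pi_b(a\mid s)$, using the overlap bound $\pi_b\ge\varepsilon_b$ from \Cref{asn-seq_overlap}. Hence $(\Delta V_t(s))^2\le\varepsilon_b^{-1}\sum_a\pi_b(a\mid s)(\Delta Q_t(s,a))^2=\varepsilon_b^{-1}\E_{\pi_b}[(\Delta Q_t)^2\mid S_t=s]$. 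Integrating over the behavior marginal of $S_t$ gives $\|\Delta V_t\|_{2,\pi_b}^2\le\varepsilon_b^{-1}\|\Delta Q_t\|_{2,\pi_b}^2$. Under \Cref{asn-seq-igno-stilde}, $\pi_b$ depends on $S_t$ only through $S_t'$, but this does not affect the argument.

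\textbf{Second and third inequalities.} These follow from conditional Jensen. I have $\E[(\E[X\mid\mathcal F_{t-2}])^2]\le\E[X^2]$ for $X=\Delta Q_t(S_t,A_t)$, which is the third claim directly. For $X=\Delta V_t(S_t)$ I obtain $\|\Delta\tilde V_t\|_{2,\pi_b}^2\le\|\Delta V_t\|_{2,\pi_b}^2$ and then chain with the first inequality.

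\textbf{Main obstacle.} The delicate point is bookkeeping rather than analysis. The norms on the two sides must be under the same behavior distribution, and the estimated tilde quantities must be exact conditional expectations of the estimated $Q$ and $V$, not independently fitted regressions. If instead $\hat{\tilde Q}_t$ is a separately fitted regression, I would add and subtract $\E[\hat Q_t\mid\mathcal F_{t-2}]$. This isolates an extra regression-error term, which would have to be assumed of no larger order than $\|\Delta Q_t\|_{2,\pi_b}$. Under the projection convention I adopt, the lemma holds exactly as stated.
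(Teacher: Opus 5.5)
Your proof is correct and follows the paper's argument. For the first bound you use Jensen over $\pi_e(\cdot\mid s)$, the action-level change of measure $\pi_e/\pi_b\le\varepsilon_b^{-1}$ from overlap, and the tower property; for the projected quantities you use conditional Jensen. The only cosmetic difference is that the paper applies Jensen to the nested expectation in one step, whereas you chain through $\|\Delta V_t\|$. Your explicit convention that $\hat{\tilde Q}_t$ and $\hat{\tilde V}_t$ are exact conditional expectations of $\hat Q_t$ and $\hat V_t$ is the same one the paper relies on implicitly when it writes $\Delta\tilde V_{t+1}=\mathbb{E}[\Delta V_{t+1}\mid\tilde S_{t+1}]$.
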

\noindent We defer the proofs of these lemmas to the end of the section. Building on these lemmas, the following three propositions articulate the asymptotic behavior of the three terms.

\begin{proposition}[Empirical process term]\label{prop-emp-process-term-non-adaptive}
    The empirical process term is $o_p(1)$, i.e., 
    \[
        \sqrt{n_k} (\mathbb{E}_{n_k}-\mathbb{E}) \left[ \tilde{\Gamma}(O; \hat{\bm\Lambda}^{(-k)}, \hat{\bm\eta}_m^{(-k)}) - \tilde{\Gamma}(O; \bm\Lambda, \bm\eta_m) \big\vert \mathcal{D}^{(-k)} \right]=o_p(1).   
    \]
\end{proposition}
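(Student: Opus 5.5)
The plan is the standard cross-fitting Chebyshev argument, carried out conditionally on the out-of-fold data $\mathcal D^{(-k)}$. Write $\Delta\tilde\Gamma_i=\tilde\Gamma(O_i;\hat{\bm\Lambda}^{(-k)},\hat{\bm\eta}_m^{(-k)})-\tilde\Gamma(O_i;\bm\Lambda,\bm\eta_m)$ for $i\in\mathcal D^{(k)}$. Conditional on $\mathcal D^{(-k)}$, the fitted nuisances are fixed functions. Since $\mathcal D^{(k)}$ is independent of $\mathcal D^{(-k)}$ and its units are i.i.d., the variables $\Delta\tilde\Gamma_i$ are conditionally i.i.d. Hence the empirical process term $G_k=\sqrt{n_k}(\mathbb E_{n_k}-\mathbb E)[\Delta\tilde\Gamma\mid\mathcal D^{(-k)}]$ has conditional mean zero. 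Its conditional variance satisfies $\Var(G_k\mid\mathcal D^{(-k)})\le\E[(\Delta\tilde\Gamma)^2\mid\mathcal D^{(-k)}]$. By conditional Chebyshev, $P(|G_k|>\epsilon\mid\mathcal D^{(-k)})\le\epsilon^{-2}\E[(\Delta\tilde\Gamma)^2\mid\mathcal D^{(-k)}]$. If the right side is $o_p(1)$, dominated convergence on the bounded conditional probability gives $G_k=o_p(1)$ unconditionally. So everything reduces to showing $\|\Delta\tilde\Gamma\|_{2}^2=o_p(1)$, where the norm is conditional on $\mathcal D^{(-k)}$.

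To bound this second moment, I would use the decomposition of \Cref{lem:diff_decomposition}. It splits $\Delta\tilde\Gamma$ into $\Delta m_0$, two $\Delta\Lambda_{1:t}$-weighted terms, and $\frac{C_{1:t}}{\Lambda_{1:t}}\Delta(m_t-m_{t-1})$. By $(a_1+\dots+a_J)^2\le J\sum a_j^2$, it suffices to show that each summand is $o_p(1)$ in $L_2$. The weights are handled by uniform bounds. $C_{1:t}\le1$. \Cref{asn-seq_overlap} gives $\Lambda_{1:t}\ge c_\lambda^t$, and \Cref{asn-bdd-estimator} gives $\hat\Lambda_{1:t}\ge c_\lambda^t$, so $C_{1:t}/(\hat\Lambda_{1:t}\Lambda_{1:t})\le c_\lambda^{-2T}$. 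The true quantities $\mu_t\le C_\mu$, $R_t\le R_{\max}$, and hence $Q_t,V_t,b_t,\tilde Q_t,\tilde V_t$ are bounded. The estimated ones are bounded for large $n$ by \Cref{asn-bdd-estimator}. Consequently every product in parts (i)--(iii) of \Cref{lem:diff_decomposition} has the form (bounded factor)$\times$(a single nuisance error). Each $\Delta$-difference is also uniformly bounded, so products of two errors can be bounded by a constant times one error.

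Concretely, I would show the following. First, $\|\Delta m_0\|_2\lesssim\|\Delta V_1\|_2+\sum_t(\|\Delta\mu_t\|_2+\|\Delta b_t\|_2+\|\Delta\tilde V_{t+1}\|_2+\|\Delta\tilde Q_t\|_2)$. Second, the $\Delta\Lambda_{1:t}$ terms are $\lesssim\|\Delta\Lambda_{1:t}\|_2\lesssim\sum_{j\le t}\|\Delta\lambda_j\|_2$, by telescoping the product difference with all factors in $[c_\lambda,1]$. Third, $\|\Delta(m_t-m_{t-1})\|_2\lesssim\|\Delta\mu_t\|_2+\|\Delta\mu_{t+1}\|_2+\|\Delta b_t\|_2+\|\Delta V_{t+1}\|_2+\|\Delta\tilde V_{t+1}\|_2+\|\Delta Q_{t+1}\|_2+\|\Delta\tilde Q_{t+1}\|_2$. \Cref{lem-v-q-error} converts the $V$, $\tilde V$ and $\tilde Q$ errors into $Q$ errors, using the overlap constant $\varepsilon_b$. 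The consistency hypotheses of \Cref{thm:clt_non-adaptive} ($\|\hat\lambda_t-\lambda_t\|_2,\|\hat\mu_t-\mu_t\|_2,\|\hat Q_t-Q_t\|_2,\|\hat b_t-b_t\|_2,\|\hat{\tilde Q}_t-\tilde Q_t\|_2=o_p(1)$) then make every term $o_p(1)$.

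I expect the main obstacle to be bookkeeping rather than any deep step. The norms in \Cref{lem-v-q-error} are under $\pi_b$ on $(S_t,A_t)$, while errors such as $\Delta\tilde V_{t+1}(\mathcal F_{t-1})$ and $\Delta b_t(\mathcal F_{t-1})$ are functions of the annotation filtration. I would need to check that the relevant $L_2$ norms are taken under the same data-generating law as the estimator's sample, and that the boundedness of the hat quantities holds eventually with probability tending to one. This probably means working on the event where $\|\hat f\|_\infty\le M$, whose probability tends to one by the limsup condition. Otherwise the argument is routine.
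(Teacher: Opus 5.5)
Your proposal is correct and follows essentially the same route as the paper. Both condition on $\mathcal D^{(-k)}$, bound the conditional variance by $\E[(\Delta\tilde\Gamma)^2\mid\mathcal D^{(-k)}]$, and apply conditional Chebyshev together with bounded convergence. Both then control that second moment through the decomposition of \Cref{lem:diff_decomposition}, uniform boundedness, \Cref{lem:prod_comparison} for $\Delta\Lambda_{1:t}$, and \Cref{lem-v-q-error} to reduce the $V$, $\tilde V$ and $\tilde Q$ errors to $Q$ errors. The one cosmetic difference is that the paper merges your two $\Delta\Lambda_{1:t}$-weighted terms into a single term multiplying $\hat m_t-\hat m_{t-1}$, which changes nothing.
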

\begin{proof}[Proof of \Cref{prop-emp-process-term-non-adaptive}]
    Let $\varphi= \tilde{\Gamma}(O; \hat{\bm\Lambda}^{(-k)}, \hat{\bm\eta}_m^{(-k)}) - \tilde{\Gamma}(O; \bm\Lambda, \bm\eta_m)$. Then it suffices to show that for every $\epsilon>0$,
    \begin{equation} \label{eqn:conv_ip}
         \lim_{\substack{n\to \infty \\ n_k/n\to1/K}} P\big[\,|\sqrt{n_k}(\E_{n_k}[\varphi]-\mathbb{E}[\varphi \mid \mathcal{D}^{(-k)}])|>\epsilon \mid \mathcal D^{(-k)}\,\big]=0
        \quad\text{in probability}.
    \end{equation}
    The reason is that LHS is bounded ($[0,1]$-valued), thus the condition \eqref{eqn:conv_ip} implies
    \[
         \lim_{\substack{n\to \infty, \\ n_k/n\to1/K}} P\big[|\sqrt{n_k}(\E_{n_k}[\varphi]-\mathbb{E}[\varphi\mid\mathcal{D}^{(-k)}])|>\epsilon\big]
         =0
    \]
    by the bounded convergence theorem, concluding the proof. To prove \eqref{eqn:conv_ip}, we show that the conditional mean is zero and the conditional variance is $o_p(1)$. \\
    
    \noindent First, note that the conditional mean of $\sqrt{n_k}(\E_{n_k}[\varphi]-\E[\varphi \mid \mathcal{D}^{(-k)}])$ is zero, since $\hat{\bm\Lambda}^{(-k)}$ and $\hat{\bm \eta}_m^{(-k)}$ is fixed given $\mathcal{D}^{(-k)}$, i.e.,
    \[
        \E[\sqrt{n_k}(\E_{n_k}[\varphi] - \mathbb{E}[\varphi \mid\mathcal{D}^{(-k)}]) \mid \mathcal{D}^{(-k)}]=0.
    \]
    Next, as a consequence of conditional mean being zero, its conditional variance equals its conditional second moment, which in turn reduces to the conditional variance of $\varphi$ itself:
    \begin{align*}
        \Var\big[\sqrt{n_k}\big(\E_{n_k}[\varphi]-\E[\varphi\mid\mathcal D^{(-k)}]\big)\mid \mathcal D^{(-k)}\big]
        &= \E\big[\big\{\sqrt{n_k}\big(\E_{n_k}[\varphi]-\E[\varphi\mid\mathcal D^{(-k)}]\big)\big\}^2\mid \mathcal D^{(-k)}\big] \\
        &= n_k\cdot\E\big[\big(\E_{n_k}[\varphi]-\E[\varphi\mid\mathcal D^{(-k)}]\big)^2 \mid \mathcal D^{(-k)}\big] \\
        &= \Var\big[\varphi\mid \mathcal D^{(-k)}\big],
    \end{align*}
    where the last equality holds because the $n_k$ evaluation-fold observations are i.i.d.\ given $\mathcal D^{(-k)}$, so that $\Var[\E_{n_k}[\varphi]\mid\mathcal D^{(-k)}] = \Var[\varphi\mid\mathcal D^{(-k)}]/n_k$.
    Using the slight modification of the $\varphi$-decomposition derived in \cref{lem:diff_decomposition}:
    \begin{align*}
        \varphi
        =\Delta m_0^{(-k)}
        +\sum_{t=1}^{T}\frac{C_{1:t}}{\hat{\Lambda}_{1:t}^{(-k)}\Lambda_{1:t}}\Delta\Lambda_{1:t}^{(-k)}\big(\hat m_t^{(-k)}-\hat m_{t-1}^{(-k)}\big) +\sum_{t=1}^{T}\frac{C_{1:t}}{\Lambda_{1:t}}\Delta\big(m_t^{(-k)}-m_{t-1}^{(-k)}\big),
    \end{align*}
    one can deduce the upper bound of the conditional variance as follows:
    \begin{align*}
        \Var[\varphi\mid \mathcal{D}^{(-k)}]
        &\le \E\left[\varphi^2 \big\vert \mathcal{D}^{(-k)}\right] \\
        &\leq 3\E\bigl[(\Delta m_0^{(-k)})^2\mid\mathcal{D}^{(-k)}\bigr]\\
        &+ 3T \sum_{t=1}^{T}\E\left[\frac{C_{1:t}}{\bigl(\hat\Lambda_{1:t}^{(-k)}\Lambda_{1:t}\bigr)^2}\bigl(\Delta\Lambda_{1:t}^{(-k)}\bigr)^2(\hat m_t^{(-k)}-\hat m_{t-1}^{(-k)})^2 \bigg\vert \mathcal{D}^{(-k)}\right] \\
        &+3T\sum_{t=1}^{T}\E\left[\frac{C_{1:t}}{\Lambda_{1:t}^{2}}\bigl(\Delta(m_t^{(-k)}-m_{t-1}^{(-k)})\bigr)^2\bigg\vert \mathcal{D}^{(-k)}\right],
    \end{align*}
    where the second inequality is due to $(a+b+c)^2 \leq 3(a^2+b^2+c^2)$ and the Cauchy--Schwarz inequality. Since the oracle annotation probability $\Lambda_{1:t}$, the increment estimation error $m_t-m_{t-1}$ and their estimates $\hat\Lambda_{1:t}^{(-k)}$, $(\hat m_t^{(-k)}-m_{t-1}^{(-k)})$ are bounded by \cref{asn-seq_overlap,asn-bdd-estimator}, one can simplify the RHS as 
    \begin{align*}
        Var[\varphi\mid \mathcal{D}^{(-k)}]&=\mathcal{O}\left(\E\bigl[(\Delta m_0^{(-k)})^2 \mid \mathcal{D}^{(-k)} \bigr]+ \sum_{t=1}^{T}\E[(\Delta\Lambda_{1:t}^{(-k)})^2\mid \mathcal{D}^{(-k)}]+\sum_{t=1}^{T}\E\left[(\Delta(m_t^{(-k)}-m_{t-1}^{(-k)}))^2\mid\mathcal{D}^{(-k)}\right]\right).
    \end{align*}
    We bound each terms in order, entirely at the level of the primitive nuisances. \\

    \noindent\emph{(i) $\E\bigl[(\Delta m_{0}^{(-k)})^2\mid \mathcal{D}^{(-k)}\bigr]$}: Using the slight modification of $\Delta m_0^{(-k)}$--decomposition derived in \cref{lem:diff_decomposition}
    \begin{align*}
        \Delta m_0^{(-k)}&=\Delta V_1^{(-k)}+\sum_{t=1}^{T}\gamma^{t-1}\Delta \mu_t^{(-k)}(\hat b_t^{(-k)}+\gamma\tilde V_{t+1}^{(-k)}-\tilde Q_t^{(-k)}) \\
        &\qquad \ \quad + \sum_{t=1}^{T}\gamma^{t-1}\mu_t(\Delta b_t^{(-k)}+\gamma\Delta\tilde V_{t+1}^{(-k)}-\Delta\tilde Q_t^{(-k)}),
    \end{align*}
    applying the inequality $(\sum_{i=1}^{k}a_i)^2\leq k(\sum_{i=1}^{k}a_i^2)$ and the Cauchy-Schwarz inequality over the sum yields
    \begin{align*}
        \E\Bigl[\bigl(\Delta m_0^{(-k)}\bigr)^2\mid \mathcal{D}^{(-k)}\Bigr] &\leq 3\E\left[\bigl(\Delta V_1^{(-k)}\bigr)^2\mid \mathcal{D}^{(-k)}\right] \\
        &+3T\sum_{t=1}^{T}\gamma^{2(t-1)}\E\left[\bigl(\Delta\mu_t^{(-k)}\bigr)^2(\hat b_t^{(-k)}+\gamma\tilde V_{t+1}^{(-k)}-\tilde Q_t^{(-k)})^2\mid \mathcal{D}^{(-k)}\right] \\
        &+9T\sum_{t=1}^{T}\gamma^{2(t-1)}\E\left[\mu_t^2\bigl(|\Delta b_t^{(-k)}|^2+\gamma|\Delta\tilde V_{t+1}^{(-k)}|^2 +|\Delta \tilde Q_t^{(-k)}|^2
        \bigr)\mid \mathcal{D}^{(-k)}\right].
    \end{align*}
    Since $b_t^{(-k)}+\gamma\tilde V_{t+1}^{(-k)}-\tilde Q_t^{(-k)}$ and $\mu_t$ are bounded above by \cref{asn-bdd-estimator,asn-seq_overlap}, one can simplify the RHS as
    \begin{align*}
        \E\bigl[(\Delta m_0^{(-k)})^2\mid \mathcal{D}^{(-k)}\bigr] 
        &=\mathcal{O}\left(\E\bigl[(\Delta V_1^{(-k)})^2\mid \mathcal{D}^{(-k)}\bigr]+\sum_{t=1}^{T}\E\bigl[(\Delta \mu_t^{(-k)})^2\mid \mathcal{D}^{(-k)}\bigr]\right. \\
        &\left.\qquad\qquad+\sum_{t=1}^{T}\E\bigl[(\Delta b_t^{(-k)})^2+(\Delta \tilde V_{t+1}^{(-k)})^2+(\Delta\tilde Q_t^{(-k)})^2\mid \mathcal{D}^{(-k)}\bigr]\right).
    \end{align*}
    By the reduction from \cref{lem-v-q-error}, the RHS further reduces to
    \begin{align*}
        \E\bigl[(\Delta m_0^{(-k)})^2\mid \mathcal{D}^{(-k)}\bigr] 
        &=\mathcal{O}\left(\E\bigl[(\Delta Q_1^{(-k)})^2\mid \mathcal{D}^{(-k)}\bigr]+\sum_{t=1}^{T}\E\bigl[(\Delta \mu_t^{(-k)})^2\mid \mathcal{D}^{(-k)}\bigr]\right. \\
        &\left.\qquad\qquad+\sum_{t=1}^{T}\E\bigl[(\Delta b_t^{(-k)})^2+(\Delta Q_{t+1}^{(-k)})^2+(\Delta Q_t^{(-k)})^2\mid \mathcal{D}^{(-k)}\bigr]\right),
    \end{align*}
    which is $o_p(1)$ by the $\mathcal{L}_2$-consistency hypothesis. \\

    \noindent\emph{(ii) $\E\bigl[(\Delta\Lambda_{1:t}^{(-k)})^2 \mid \mathcal{D}^{(-k)}\bigr]$}: Note that
    \[
        \E\bigl[(\Delta\Lambda_{1:t}^{(-k)})^2\mid\mathcal{D}^{(-k)}\bigr]\leq \E\left[\left(\sum_{t=1}^{T}\Delta\lambda_t^{(-k)}\right)^2 \bigg\vert\mathcal{D}^{(-k)}\right] \leq T\sum_{t=1}^{T}\E\left[(\Delta\lambda_{t}^{(-k)})^2 \mid \mathcal{D}^{(-k)}\right]=o_p(1)
    \]
    where the first inequality is due to \cref{lem:prod_comparison}, the second inequality is due to the Cauchy-Schwarz inequality and the final equality holds under the consistency hypothesis. \\

    \noindent\emph{(iii) $\sum_{t=1}^{T}\E\left[(\Delta(m_t^{(-k)}-m_{t-1}^{(-k)}))^2\mid\mathcal{D}^{(-k)}\right]$:} Based on the decomposition below derived in \cref{lem:diff_decomposition}
    \begin{align*}
        \Delta(m_t^{(-k)}-m_{t-1}^{(-k)}) &= \gamma^{t-1}\Delta\mu_t^{(-k)}(-\Delta b_t^{(-k)}+\gamma\Delta V_{t+1}^{(-k)}-\gamma\Delta\tilde V_{t+1}^{(-k)}) \\
        &+\gamma^{t-1}\Delta \mu_t^{(-k)}\{(R_t-b_t)+\gamma(V_{t+1}-\tilde V_{t+1})\} \\
        &+\gamma^{t-1}\mu_t(-\Delta b_t^{(-k)}+\gamma\Delta V_{t+1}^{(-k)}-\gamma\Delta\tilde V_{t+1}^{(-k)}) \\
        &+\mathbf{1}\{t<T\}\gamma^{t}\Delta\mu_{t+1}^{(-k)}(\Delta\tilde Q_{t+1}^{(-k)}-\Delta Q_{t+1}^{(-k)}) \\
        &+\mathbf{1}\{t<T\}\gamma^t\Delta\mu_{t+1}^{(-k)}(\tilde Q_{t+1}-Q_{t+1}) \\
        &+\mathbf{1}\{t<T\}\gamma^t\mu_{t+1}(\Delta\tilde Q_{t+1}^{(-k)}-\Delta Q_{t+1}^{(-k)}),
    \end{align*}
    one can deduce the following by applying $(\sum_{i=1}^{k}a_i)^2 \leq k\sum_{i=1}^{k}a_i^2$ and the Cauchy--Schwarz inequality:
    \begin{align*}
        \sum_{t=1}^{T}\E\bigl[(\Delta(m_t^{(-k)}-m_{t-1}^{(-k)})^2 \mid \mathcal{D}^{(-k)}\bigr]&\leq 18\sum_{t=1}^{T}\E\bigl[|\Delta\mu_t^{(-k)}|^2\bigl(|\Delta b_t^{(-k)}|^2+|\Delta V_{t+1}|^2+|\Delta \tilde V_{t+1}|^2\bigr)\mid\mathcal{D}^{(-k)}\bigr] \\
        &+6\sum_{t=1}^{T}\E\bigl[\{(R_t-b_t)+\gamma(V_{t+1}-\tilde V_{t+1})\}^2|\Delta\mu_t^{(-k)}|^2\mid \mathcal{D}^{(-k)}\bigr] \\
        &+18\sum_{t=1}^{T}\E\bigl[\mu_t^2\bigl(|\Delta b_t^{(-k)}|^2+|\Delta V_{t+1}^{(-k)}|^2+|\Delta \tilde V_{t+1}^{(-k)}|^2\bigr)\mid \mathcal{D}^{(-k)}\bigr] \\
        &+12\sum_{t=1}^{T-1}\E\bigl[|\Delta \mu_{t+1}^{(-k)}|^2\bigl(|\Delta \tilde Q_{t+1}^{(-k)}|^2+|\Delta Q_{t+1}^{(-k)}|^2\bigr)\mid \mathcal{D}^{(-k)}\bigr] \\
        &+6\sum_{t=1}^{T-1}\E\bigl[(\tilde Q_{t+1}-Q_{t+1})^2|\Delta \mu_{t+1}^{(-k)}|^2\mid \mathcal{D}^{(-k)}\bigr] \\
        &+12\sum_{t=1}^{T-1}\E\bigl[\mu_{t+1}^2 \bigl(|\Delta \tilde Q_{t+1}^{(-k)}|^2+|\Delta Q_{t+1}^{(-k)}|^2\bigr)\mid \mathcal{D}^{(-k)}\bigr].
    \end{align*}
    Peeling the uniformly bounded factors under the boundness assumption as before, we have
    \begin{align*}
        \sum_{t=1}^{T}\E\left[(\Delta(m_t^{(-k)}-m_{t-1}^{(-k)}))^2\mid \mathcal{D}^{(-k)}\right] &=\mathcal{O}\left(\sum_{t=1}^{T}\E[(\Delta\mu_t^{(-k)})^2\mid\mathcal{D}^{(-k)}]\right.\\
        &\qquad \left.+\sum_{t=1}^{T}\E[(\Delta b_t^{(-k)})^2+(\Delta V_{t+1}^{(-k)})^2+(\Delta \tilde V_{t+1}^{(-k)})^2\mid\mathcal{D}^{(-k)}]\right. \\
        &\qquad\left. +\sum_{t=1}^{T}\E[(\Delta\tilde Q_{t+1}^{(-k)})^2+(\Delta Q_{t+1}^{(-k)})^2 \mid \mathcal{D}^{(-k)}]\right).
    \end{align*}
    The RHS further simplifies under the reduction from \cref{lem-v-q-error} as
    \begin{align*}
        \sum_{t=1}^{T}\E\left[(\Delta(m_t^{(-k)}-m_{t-1}^{(-k)}))^2\mid \mathcal{D}^{(-k)}\right] &=\mathcal{O}\left(\sum_{t=1}^{T}\E[(\Delta\mu_t^{(-k)})^2 +(\Delta b_t^{(-k)})^2+(\Delta Q_{t+1}^{(-k)})^2\mid\mathcal{D}^{(-k)}]\right),
    \end{align*}
    which is $o_p(1)$ by the $\mathcal{L}_2$-consistency hypothesis.\\

    \noindent Combining \emph{(i)--(iii)}, each of which is $o_p(1)$ under the consistency hypotheses of \cref{thm:clt_non-adaptive}, gives $\E[\varphi^2\mid\mathcal D_2]=o_p(1)$. By the conditional Chebyshev inequality,
    \[
        P\big[\,|\sqrt{n_k}(\E_{n_k}[\varphi] - \mathbb{E}[\varphi\mid \mathcal{D}^{(-k)}])| > \epsilon \mid \mathcal{D}^{(-k)}\big]
        \leq \frac{\Var[\sqrt{n_k}(\E_{n_k}-\E)[\varphi]\mid \mathcal{D}^{(-k)}]}{\epsilon^2}
        \le \frac{\E[\varphi^2 \mid \mathcal{D}^{(-k)}]}{\epsilon^2}=o_p(1),
    \]
    which is the conditional statement \eqref{eqn:conv_ip} required above. This concludes the proof.
\end{proof}

\begin{proposition}[Oracle term]\label{prop-oracle-term-non-adaptive}
The oracle term converges to $\mathcal{N}(0, \sigma^2)$ in distribution, where $\sigma^2=\Var(m_0)+\sum_{t=1}^{T}\E[\Var(m_t\mid\mathcal{F}_{t-1})/\Lambda_{1:t}]$, i.e., 
\[
    \sqrt{n_k}(\E_{n_k}-\E)\left[\tilde \Gamma(O; \bm\Lambda, \bm\eta_m) \mid \mathcal{D}^{(-k)}\right] \Longrightarrow \mathcal{N}(0, \sigma^2), \quad \sigma^2=\Var(m_0)+\sum_{t=1}^{T}\E\left[\frac{\Var(m_t\mid\mathcal{F}_{t-1})}{\Lambda_{1:t}}\right].
\]
\end{proposition}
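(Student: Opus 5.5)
The oracle summand $\tilde\Gamma(O;\bm\Lambda,\bm\eta_m)$ involves only the true nuisances and the true annotation probabilities. It therefore does not depend on $\mathcal D^{(-k)}$, and conditioning on the training folds is immaterial. In the non-adaptive design the evaluation-fold observations $O_i$, $i\in\mathcal D^{(k)}$, are i.i.d. (each unit's trajectory and annotation indicators $C_{i,1:T}$ are drawn independently with the same fixed $\lambda_t(\mathcal F_{t-1})$). So $\sqrt{n_k}(\E_{n_k}-\E)[\tilde\Gamma(O;\bm\Lambda,\bm\eta_m)]$ is a normalized sum of i.i.d. centered variables. The plan is to apply the classical Lindeberg--L\'evy CLT, which requires a finite and positive variance, and then to identify that variance with the displayed $\sigma^2$.

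\textbf{Finite second moment.} Under \Cref{asn-bdd-reward} and \Cref{asn-seq_overlap}, $R_t\in[0,R_{\max}]$ and $\mu_t\le C_\mu$. Hence $Q_t$, $V_t$, $b_t$, $\tilde V_t$ and $\tilde Q_t$ are bounded by $R_{\max}\sum_{j}\gamma^{j-1}$, which makes $\Gamma_T^{\pi_e}$ and every projection $m_t=\E[\Gamma_T^{\pi_e}\mid\mathcal F_t]$ uniformly bounded. Since $\lambda_t\ge c_\lambda$, we have $\Lambda_{1:t}\ge c_\lambda^T$, so each weight $C_{1:t}/\Lambda_{1:t}\le c_\lambda^{-T}$. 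Thus $\tilde\Gamma$ is bounded almost surely and all its moments are finite.

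\textbf{Identifying the variance.} I would invoke \Cref{prop-variance-decomposition} directly: under forward monotone sampling and annotation ignorability (\Cref{asn-anno-ignorability}), the martingale-difference structure of $D_t=m_t-m_{t-1}$ with respect to $(\mathcal F_t)$ kills the cross terms. Two facts do the work: $\E[C_{1:t}\mid\mathcal F_T]=\Lambda_{1:t}$, and $C_{1:s}C_{1:t}=C_{1:t}$ for $s<t$. This gives $\Var(\tilde\Gamma)=\Var(m_0)+\sum_t\E[\Var(m_t\mid\mathcal F_{t-1})/\Lambda_{1:t}]=\sigma^2$. The centering is $\E[\tilde\Gamma(O;\bm\Lambda,\bm\eta_m)]=\Phi^{\pi_e}$ by \Cref{prop:seq-annotation-unbiased}, though only the centered form is needed here.

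\textbf{Main obstacle: positivity of $\sigma^2$.} Convergence to a nondegenerate normal needs $\sigma^2>0$, which is stated in \Cref{thm:clt_non-adaptive}. I would argue that $\sigma^2\ge\Var(m_0)+\sum_t\E[\Var(m_t\mid\mathcal F_{t-1})]=\Var(\Gamma_T^{\pi_e})$, using $\Lambda_{1:t}\le1$ and the martingale variance identity. So $\sigma^2>0$ whenever the full-data efficient influence function is nondegenerate, which I would state as a mild maintained condition. Because the weights are deterministic given $\mathcal F_{t-1}$, a Lindeberg condition is unnecessary; if one preferred, it holds trivially by boundedness. The CLT then yields the stated limit $\mathcal N(0,\sigma^2)$ for each fold.
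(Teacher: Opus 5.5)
Your proposal is correct and matches the paper's proof. Like the paper, you note that the oracle score does not depend on $\mathcal D^{(-k)}$, is centered at $\Phi^{\pi_e}$ by \Cref{prop:seq-annotation-unbiased}, and is i.i.d.\ and uniformly bounded across the evaluation fold, so the classical CLT applies with the variance read off from \Cref{prop-variance-decomposition}. Your explicit boundedness constants and the lower bound $\sigma^2\ge\Var(\Gamma_T^{\pi_e})$ are extra detail the paper omits; positivity is not actually needed for this proposition, since $\sigma^2=0$ simply gives the degenerate limit $\mathcal N(0,0)$.
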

\begin{proof}[Proof of \cref{prop-oracle-term-non-adaptive}]
    Note that $\tilde\Gamma(O;\bm\Lambda, \bm\eta_m)$ is unbiased due to \cref{prop:seq-annotation-unbiased}:
    \begin{align*}
        \E[\tilde \Gamma(O;\bm\Lambda, \bm\eta_m)\mid \mathcal{D}^{(-k)}]&=\E[\tilde \Gamma(O;\bm\Lambda, \bm\eta_m)]=\Phi^{\pi_e}.
    \end{align*}
    Then, the summands $\tilde\Gamma(O_i;\bm\Lambda, \bm\eta_m)$, $i\in\mathcal{D}_1$ are i.i.d. and uniformly bounded, the CLT yields
    \[
        \sqrt{n_k}(\E_{n_k}-\E)\left[\tilde \Gamma(O; \bm\Lambda, \bm\eta_m) \mid \mathcal{D}^{(-k)}\right] \Longrightarrow \mathcal{N}(0, \sigma^2), \quad \sigma^2=\Var(m_0)+\sum_{t=1}^{T}\E\left[\frac{\Var(m_t\mid\mathcal{F}_{t-1})}{\Lambda_{1:t}}\right].
    \]
\end{proof}

\begin{proposition}[Drift term]\label{prop-drift-term-non-adaptive}
The drift term is $o_p(1)$, i.e.,
\begin{align*}
    \sqrt{n_k} \left( \E[\tilde{\Gamma} (O; \hat{\bm\Lambda}^{(-k)}, \hat{\bm\eta}_m^{(-k)})\big\vert \mathcal{D}^{(-k)}] -\Phi^{\pi_e} \right)=o_p(1).
\end{align*}
\end{proposition}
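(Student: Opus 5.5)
Since $\E[\tilde\Gamma(O;\bm\Lambda,\bm\eta_m)]=\Phi^{\pi_e}$ by \Cref{prop:seq-annotation-unbiased}, and $\hat{\bm\Lambda}^{(-k)},\hat{\bm\eta}_m^{(-k)}$ are fixed given $\mathcal D^{(-k)}$, the drift term equals
\[
\sqrt{n_k}\,\E\bigl[\tilde\Gamma(O;\hat{\bm\Lambda}^{(-k)},\hat{\bm\eta}_m^{(-k)})-\tilde\Gamma(O;\bm\Lambda,\bm\eta_m)\,\big|\,\mathcal D^{(-k)}\bigr].
\]
The plan is to expand this conditional mean with \Cref{lem:mean_diff_decomposition}. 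The decomposition leaves a DRL product term and six annotation-error cross terms. Each of these should be bounded by a product of $L_2$ norms of nuisance errors that \Cref{asn-rate} makes $o_p(n^{-1/2})$. Multiplying by $\sqrt{n_k}\le\sqrt n$ then gives $o_p(1)$.

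\textbf{DRL term.} For $\sum_t\gamma^{t-1}\E[\Delta\mu_t(\gamma\Delta V_{t+1}-\Delta Q_t)\mid\mathcal D^{(-k)}]$, apply Cauchy--Schwarz to get $\|\Delta\mu_t\|_2(\|\Delta V_{t+1}\|_2+\|\Delta Q_t\|_2)$. Then use \Cref{lem-v-q-error} to replace $\|\Delta V_{t+1}\|_2$ with $\epsilon_b^{-1/2}\|\Delta Q_{t+1}\|_2$. The resulting bound is exactly (R1).

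\textbf{Annotation terms.} Each of the remaining six terms carries the factor $C_{1:t}\Delta\Lambda_{1:t}/(\hat\Lambda_{1:t}\Lambda_{1:t})$. Since $C_{1:t}\le1$ and $\hat\Lambda_{1:t},\Lambda_{1:t}\ge c_\lambda^t$ by \Cref{asn-seq_overlap,asn-bdd-estimator}, this weight is bounded by a constant times $|\Delta\Lambda_{1:t}|$. By the telescoping product inequality (\cref{lem:prod_comparison}, already used in the proof of \Cref{prop-emp-process-term-non-adaptive}) and boundedness of the $\lambda$'s, $|\Delta\Lambda_{1:t}|\le\sum_{j\le t}|\Delta\lambda_j|$.

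\begin{itemize}
\item The terms with a factor $\Delta\mu_t$ or $\Delta\mu_{t+1}$ multiplied by a bounded quantity ($\Delta b_t$, $\Delta V$, $\Delta\tilde V$, $R_t-b_t$, $Q_{t+1}-\tilde Q_{t+1}$, etc.) are bounded via Cauchy--Schwarz by $\sum_{j\le t}\|\Delta\lambda_j\|_2\bigl(\|\Delta\mu_t\|_2+\|\Delta\mu_{t+1}\|_2\bigr)$.
\item The terms with $\mu_t(\Delta b_t-\gamma\Delta V_{t+1}+\gamma\Delta\tilde V_{t+1})$ and $\mu_{t+1}(\Delta Q_{t+1}-\Delta\tilde Q_{t+1})$ use $\mu\le C_\mu$ together with \Cref{lem-v-q-error}. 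This reduces $\|\Delta V_{t+1}\|_2$, $\|\Delta\tilde V_{t+1}\|_2$ and $\|\Delta\tilde Q_{t+1}\|_2$ to $\|\Delta Q_{t+1}\|_2$. The bound is then $\sum_{j\le t}\|\Delta\lambda_j\|_2\bigl(\|\Delta b_t\|_2+\|\Delta Q_{t+1}\|_2\bigr)$.
\end{itemize}
All of these are covered by (R2).

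\textbf{Main obstacle.} The delicate step is the norm bookkeeping. The $L_2$ norms of $\Delta b_t$, $\Delta\tilde V$ and $\Delta\tilde Q$ are taken over $\mathcal F_{t-1}$-measurable objects under the behavior law, while $\Delta\lambda_j$ lives on $\mathcal F_{j-1}$. I would condition on $\mathcal F_{t-1}$ and use $\E[C_{1:t}\mid\mathcal F_T]=\Lambda_{1:t}$ to remove the indicator, then apply Cauchy--Schwarz under $P_{\pi_b}$ so the norms match those in \Cref{asn-rate}.

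I would also double-check that the lemma's decomposition has no surviving first-order terms. For example, $\mu_t(\Delta b_t+\dots)$ without a $\Delta\Lambda$ factor should cancel against $\Delta m_0$ in expectation, because $\E[R_t-b_t\mid\mathcal F_{t-1}]=0$ and $\E[V_{t+1}-\tilde V_{t+1}\mid\mathcal F_{t-1}]=0$ hold with the true projections.

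With all terms $o_p(n^{-1/2})$ and finitely many of them ($T$ fixed), the drift term is $o_p(1)$.
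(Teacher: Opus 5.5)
Your proposal is correct and follows essentially the same route as the paper: rewrite the drift via unbiasedness of the oracle score and expand it with \Cref{lem:mean_diff_decomposition}. The DRL line is then bounded by (R1) using Cauchy--Schwarz and \Cref{lem-v-q-error}, and the six $\Delta\Lambda_{1:t}$ lines by (R2) using overlap, \cref{lem:prod_comparison} and Cauchy--Schwarz. The differences are cosmetic. In the doubly-small cross terms the paper treats $\Delta\mu$ as the bounded factor and lands on the $\|\Delta b_t\|_2+\|\Delta Q_{t+1}\|_2$ part of (R2), whereas you treat $\Delta b_t,\Delta V,\Delta\tilde V$ as bounded and land on the $\Delta\mu$ part. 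The paper also just bounds $C_{1:t}\le 1$ before applying Cauchy--Schwarz, so your extra conditioning step for the norm bookkeeping is harmless but unnecessary.
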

\begin{proof}[Proof of \cref{prop-drift-term-non-adaptive}] 
    Since $\tilde\Gamma(O;\bm\Lambda,\bm\eta_m)$ is unbiased (\cref{prop:seq-annotation-unbiased}) and is independent of $\mathcal{D}^{(-k)}$, one can rewrite the drift term as follows:
    \[
        \sqrt{n_k}\left(\E[\tilde\Gamma(O;\hat{\bm\Lambda}^{(-k)},\hat{\bm\eta}_m^{(-k)}\mid\mathcal{D}^{(-k)}]-\Phi^{\pi_e}\right)=\sqrt{n_k}\left(\E[\tilde\Gamma(O;\hat{\bm\Lambda}^{(-k)},\hat{\bm\eta}_m^{(-k)})-\tilde\Gamma(O;\bm\Lambda,\bm\eta_m)\mid\mathcal{D}^{(-k)}]\right).
    \]
    This allows to bring the decomposition derived in \cref{lem:mean_diff_decomposition}, which gives rise to
    \begin{align}
        &\sqrt{n_k}\left(\E[\tilde\Gamma(O;\hat{\bm\Lambda}^{(-k)},\hat{\bm\eta}_m^{(-k)})-\tilde\Gamma(O;\bm\Lambda,\bm\eta_m)\mid\mathcal{D}^{(-k)}]\right) \notag \\
        &\qquad=\sum_{t=1}^{T}\sqrt{n_k}\left(\E\left[\gamma^{t-1}\Delta\mu_t^{(-k)}(\gamma\Delta V_{t+1}^{(-k)}-\Delta Q_t^{(-k)})\mid\mathcal{D}^{(-k)}\right]\right) \label{eqn:non_adaptive_drift_term1}\\
        &\qquad+\sum_{t=1}^{T} \gamma^{t-1} \sqrt{n_k}\left(\E\left[\frac{C_{1:t}}{\hat{\Lambda}_{1:t}^{(-k)}\Lambda_{1:t}}\Delta\Lambda_{1:t}^{(-k)} \Delta\mu_t^{(-k)}(-\Delta b_t^{(-k)} +\gamma \Delta V_{t+1}^{(-k)}-\gamma\Delta \tilde{V}_{t+1}^{(-k)})\mid\mathcal{D}^{(-k)}\right]\right) \label{eqn:non_adaptive_drift_term2}\\
        &\qquad+\sum_{t=1}^{T}\gamma^{t-1}\sqrt{n_k}\left(\E\left[\frac{ C_{1:t}}{\hat{\Lambda}_{1:t}^{(-k)}\Lambda_{1:t}}\Delta\Lambda_{1:t}^{(-k)}\mu_t(-\Delta b_t^{(-k)} +\gamma \Delta V_{t+1}^{(-k)}-\gamma\Delta \tilde{V}_{t+1}^{(-k)})\mid\mathcal{D}^{(-k)}\right]\right) \label{eqn:non_adaptive_drift_term3}\\
        &\qquad+\sum_{t=1}^{T}\gamma^{t-1}\sqrt{n_k}\left(\E\left[\frac{C_{1:t}}{\hat{\Lambda}_{1:t}^{(-k)}\Lambda_{1:t}}\Delta\Lambda_{1:t}^{(-k)} \Delta\mu_t^{(-k)}\{(R_t - b_t)+\gamma(V_{t+1}-\tilde{V}_{t+1})\}\mid\mathcal{D}^{(-k)}\right]\right)\label{eqn:non_adaptive_drift_term4} \\
        &\qquad+\sum_{t=1}^{T-1}\gamma^t\sqrt{n_k}\left(\E\left[\frac{ C_{1:t}}{\hat{\Lambda}_{1:t}^{(-k)}\Lambda_{1:t}}\Delta\Lambda_{1:t}^{(-k)}\Delta \mu_{t+1}^{(-k)}(\Delta\tilde{Q}_{t+1}^{(-k)}-\Delta Q_{t+1}^{(-k)})\mid\mathcal{D}^{(-k)}\right]\right) \label{eqn:non_adaptive_drift_term5} \\
        &\qquad+\sum_{t=1}^{T-1}\gamma^t\sqrt{n_k}\left(\E\left[\frac{ C_{1:t}}{\hat{\Lambda}_{1:t}^{(-k)}\Lambda_{1:t}}\Delta\Lambda_{1:t}^{(-k)}\mu_{t+1}(\Delta \tilde Q_{t+1}^{(-k)}-\Delta Q_{t+1}^{(-k)})\mid\mathcal{D}^{(-k)}\right]\right) \label{eqn:non_adaptive_drift_term6}\\
        &\qquad+\sum_{t=1}^{T-1}\gamma^{t} \sqrt{n_k}\left(\E\left[\frac{C_{1:t}}{\hat{\Lambda}_{1:t}^{(-k)}\Lambda_{1:t}}\Delta\Lambda_{1:t}^{(-k)} \Delta \mu_{t+1}^{(-k)}(\tilde Q_{t+1} - Q_{t+1})\mid\mathcal{D}^{(-k)}\right]\right). \label{eqn:non_adaptive_drift_term7} 
    \end{align}
    We then bound each line via the Cauchy--Schwarz inequality to obtain a product of nuisance error terms, reducing the derived errors to primitive-nuisance errors via \cref{lem-v-q-error}. This
    yields either the DRL product error or the annotation product error, each of which is $o_p(1)$ by \cref{asn-rate}. \\
    
    \noindent\emph{Line \eqref{eqn:non_adaptive_drift_term1}:} 
    By the reduction of $\Delta V$ to $\Delta Q$ in \cref{lem-v-q-error}, we have
    \[
    \eqref{eqn:non_adaptive_drift_term1}\le\mathcal{O}\left(\sum_{t=1}^{T}\sqrt{n_k}\cdot\|\Delta\mu_t^{(-k)}\|_2\big(\|\Delta Q_{t+1}^{(-k)}\|_2+\|\Delta Q_t^{(-k)}\|_2\big)\right),
    \]
    and this is $o_p(1)$ by the DRL product error \cref{asn-rate}.\\
    
    \noindent\emph{Line \eqref{eqn:non_adaptive_drift_term2},\eqref{eqn:non_adaptive_drift_term3},\eqref{eqn:non_adaptive_drift_term5},\eqref{eqn:non_adaptive_drift_term6}:} Peeling the oracle density ratio $|\mu_t|\leq C_\mu$ (\cref{asn-seq_overlap}) and the its estimation error $|\Delta\mu_t^{(-k)}|\leq C_\mu+C_{\hat\mu}$, the oracle annotation probability $\Lambda_{1:t}\ge c_{\lambda}^{t}$ and its estimates $\hat\Lambda_{1:t}\ge c_{\hat\lambda}^t$ (\cref{asn-seq_overlap,asn-bdd-estimator}) leaves the product error between the annotation probability error $\Delta\Lambda_{1:t}^{(-k)}$ and the outcome error. As a result, we have
    \begin{align*}
        \eqref{eqn:non_adaptive_drift_term2}, \eqref{eqn:non_adaptive_drift_term3}&\leq \mathcal{O}\left(\sum_{t=1}^{T}\sqrt{n_k}\cdot \|\Lambda_{1:t}^{(-k)}\|_2\bigl(\|\Delta b_t^{(-k)}\|_2+\|\Delta V_{t+1}^{(-k)}\|_2+\|\Delta\tilde V_{t+1}^{(-k)}\|_2\bigr)\right), & (\because\text{Cauchy--Schwarz})\\
        &\leq \mathcal{O}\left(\sum_{t=1}^{T}\sqrt{n_k}\cdot \|\Delta\Lambda_{1:t}^{(-k)}\|_2\bigl(\|\Delta b_t^{(-k)}\|_2+\|\Delta Q_{t+1}^{(-k)}\|_2\bigr)\right) & (\because\text{\Cref{lem-v-q-error}})\\
        &\leq \mathcal{O}\left(\sum_{t=1}^{T}\sqrt{n_k}\cdot \sum_{j\leq  t}\|\Delta\lambda_{j}^{(-k)}\|_2\bigl(\|\Delta b_t^{(-k)}\|_2+\|\Delta Q_{t+1}^{(-k)}\|_2\bigr)\right), & (\because\text{\Cref{lem:prod_comparison}})\\
        \eqref{eqn:non_adaptive_drift_term5},\eqref{eqn:non_adaptive_drift_term6}&\leq \mathcal{O}\left(\sum_{t=1}^{T}\sqrt{n_k}
        \cdot\|\Delta\Lambda_{1:t}^{(-k)}\|_2\bigl(\|\Delta\tilde Q_{t+1}^{(-k)}\|_2+\|\Delta Q_{t+1}^{(-k)}\|_2\bigr)\right) & (\because\text{Cauchy--Schwarz}) \\
        &\leq \mathcal{O}\left(\sum_{t=1}^{T}\sqrt{n_k}\cdot\|\Delta\Lambda_{1:t}^{(-k)}\|_2\|\Delta Q_{t+1}^{(-k)}\|_2\right) & (\because\text{\Cref{lem-v-q-error}})  & \\
        &\leq \mathcal{O}\left(\sum_{t=1}^{T}\sqrt{n_k}\cdot\sum_{j\le t}\|\Delta\lambda_{j}^{(-k)}\|_2\|\Delta Q_{t+1}^{(-k)}\|_2\right), &(\because\text{\Cref{lem:prod_comparison}})
    \end{align*}
    which are all $o_p(1)$ under the assumed annotation product error \cref{asn-rate}.\\
    
    \noindent\emph{Line \eqref{eqn:non_adaptive_drift_term4}, \eqref{eqn:non_adaptive_drift_term7}:} Peeling the uniformly bounded factors $\hat\Lambda_{1:t}^{(-k)},\Lambda_{1:t},\{(R_t-b_t)+\gamma(V_{t+1}-\tilde V_{t+1})\}$, and $(\tilde Q_{t+1}-Q_{t+1})$ leaves the product error between the annotation probability error $\Delta\Lambda_{1:t}^{(-k)}$ and the density ratio error $\Delta\mu_t^{(-k)}$. As a consequence, we have
    \begin{align*}
        \eqref{eqn:non_adaptive_drift_term4} &\leq \mathcal{O}\left(\sum_{t=1}^{T}\sqrt{n_k}\|\Delta\Lambda_{1:t}^{(-k)}\|_2\|\Delta\mu_t^{(-k)}\|_2\right)\leq \mathcal{O}\left(\sum_{t=1}^{T}\sqrt{n_k}\cdot\sum_{j\leq t}\|\Delta\lambda_{j}^{(-k)}\|_2\|\Delta\mu_t^{(-k)}\|_2\right),\\
        \eqref{eqn:non_adaptive_drift_term7} &\leq \mathcal{O}\left(\sum_{t=1}^{T}\sqrt{n_k}\|\Delta\Lambda_{1:t}^{(-k)}\|_2\|\Delta \mu_{t+1}^{(-k)}\|_2\right) \leq \mathcal{O}\left(\sum_{t=1}^{T}\sqrt{n_k}\cdot\sum_{j\leq t}\|\Delta\lambda_{j}^{(-k)}\|_2\|\Delta\mu_{t+1}^{(-k)}\|_2\right).
    \end{align*}
    These are all $o_p(1)$ by the annotation product error \cref{asn-rate}. Therefore, combining these results ensures that the drift term is $o_p(1)$.
\end{proof}

\subsubsection{Proofs of Technical Lemmas}

\begin{proof}[Proof of \Cref{lem:diff_decomposition}]
    Using $\hat x \hat y-xy=(\hat x-x)(\hat y-y)+(\hat x-x)y+x(\hat y-y)$, one can decompose the difference between the feasible estimator and the feasible oracle estimator as follows: 
    \begin{align*}
        \tilde \Gamma(O;\hat{\bm\Lambda}, \hat{\bm\eta}_m)-\tilde\Gamma( O; \bm\Lambda, \bm\eta_m)&=(\hat m_0-m_0)+\sum_{t=1}^{T}C_{1:t}\left(\frac{1}{\hat\Lambda_{1:t}}(\hat m_t-\hat m_{t-1})-\frac{1}{\Lambda_{1:t}}(m_t-m_{t-1})\right) \\
        &=\Delta m_0-\sum_{t=1}^{T}\frac{C_{1:t}}{\hat{\Lambda}_{1:t}\Lambda_{1:t}}\Delta\Lambda_{1:t}\Delta(m_t-m_{t-1}) \\
        &\qquad\ \ \quad -\sum_{t=1}^{T}\frac{C_{1:t}}{\hat{\Lambda}_{1:t}\Lambda_{1:t}}\Delta\Lambda_{1:t}(m_t-m_{t-1}) \\
        &\qquad\ \ \quad+\sum_{t=1}^{T}\frac{C_{1:t}}{\Lambda_{1:t}} \Delta(m_t-m_{t-1}).
    \end{align*}
    This gives the intermediate result stated in the lemma above. Now it remains to prove that each full imputation estimation error, oracle projection increment, and the increment estimation error further decomposes as above. \\

    \noindent \emph{(i) Full imputation estimation error $\Delta m_0$:} Note that the closed form of the full imputation $m_0$ can be derived by replacing all reward $(R_t)$, value functions $(V_{t})$, action-value function $(Q_t)$ of the semiparametrically efficient OPE estimator to its $\mathcal{F}_0$-projections, which are $b_t, \tilde V_t, \tilde Q_t$, respectively. This gives the closed form as follows:
    \begin{align*}
        m_0=V_1+\sum_{t=1}^{T}\gamma^{t-1}\mu_t(b_t+\gamma\tilde V_{t+1}-\tilde Q_t),
    \end{align*}
    where the initial value function satisfies $\tilde V_1=V_1$, because $\tilde S_1=S_1$. Then, using the equation $\hat{x}\hat{y}-xy=(\hat{x}-x)(\hat{y}-y)+(\hat{x}-x)y+x(\hat{y}-y)$ again, we can rewrite $\Delta m_0$ as
    \begin{align*}
        \Delta m_0&=\left(\hat{V}_1+\sum_{t=1}^{T}\gamma^{t-1}\hat{\mu}_t(\hat b_t+\gamma\hat{\tilde V}_{t+1}-\hat{\tilde Q}_t)\right)-\left(V_1+\sum_{t=1}^{T}\gamma^{t-1}\mu_t(b_t+\gamma\tilde V_{t+1}-\tilde Q_t)\right) \\
        &=\Delta V_1+\sum_{t=1}^{T}\gamma^{t-1}\Delta\mu_t\Delta(b_t+\gamma\tilde V_{t+1}-\tilde Q_t)\\
        &\qquad \ \quad +\sum_{t=1}^{T}\gamma^{t-1}\Delta \mu_t(b_t+\gamma\tilde V_{t+1}-\tilde Q_t) \\
        &\qquad \ \quad + \sum_{t=1}^{T}\gamma^{t-1}\mu_t\Delta(b_t+\gamma\tilde V_{t+1}-\tilde Q_t).
    \end{align*}    
    Rewriting $\Delta(b_t+\gamma\tilde V_{t+1}-\tilde Q_t)=\Delta b_t+\gamma\Delta\tilde V_{t+1}-\Delta \tilde Q_t$ gives the same form as above. \\

    \noindent\emph{(ii) Oracle projection increment $m_t-m_{t-1}$:} Oracle projection increment term $m_t-m_{t-1}$ measures the marginal improvement on $\Gamma_T^{\pi_e}$ estimation by using $(R_t, S_{t+1})$ instead of $(\tilde R_t, \tilde S_{t+1})$, the information that annotating the data at time $t$ reveals. Therefore, $m_t-m_{t-1}$ leaves the term where revealing $R_t$ and $S_{t+1}$ impacts:
    \[
        m_t-m_{t-1}=\gamma^{t-1}\mu_t\{(R_t-b_t)+\gamma(V_{t+1}-\tilde V_{t+1})\}+\mathbf{1}\{t<T\}\gamma^t\mu_{t+1}(\tilde Q_{t+1}-Q_{t+1}).
    \]

    \noindent\emph{(iii) Increment estimation error $\Delta(m_t-m_{t-1})$:} Building on the closed form of $m_t-m_{t-1}$ below 
    \[
        m_t-m_{t-1}=\gamma^{t-1}\mu_t\{(R_t-b_t)+\gamma(V_{t+1}-\tilde V_{t+1})\}+\mathbf{1}\{t<T\}\gamma^t\mu_{t+1}(\tilde Q_{t+1}-Q_{t+1}),
    \]
    we can rewrite $\Delta(m_t-m_{t-1})$ using $\hat x\hat y-xy=(\hat x-x)(\hat{y}-y)+(\hat{x}-x)y+x(\hat{y}-y)$ again as 
    \begin{align*}
        \Delta(m_t-m_{t-1}) &= \gamma^{t-1}\Delta\mu_t\Delta\{(R_t-b_t)+\gamma(V_{t+1}-\tilde V_{t+1})\} \\
        &+\gamma^{t-1}\Delta \mu_t\{(R_t-b_t)+\gamma(V_{t+1}-\tilde V_{t+1})\} \\
        &+\gamma^{t-1}\mu_t\Delta\{(R_t-b_t)+\gamma(V_{t+1}-\tilde V_{t+1})\} \\
        &+\mathbf{1}\{t<T\}\gamma^{t}\Delta\mu_{t+1}\Delta(\tilde Q_{t+1}-Q_{t+1}) \\
        &+\mathbf{1}\{t<T\}\gamma^t\Delta\mu_{t+1}(\tilde Q_{t+1}-Q_{t+1}) \\
        &+\mathbf{1}\{t<T\}\gamma^t\mu_{t+1}\Delta(\tilde Q_{t+1}-Q_{t+1}),
    \end{align*}
    which further reduces to
    \begin{align*}
        \Delta(m_t-m_{t-1}) &= \gamma^{t-1}\Delta\mu_t(-\Delta b_t+\gamma\Delta V_{t+1}-\gamma\Delta\tilde V_{t+1}) \\
        &+\gamma^{t-1}\Delta \mu_t\{(R_t-b_t)+\gamma(V_{t+1}-\tilde V_{t+1})\} \\
        &+\gamma^{t-1}\mu_t(-\Delta b_t+\gamma\Delta V_{t+1}-\gamma\Delta\tilde V_{t+1}) \\
        &+\mathbf{1}\{t<T\}\gamma^{t}\Delta\mu_{t+1}(\Delta\tilde Q_{t+1}-\Delta Q_{t+1}) \\
        &+\mathbf{1}\{t<T\}\gamma^t\Delta\mu_{t+1}(\tilde Q_{t+1}-Q_{t+1}) \\
        &+\mathbf{1}\{t<T\}\gamma^t\mu_{t+1}(\Delta\tilde Q_{t+1}-\Delta Q_{t+1}),
    \end{align*}
    as $\Delta R_t=0$. This concludes the proof.
\end{proof}
\begin{proof}[Proof of \Cref{lem:mean_diff_decomposition}]
    In \cref{lem:diff_decomposition}, we proved 
    \begin{align*}
        \tilde\Gamma(O;\hat{\bm\Lambda},\hat{\bm\eta}_m)-\tilde\Gamma(O;\bm\Lambda,\bm\eta_m)
        =\Delta m_0
        &+\sum_{t=1}^{T}\frac{C_{1:t}}{\Lambda_{1:t}}\Delta\big(m_t-m_{t-1}\big).\\
        &-\sum_{t=1}^{T}\frac{C_{1:t}}{\hat{\Lambda}_{1:t}\Lambda_{1:t}}\Delta\Lambda_{1:t} \big(m_t-m_{t-1}\big) \\        
        &-\sum_{t=1}^{T}\frac{C_{1:t}}{\hat{\Lambda}_{1:t}\Lambda_{1:t}}\Delta\Lambda_{1:t} \Delta\big(m_t-m_{t-1}\big) 
    \end{align*}
    Taking an expectation on the both sides reduces the first line in the RHS to $\E[\Delta m_T]$ due to telescoping, i.e., 
    \begin{align*}
        \E\left[\Delta m_0\right]+\sum_{t=1}^{T}\E\left[\frac{C_{1:t}}{\Lambda_{1:t}}(\Delta m_t- \Delta m_{t-1})\right] &= \E[\Delta m_0] + \sum_{t=1}^{T}\E[\Delta m_{t}-\Delta m_{t-1}]=\E[\Delta m_T],
    \end{align*}
    and eliminates the second line in the RHS due to the martingale property, i.e.,
    \begin{align*}
        \sum_{t=1}^{T}\E\left[\frac{C_{1:t}}{\hat{\Lambda}_{1:t}\Lambda_{1:t}}\Delta\Lambda_{1:t}(m_t-m_{t-1})\right]&=\sum_{t=1}^{T}\E\left[\frac{\Delta\Lambda_{1:t}}{\hat\Lambda_{1:t}}\underbrace{\E[(m_t-m_{t-1})\mid\mathcal{F}_{t-1}]}_{=0}\right] = 0.
    \end{align*}
    As a result, the expectation simplifies as follows:
    \begin{align*}
        \tilde \Gamma(O; \hat{\bm\Lambda},\hat{\bm\eta}_m)-\tilde\Gamma(O; \bm\Lambda, \bm\eta_m) &= \E[\Delta m_T] - \E\left[\sum_{t=1}^{T}\frac{C_{1:t}}{\hat{\Lambda}_{1:t}\Lambda_{1:t}}\Delta\Lambda_{1:t}\Delta(m_t-m_{t-1})\right],
    \end{align*}
    Since we derived the closed form of $\Delta(m_t-m_{t-1})$ in the previous lemma, it remains to derive the closed form of double reinforcement learning (DRL) error $\Delta m_T$. Note that $m_T$ is the projection of $\Gamma_T^{\pi_e}$ on the complete history $\mathcal{F}_T$, which is $\Gamma_T^{\pi_e}$ itself:
    \[
        m_T=V_1+\sum_{t=1}^{T}\gamma^{t-1}\mu_t(R_t+\gamma V_{t+1}-Q_t).
    \]
    Note that DRL error decomposes as
    \begin{align*}
        \Delta m_T&=\left\{\hat V_1+\sum_{t=1}^{T}\gamma^{t-1}\hat\mu_t(R_t+\gamma\hat{V}_{t+1}-\hat{Q}_t)\right\} - \left\{V_1 + \sum_{t=1}^{T}\gamma^{t-1}\mu_t(R_t+\gamma V_{t+1}-Q_t)\right\} \\
        &=\Delta V_1+ \sum_{t=1}^{T}\gamma^{t-1}\mu_t\Delta(R_t+\gamma V_{t+1}-Q_t)\\
        &\qquad\ \quad+\sum_{t=1}^{T}\gamma^{t-1}\Delta\mu_t(R_t+\gamma V_{t+1}-Q_t)\\
        &\qquad\ \quad +\sum_{t=1}^{T}\gamma^{t-1}\Delta \mu_t\Delta(R_t+\gamma V_{t+1}- Q_t).
    \end{align*}
    Taking an expectation on both sides eliminates the first line in the RHS due to telescoping, i.e., 
    \begin{align*}
        \E[\Delta V_1]+ \sum_{t=1}^{T}\gamma^{t-1}\E[\mu_t\Delta(R_t+\gamma V_{t+1}-Q_t)]&=\E[\Delta V_1] + \sum_{t=1}^{T}\gamma^{t-1}\E[\mu_t (\gamma \Delta V_{t+1}- \Delta Q_t)]  \\
        &=\E[\Delta V_1]+\sum_{t=1}^{T}(\gamma^{t}\E[\Delta V_{t+1}]-\gamma^{t-1}\E[\Delta Q_t]) \\
        &=\E[\Delta V_1]+\sum_{t=1}^{T}(\gamma^{t}\E[\Delta V_{t+1}]-\gamma^{t-1}\E[\Delta V_t] )\\
        &=\gamma^T\E[\Delta V_{T+1}]=0
    \end{align*}
    where the second equality is due to the change of measure from data generating distribution induced by behavior policy to the one by target policy and the third equality is due to the tower rule. \\
    
    \noindent Next, the second line also vanishes due to the Bellman equation:
    \[
        \E[\gamma^{t-1}\Delta\mu_t(R_t+\gamma V_{t+1}-Q_t)]=\E[\gamma^{t-1}\Delta\mu_t\underbrace{\E[(R_t+\gamma V_{t+1}-Q_t) \mid S_t, A_t]}_{=0}] = 0.
    \]
    As a result, 
    \begin{align*}
        \E[\Delta m_T]&=\sum_{t=1}^{T}\gamma^{t-1}\E[\Delta\mu_t\Delta(R_t+\gamma V_{t+1}-Q_t)] =\sum_{t=1}^{T}\gamma^{t-1}\E[\Delta \mu_t(\gamma\Delta V_{t+1}-\Delta Q_t)].
    \end{align*}
    Combining this result with the closed form of $\Delta (m_t-m_{t-1})$, we have:
    \begin{align*}
        \E[\tilde{\Gamma}(O;\hat{\bm\Lambda},\hat{\bm\eta}_m) - \tilde\Gamma(O;\bm\Lambda, \bm\eta_m)] &= \E[\Delta m_T]-\E\left[\sum_{t=1}^{T}\frac{C_{1:t}}{\hat\Lambda_{1:t}\Lambda_{1:t}}\Delta\Lambda_{1:t}\Delta(m_t-m_{t-1})\right]\\
        &=\sum_{t=1}^{T}\E\left[\gamma^{t-1}\Delta\mu_t(\gamma\Delta V_{t+1}-\Delta Q_t)\right]\\
        &+\sum_{t=1}^{T} \gamma^{t-1} \E\left[\frac{C_{1:t}}{\hat{\Lambda}_{1:t}\Lambda_{1:t}}\Delta\Lambda_{1:t} \Delta\mu_t(\Delta b_t -\gamma \Delta V_{t+1} + \gamma\Delta \tilde{V}_{t+1})\right]\\
        &+\sum_{t=1}^{T}\gamma^{t-1}\E\left[\frac{ C_{1:t}}{\hat{\Lambda}_{1:t}\Lambda_{1:t}}\Delta\Lambda_{1:t}\mu_t(\Delta b_t -\gamma \Delta V_{t+1}+\gamma\Delta \tilde{V}_{t+1})\right]\\
        &+\sum_{t=1}^{T}\gamma^{t-1}\E\left[\frac{C_{1:t}}{\hat{\Lambda}_{1:t}\Lambda_{1:t}}\Delta\Lambda_{1:t} \Delta\mu_t\{(b_t - R_t)+\gamma(\tilde{V}_{t+1}- V_{t+1})\}\right]  \\
        &+\sum_{t=1}^{T-1}\gamma^t\E\left[\frac{ C_{1:t}}{\hat{\Lambda}_{1:t}\Lambda_{1:t}}\Delta\Lambda_{1:t}\Delta \mu_{t+1}(\Delta Q_{t+1}- \Delta\tilde{Q}_{t+1})\right] \\
        &+\sum_{t=1}^{T-1}\gamma^t\E\left[\frac{ C_{1:t}}{\hat{\Lambda}_{1:t}\Lambda_{1:t}}\Delta\Lambda_{1:t}\mu_{t+1}(\Delta Q_{t+1}- \Delta \tilde Q_{t+1})\right]  \\
        &+\sum_{t=1}^{T-1}\gamma^{t} \E\left[\frac{C_{1:t}}{\hat{\Lambda}_{1:t}\Lambda_{1:t}}\Delta\Lambda_{1:t} \Delta \mu_{t+1}(Q_{t+1}- \tilde Q_{t+1})\right] . 
    \end{align*}
\end{proof}

\begin{proof}[Proof of \Cref{lem-v-q-error}]

    \begin{align*}
        \|\Delta V_1^{(-k)}\|_{2,\pi_b}^2&=\E_{S_1\sim \mathbb{P}_0}[(\E_{A_1\sim \pi_e(\cdot|S_1)}[\Delta Q_1^{(-k)}(S_1, A_1) \mid S_1])^2\mid \mathcal{D}^{(-k)}] \\
        &\leq \E_{S_1\sim \mathbb{P}_0}[\E_{A_1\sim \pi_e(\cdot|S_1)}[(\Delta Q_1^{(-k)}(S_1, A_1))^2  \mid S_1]\mid \mathcal{D}^{(-k)}] \\
        & \qquad\qquad\qquad\qquad\qquad\qquad\qquad\qquad\qquad\qquad\qquad\qquad\qquad(\because\text{Jensen's inequality})\\
        &\leq \E_{S_1\sim \mathbb{P}_0}[\E_{A_1\sim \pi_b(\cdot|S_1)}[(\pi_e/\pi_b)(A_1\mid S_1)(\Delta Q_1^{(-k)}(S_1, A_1))^2  \mid S_1]\mid \mathcal{D}^{(-k)}] \\
        &\qquad\qquad\qquad\qquad\qquad\qquad\qquad\qquad\qquad\qquad\qquad\qquad\quad(\because\text{Importance sampling})\\
        &\leq \epsilon_b^{-1}\E_{S_1\sim\mathbb{P}_0}[\E_{A_1\sim\pi_b(\cdot\mid S_1)}[\Delta Q_1^{(-k)}(S_1, A_1))^2\mid S_1]\mid \mathcal{D}^{(-k)}] \\
        &\qquad\qquad\qquad\qquad\qquad\qquad\qquad\qquad\qquad\qquad\qquad\qquad\qquad\qquad(\because\text{\Cref{asn-seq_overlap}})\\
        &=\epsilon_b^{-1}\E_{(S_1, A_1)}[(\Delta Q_1^{(-k)}(S_1, A_1))^2\mid \mathcal{D}^{(-k)}] \\
        &\qquad\qquad\qquad\qquad\qquad\qquad\qquad\qquad\qquad\qquad\qquad\qquad\qquad\quad\  (\because\text{Tower property})\\
        &=\epsilon_b^{-1}\|\Delta Q_1^{(-k)}\|_{2, \pi_b}^2=o_p(1),
    \end{align*}
    \begin{align*}
        \|\Delta \tilde V_{t+1}^{(-k)}\|_{2,\pi_b}^2 &= \E_{\tilde S_{t+1}}\Bigl[\underbrace{\bigl(\E_{S_{t+1}} \bigl[\underbrace{\E_{A_{t+1}\sim\pi_e(\cdot \mid S_{t+1})}[\Delta Q_{t+1}^{(-k)}(S_{t+1}, A_{t+1}) \mid S_{t+1}]}_{=\Delta V_{t+1}^{(-k)}(S_{t+1})}\mid \tilde S_{t+1} \bigr]\bigr)^2}_{=(\Delta \tilde V_{t+1}^{(-k)}(\tilde S_{t+1}))^2} \mid \mathcal{D}^{(-k)}\Bigr] \\
        &\leq \E_{\tilde S_{t+1}}\Bigl[\E_{S_{t+1}}\bigl[\E_{A_{t+1}\sim\pi_e(\cdot \mid S_{t+1})}[(\Delta Q_{t+1}^{(-k)}(S_{t+1}, A_{t+1}))^2\mid S_{t+1}]\mid\tilde S_{t+1}\bigr]\mid \mathcal{D}^{(-k)}\Bigr] \\
        & \qquad\qquad\qquad\qquad\qquad\qquad\qquad\qquad\qquad\qquad\qquad\qquad\qquad\qquad(\because\text{Jensen's inequality}) \\
        &\leq \E_{\tilde S_{t+1}}\Bigl[\E_{S_{t+1}}\bigl[\E_{A_{t+1}\sim\pi_b(\cdot \mid S_{t+1})}[(\pi_e/\pi_b)(A_{t+1}\mid S_{t+1})(\Delta Q_{t+1}^{(-k)}(S_{t+1}, A_{t+1}))^2\mid S_{t+1}]\mid\tilde S_{t+1}\bigr]\mid \mathcal{D}^{(-k)}\Bigr]\\
        & \qquad\qquad\qquad\qquad\qquad\qquad\qquad\qquad\qquad\qquad\qquad\qquad\qquad\ \ \ (\because\text{Importance sampling}) \\
        &\leq \epsilon_{b}^{-1}\E_{\tilde S_{t+1}}\Bigl[\E_{S_{t+1}}[\E_{A_{t+1}\sim\pi_b(\cdot \mid S_{t+1})}[\Delta Q_{t+1}^{(-k)}(S_{t+1}, A_{t+1})^2 \mid S_{t+1}]\mid \tilde S_{t+1}]\mid \mathcal{D}^{(-k)}\Bigr] \\
        & \qquad\qquad\qquad\qquad\qquad\qquad\qquad\qquad\qquad\qquad\qquad\qquad\qquad\qquad\qquad\ (\because\text{\Cref{asn-seq_overlap}})\\
        &=\epsilon_b^{-1}\E_{(S_{t+1}, A_{t+1})}[(\Delta Q_{t+1}^{(-k)}(S_{t+1}, A_{t+1}))^2\mid \mathcal{D}^{(-k)}]\\
        &\qquad\qquad\qquad\qquad\qquad\qquad\qquad\qquad\qquad\qquad\qquad\qquad\qquad\qquad\qquad (\because\text{Tower property})\\
        &= \epsilon_b^{-1} \|\Delta Q_{t+1}^{(-k)}\|_{2,\pi_b}^{2}=o_p(1),
    \end{align*}
    and
    \begin{align*}
        \|\Delta\tilde Q_t^{(-k)}\|_{2,\pi_b}^{2}&=\E_{(\tilde S_t, A_t)}\big[\big(\E_{S_t}[\Delta Q_t^{(-k)}(S_t, A_t) \mid \tilde S_t, A_t]\big)^2\big] \\
        &\leq \E_{(\tilde S_t, A_t)}\big[\E_{S_t}\big[(\Delta Q_t^{(-k)}(S_t, A_t))^2 \mid \tilde S_t, A_t\big]\big] \qquad\qquad\qquad\qquad\ \ (\because\text{Jensen's inequality})\\
        &= \E_{(S_t,A_t)}\big[(\Delta Q_t^{(-k)}(S_t, A_t))^2\big] \qquad\qquad\qquad\qquad\qquad\qquad\qquad\qquad(\because\text{Tower property})\\
        &= \|\Delta Q_t^{(-k)}\|_{2,\pi_b}^{2}=o_p(1).
    \end{align*}
    \end{proof}
    
\subsection{Proof of \Cref{thm:clt_adaptive}} \label{pf-clt_adaptive}
\begin{proof}
    Recall that the batch-adaptive estimator $\hat\psi_{\mathrm{ad}}$ takes the form of
    \[
        \hat\psi_{\mathrm{ad}}=\frac{1}{n}\sum_{k=1}^{K}\sum_{(b,i)\in\mathcal{I}_k}\tilde\Gamma\left(O_{b,i};\hat{\bm\Lambda}^{(-k)}, \hat{\bm\eta}_m^{(-k)}\right), \quad \text{where} \quad \tilde\Gamma(O;\bm\Lambda,\bm\eta_m)=m_0+\sum_{t=1}^{T}\frac{C_{1:t}}{\Lambda_{1:t}}(m_t-m_{t-1}).
    \]
    In contrast to the non-adaptive estimator, the outcomes in each batch--fold are now drawn under a potentially different, (batch--fold)--dependent random annotation probability $\hat\lambda_{b,t}^{(k)}$, rather than a single non-random oracle $\lambda_t$ across all batches. We therefore need to compare the realized experiment to a counterfactual, non-adaptive experiment. For this reason, we define $\tilde O_{b,i}$ to denote the counterfactual outcome of unit $i$ in batch $b$ under the (nonrandom) oracle annotation policy $\lambda_{b,t}^{\ast}$, and set
    \[
        \varphi_{b,i}^{(k)} := \tilde\Gamma\big(O_{b,i};\hat{\bm\Lambda}^{(-k)}, \hat{\bm\eta}_m^{(-k)}\big)-\tilde\Gamma\big(\tilde O_{b,i}; \bm\Lambda^{\ast}, \bm\eta_m\big).
    \]
    For $b=1,\dots,M$ and $k=1,\dots,K$, let
    $\mathcal D_{1:b-1}^{(k)}:=\sigma\big(\{O_{u,i}:(u,i)\in\mathcal I_k,\ u<b\}\big)$ be the $\sigma$-algebra that contains every available data for experiment design in batch-$b$, fold-$k$. This includes the fold-$k$ data of all batches collected before batch $b$. By construction, $\hat{\bm\Lambda}_b^{(k)}$ is $\mathcal D_{1:b-1}^{(k)}$-measurable, and the batch-$b$, fold-$k$ trajectories are independent of
    $\mathcal D^{(-k)}$. Splitting the first term by centering each summand at its
    $\sigma(\mathcal D_{1:b-1}^{(k)}, \mathcal{D}^{(-k)})$-conditional mean, we obtain the three-term decomposition
    \begin{align*}
        \sqrt{n}(\hat \psi_{\mathrm{ad}} - \Phi^{\pi_e}) &=\underbrace{\sum_{k=1}^{K}\sum_{b=1}^{M}\sqrt{\frac{n_{b,k}}{n}}\left\{\frac{1}{\sqrt{n_{b,k}}}\sum_{i:(b, i)\in\mathcal{I}_k}  \left(\varphi_{b,i}^{(k)}-\E\big[\varphi_{b,i}^{(k)} \mid \mathcal D_{1:b-1}^{(k)}, \mathcal{D}^{(-k)} \big]\right)\right\}}_{\text{empirical process term}} \\
        &+\underbrace{\sum_{k=1}^{K}\sum_{b=1}^{M}\sqrt{\frac{n_{b,k}}{n}}\left\{ \frac{1}{\sqrt{n_{b,k}}}\sum_{i:(b, i)\in\mathcal{I}_k}\E\big[\varphi_{b,i}^{(k)} \mid \mathcal D_{1:b-1}^{(k)}, \mathcal{D}^{(-k)} \big]\right\}}_{\text{drift term}} \\
        &+\underbrace{\sum_{b=1}^{M}\sqrt{\frac{n_b}{n}}\left\{ \frac{1}{\sqrt{n_b}}\sum_{i\in\mathcal{D}_b}\left(\tilde\Gamma(\tilde O_{b,i}; \bm\Lambda^\ast, \bm\eta_m)-\Phi^{\pi_e}\right)\right\}}_{\text{oracle term}},
    \end{align*}
    where $\mathcal{D}_b$ denotes a set of data in batch-$b$ and $n_{b,k}=|\mathcal{D}_b^{(k)}|$ refers to the size of data in batch-$b$, fold-$k$, thus $\sum_{k=1}^{K}\sum_{b=1}^{M} n_{b,k}=n$. Structurally, the proof closely follows that of \cref{thm:clt_non-adaptive}. We establish the asymptotic behavior of the empirical process, drift, and oracle term in \cref{prop-emp-process-adaptive,prop-drift-adaptive,prop-oracle_adaptive}.

\end{proof}

\begin{proposition}[Empirical process term] 
    \label{prop-emp-process-adaptive}
    With $\varphi_{b,i}^{(k)}:=\tilde \Gamma(O_{b,i};\hat{\bm\Lambda}^{(-k)}, \hat{\bm\eta}_m^{(-k)})-\tilde \Gamma(\tilde O_{b,i};\bm\Lambda^\ast, \bm\eta_m)$ for $(b,i)\in \mathcal{I}_k$, the empirical process term is $o_p(1)$, i.e.,
    \[
        \sum_{k=1}^{K}\sum_{b=1}^{M}\sqrt{\frac{n_{b,k}}{n}}\left\{\frac{1}{\sqrt{n_{b,k}}}\sum_{i:(b, i)\in\mathcal{I}_k}  \left(\varphi_{b,i}^{(k)}-\E\big[\varphi_{b,i}^{(k)} \mid \mathcal D_{1:b-1}^{(k)}, \mathcal{D}^{(-k)} \big]\right)\right\}=o_p(1).
    \]
    
\end{proposition}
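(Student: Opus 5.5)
I will follow the template of \Cref{prop-emp-process-term-non-adaptive} and work one batch--fold pair $(b,k)$ at a time; there are finitely many pairs and $n_{b,k}/n$ is bounded. Fix $(b,k)$ and condition on $\mathcal G_{b,k}:=\sigma(\mathcal D_{1:b-1}^{(k)},\mathcal D^{(-k)})$. Given $\mathcal G_{b,k}$, several objects are fixed: the cross-fitted nuisances $\hat{\bm\Lambda}^{(-k)},\hat{\bm\eta}_m^{(-k)}$ (functions of $\mathcal D^{(-k)}$) and the design probabilities $\hat{\bm\Lambda}_b^{(k)}$ ($\mathcal D_{1:b-1}^{(k)}$-measurable). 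The batch-$b$, fold-$k$ units are drawn i.i.d.\ from the trajectory law, and their annotation indicators are drawn from this fixed design. I will realize $O_{b,i}$ and the counterfactual $\tilde O_{b,i}$ on a common probability space. The trajectory is shared, and the indicators are coupled through a common uniform: $C_{i,t}=\mathbf 1\{U_{i,t}\le \hat\lambda_{b,t}^{(k)}\}$ and $\tilde C_{i,t}=\mathbf 1\{U_{i,t}\le\lambda_{b,t}^\ast\}$, with $\lambda^\ast$ taken relative to the previous stage so that monotonicity is preserved. Under this coupling the $\varphi_{b,i}^{(k)}$ are conditionally i.i.d.\ given $\mathcal G_{b,k}$.

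The centered sum $n_{b,k}^{-1/2}\sum_i(\varphi_{b,i}^{(k)}-\E[\varphi_{b,i}^{(k)}\mid\mathcal G_{b,k}])$ then has conditional mean zero. Its conditional variance equals $\Var(\varphi_{b,i}^{(k)}\mid\mathcal G_{b,k})\le\E[(\varphi_{b,i}^{(k)})^2\mid\mathcal G_{b,k}]$. By conditional Chebyshev and bounded convergence, exactly as in \Cref{prop-emp-process-term-non-adaptive}, it suffices to show $\E[(\varphi_{b,i}^{(k)})^2\mid\mathcal G_{b,k}]=o_p(1)$. I split
\[
\varphi_{b,i}^{(k)}=\bigl\{\tilde\Gamma(O_{b,i};\hat{\bm\Lambda}^{(-k)},\hat{\bm\eta}_m^{(-k)})-\tilde\Gamma(O_{b,i};\bm\Lambda^\ast,\bm\eta_m)\bigr\}+\bigl\{\tilde\Gamma(O_{b,i};\bm\Lambda^\ast,\bm\eta_m)-\tilde\Gamma(\tilde O_{b,i};\bm\Lambda^\ast,\bm\eta_m)\bigr\}
\]
and bound each squared piece separately.

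The first piece is the nuisance-error part. I apply \Cref{lem:diff_decomposition} with $\Lambda$ replaced by $\Lambda^\ast$. Using the boundedness in \Cref{asn-bdd_estimators_adaptive,asn-seq_overlap,asn-bdd-reward}, the lower bounds $\hat\Lambda_{1:t}^{(-k)}\ge c_{\hat\Lambda}$ and $\Lambda^\ast_{1:t}\ge c_\lambda^t$, the reduction of \Cref{lem-v-q-error}, and $(\sum a_i)^2\le k\sum a_i^2$, its conditional second moment is $O$ of the sum of $\|\Delta\mu_t^{(-k)}\|_2^2$, $\|\Delta Q_t^{(-k)}\|_2^2$, $\|\Delta b_t^{(-k)}\|_2^2$, $\|\Delta\tilde Q_t^{(-k)}\|_2^2$ and $\|\hat\Lambda^{(-k)}_{1:t}-\Lambda_{1:t}^\ast\|_2^2$. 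The last term is at most $T\sum_j\|\hat\lambda_j^{(-k)}-\lambda_j^\ast\|_2^2$ by \Cref{lem:prod_comparison}. All of these are $o_p(1)$ by the consistency hypotheses of \Cref{thm:clt_adaptive}. One subtlety is that the $L_2$ norms are taken under the batch-$b$ annotation law rather than the oracle law. The indicator weights are bounded, however, so the norms are comparable up to constants.

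The second piece is the main obstacle, since it compares the realized and counterfactual designs. With fixed oracle nuisances it equals $\sum_t(C_{1:t}-\tilde C_{1:t})(m_t-m_{t-1})/\Lambda^\ast_{1:t}$, whose terms are bounded. Hence its conditional second moment is $O\bigl(\sum_t\E[|C_{1:t}-\tilde C_{1:t}|\mid\mathcal G_{b,k}]\bigr)$. Under the coupling, $\E[|C_{1:t}-\tilde C_{1:t}|\mid\mathcal F_{t-1},\mathcal G_{b,k}]\le\sum_{j\le t}|\hat\lambda_{b,j}^{(k)}-\lambda_{b,j}^\ast|$, so after Cauchy--Schwarz the bound is $O(\sum_j\|\hat\lambda_{b,j}^{(k)}-\lambda_{b,j}^\ast\|_2)=o_p(1)$. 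I expect the delicate step to be verifying that this coupling is consistent with \Cref{asn-anno-ignorability}, so that the $\tilde O$ marginals are the oracle design. This holds because the $U_{i,t}$ are independent of $\mathcal F_T$ and of $\mathcal G_{b,k}$. Combining the two pieces and summing over the finitely many $(b,k)$ yields the claim.
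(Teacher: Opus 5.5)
Your proposal is correct and follows essentially the paper's argument: conditional Chebyshev plus bounded convergence for each batch--fold pair, with $\varphi_{b,i}^{(k)}$ split into two pieces. One is a design-coupling piece controlled by $\sum_{j\le t}|\hat\lambda_{b,j}^{(k)}-\lambda_{b,j}^\ast|$; the other is a nuisance-error piece reduced to the non-adaptive bound of \Cref{prop-emp-process-term-non-adaptive}. The differences are cosmetic: you telescope through $\tilde\Gamma(O_{b,i};\bm\Lambda^\ast,\bm\eta_m)$ rather than the paper's $\tilde\Gamma(\tilde O_{b,i};\hat{\bm\Lambda}^{(-k)},\hat{\bm\eta}_m^{(-k)})$, so you need bounded oracle increments instead of bounded fitted ones, and you couple via per-stage uniforms instead of the paper's identity $\E|C_{i,1:t}-\tilde C_{i,1:t}|=\E|\hat\Lambda_{b,1:t}^{(k)}-\Lambda_{b,1:t}^\ast|$.
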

\begin{proof}[Proof of \cref{prop-emp-process-adaptive}]
    For any batch-$b$ and fold-$k$, it suffices to show
    \begin{equation} \label{eqn:pf_cvg_prob_adaptive}
        \lim_{n_{b,k}\to \infty} P\left[\bigg\vert \frac{1}{\sqrt{n_{b,k}}}\sum_{i:(b, i)\in\mathcal{I}_k}\left(\varphi_{b,i}^{(k)}-\E[\varphi_{b,i}^{(k)}\mid \mathcal{D}_{1:b-1}^{(k)}, \mathcal{D}^{(-k)}]\right) \bigg\vert > \epsilon \Bigg \vert \mathcal{D}_{1:b-1}^{(k)}, \mathcal{D}^{(-k)} \right]=0,
    \end{equation}
    because once the above statement holds, the bounded convergence theorem implies that
    \[
        \lim_{n_{b,k}\to \infty}\mathbb{P}\left[\bigg\vert \frac{1}{\sqrt{n_{b,k}}}\sum_{i:(b, i)\in\mathcal{I}_k}\left(\varphi_{b,i}^{(k)}-\E[\varphi_{b,i}^{(k)}\mid \mathcal{D}_{1:b-1}^{(k)},\mathcal{D}^{(-k)}]\right) \bigg\vert > \epsilon \right]=0,
    \]
    or equivalently,
    \[
        \frac{1}{\sqrt{n_{b,k}}}\sum_{i:(b, i)\in\mathcal{I}_k}\left(\varphi_{b,i}^{(k)}-\E[\varphi_{b,i}^{(k)}\mid \mathcal{D}_{1:b-1}^{(k)}, \mathcal{D}^{(-k)}]\right) =o_p(1).
    \]
    Then it follows that
    \[
        \sum_{k=1}^{K}\sum_{b=1}^{M}\sqrt{\frac{n_{b,k}}{n}}\left\{ \frac{1}{\sqrt{n_{b,k}}}\sum_{i:(b, i)\in\mathcal{I}_k}  \left(\varphi_{b,i}^{(k)}-\E[\varphi_{b,i}^{(k)} \mid \mathcal{D}_{1:b-1}^{(k)},\mathcal{D}^{(-k)}]\right)\right\} = o_p(1),
    \]
    since the number of batches and folds are finite (i.e., $K<\infty, M<\infty$ and $\sqrt{n_{b,k}/n}\to\sqrt{\kappa_b/K}$ as $n_{b,k}\to\infty$).
    To show the requirement \eqref{eqn:pf_cvg_prob_adaptive}, we proceed by deriving the upper bound of probability using Chebyshev's inequality, and show that the upper bound is $o_p(1)$. With that, we start with the conditional mean. Note that the conditional mean is zero:
    \[
        \E\left[\frac{1}{\sqrt{n_{b,k}}}\sum_{i:(b, i)\in\mathcal{I}_k}\left(\varphi_{b,i}^{(k)}-\E[\varphi_{b,i}^{(k)}\mid \mathcal{D}_{1:b-1}^{(k)}, \mathcal{D}^{(-k)}]\right) \bigg\vert \mathcal{D}_{1:b-1}^{(k)}, \mathcal{D}^{(-k)}\right]=0.
    \]
    Next, we consider the conditional variance. Note that the conditional variance can be simplified as
    \begin{align*}
        &\Var\left(\frac{1}{\sqrt{n_{b,k}}}\sum_{i:(b, i)\in\mathcal{I}_k}\left(\varphi_{b,i}^{(k)}-\E[\varphi_{b,i}^{(k)} \mid \mathcal{D}_{1:b-1}^{(k)}, \mathcal{D}^{(-k)}]\right)\bigg\vert \mathcal{D}_{1:b-1}^{(k)}, \mathcal{D}^{(-k)}\right)\\
        &\qquad =\frac{1}{n_{b,k}}\sum_{i:(b, i)\in\mathcal{I}_k}\Var\left(\varphi_{b,i}^{(k)} \big \vert \mathcal{D}_{1:b-1}^{(k)}, \mathcal{D}^{(-k)}\right) =\frac{1}{n_{b,k}}\sum_{i:(b, i)\in\mathcal{I}_k}\E\left[(\varphi_{b,i}^{(k)})^2 \mid \mathcal{D}_{1:b-1}^{(k)}, \mathcal{D}^{(-k)}\right],
    \end{align*}
    where the first equality is due to the fact that every $\varphi_{b,i}^{(k)}$ for $(b,i)\in\mathcal{I}_k$ is independent given $\mathcal{D}_{1:b-1}^{(k)}$ and $\mathcal{D}^{(-k)}$, and the second equality is due to the fact that the conditional mean of $\varphi_{b,i}^{(k)}$ is zero. Provided that $\varphi_{b,i}^{(k)}$ decomposes as
    \[
        \varphi_{b,i}^{(k)}=\left(\tilde\Gamma(O_{b,i};\hat{\bm\Lambda}^{(-k)},\hat{\bm\eta}_m^{(-k)})-\tilde\Gamma(\tilde O_{b,i};\hat{\bm\Lambda}^{(-k)},\hat{\bm\eta}_m^{(-k)})\right)+\left(\tilde\Gamma(\tilde O_{b,i};\hat{\bm\Lambda}^{(-k)},\hat{\bm\eta}_m^{(-k)})-\tilde\Gamma(\tilde O_{b,i};\bm\Lambda^\ast,\bm\eta_m)\right),
    \]
    one can decompose the conditional second moment of $\varphi_{b,i}^{(k)}$ as follows using $(a+b)^2 \leq 2(a^2+b^2)$:
    \begin{align*}
        \E\left[(\varphi_{b,i}^{(k)})^2 \mid \mathcal{D}_{1:b-1}^{(k)}, \mathcal{D}^{(-k)}\right]&\leq 2\E\left[\left(\tilde\Gamma(O_{b,i};\hat{\bm\Lambda}^{(-k)},\hat{\bm\eta}_m^{(-k)})-\tilde\Gamma(\tilde O_{b,i};\hat{\bm\Lambda}^{(-k)},\hat{\bm\eta}_m^{(-k)}\right)^2 \mid \mathcal{D}_{1:b-1}^{(k)},\mathcal{D}^{(-k)}\right] \\
        &+2\E\left[\left(\tilde\Gamma(\tilde O_{b,i};\hat{\bm\Lambda}^{(-k)},\hat{\bm\eta}_m^{(-k)})-\tilde\Gamma(\tilde O_{b,i}; \bm\Lambda^\ast, \bm\eta_m)\right)^2 \mid \mathcal{D}_{1:b-1}^{(k)}, \mathcal{D}^{(-k)} \right].
    \end{align*}
    We show that both terms are $o_p(1)$.\\
    
    \noindent When it comes to the first term, note that
    \begin{align*}
        &\E\left[\left(\tilde\Gamma(O_{b,i};\hat{\bm\Lambda}^{(-k)},\hat{\bm\eta}_m^{(-k)})-\tilde\Gamma(\tilde O_{b,i};\hat{\bm\Lambda}^{(-k)},\hat{\bm\eta}_m^{(-k)}\right)^2 \mid \mathcal{D}_{1:b-1}^{(k)},\mathcal{D}^{(-k)}\right] \\
        &\qquad =\E\left[\left(\sum_{t=1}^{T}\left(\frac{C_{i,1:t}-\tilde C_{i,1:t}}{\hat\Lambda_{1:t}^{(-k)}}\right)(\hat m_t^{(-k)}-\hat m_{t-1}^{(-k)})\right)^2\bigg\vert \mathcal{D}_{1:b-1}^{(k)},\mathcal{D}^{(-k)}\right] \\
        &\qquad \leq T\sum_{t=1}^{T}\E\left[\left(\frac{C_{i,1:t}-\tilde C_{i,1:t}}{\hat\Lambda_{1:t}^{(-k)}}\right)^2(\hat m_t^{(-k)}-\hat m_{t-1}^{(-k)})^2 \bigg\vert \mathcal{D}_{1:b-1}^{(k)},\mathcal{D}^{(-k)}\right] \\
        &\qquad \leq\sum_{t=1}^{T}\mathcal{O}\left(\E\left[|\hat \Lambda_{b,1:t}^{(k)}-\Lambda_{b,1:t}^\ast| \big\vert \mathcal{D}_{1:b-1}^{(k)},\mathcal{D}^{(-k)}\right]\right) \\
        &\qquad \leq \sum_{t=1}^{T}\sum_{j\leq t} \mathcal{O}\left(\E\left[|\hat\lambda_{b,j}^{(k)}-\lambda_{b,t}^\ast|\mid \mathcal{D}_{1:b-1}^{(k)}, \mathcal{D}^{(-k)}\right]\right)=o_p(1),
    \end{align*}
    where the first inequality holds by applying the Cauchy--Schwarz inequality over the sum. The second inequality follows from bounding $(\hat m_t^{(-k)}-\hat m_{t-1}^{(-k)})$ and $\hat\Lambda_{1:t}^{(-k)}$ using the bounded estimators \cref{asn-bdd_estimators_adaptive}, followed by the equation
    \begin{align*}
        \E[(C_{i,1:t}-\tilde C_{i,1:t})^2 \mid \mathcal{D}_{1:b-1}^{(k)},\mathcal{D}^{(-k)}]
        &=\E\big[\E[(C_{i,1:t}-\tilde C_{i,1:t})^2 \mid \mathcal{F}_{i,t}, \mathcal{D}_{1:b-1}^{(k)},\mathcal{D}^{(-k)}]\mid  \mathcal{D}_{1:b-1}^{(k)},\mathcal{D}^{(-k)}\big]\\
        &=\E\big[|\hat \Lambda_{b,1:t}^{(k)}-\Lambda_{b,1:t}^{\ast}| \mid \mathcal{D}_{1:b-1}^{(k)}, \mathcal{D}^{(-k)}\big].
    \end{align*}
    The last inequality is due to the product-comparison \cref{lem:prod_comparison}) and the final equality holds by the rate \cref{asn-rate}.\\

    \noindent Next, we bound the second term. Note that $\tilde\Gamma(\tilde O_{b,i};\hat{\bm\Lambda}^{(-k)},\hat{\bm\eta}_m^{(-k)}) - \tilde\Gamma(\tilde O_{b,i};\bm\Lambda^\ast,\bm\eta_m)$ decomposes as (see \cref{lem:diff_decomposition})
    \begin{align*}
        \tilde\Gamma(\tilde O_{b,i};\hat{\bm\Lambda}^{(-k)},\hat{\bm\eta}_m^{(-k)})-\tilde\Gamma(\tilde O_{b,i};\bm\Lambda^\ast,\bm\eta_m)=\Delta m_0^{(-k)}&-\sum_{t=1}^{T}\frac{\tilde C_{i,1:t}}{\hat\Lambda_{1:t}^{(-k)}\Lambda_{1:t}^\ast}\Delta\Lambda_{1:t}^{(-k)}(\hat m_t^{(-k)}-\hat m_{t-1}^{(-k)})\\
        &-\sum_{t=1}^{T}\frac{\tilde C_{i,1:t}}{\Lambda_{1:t}^\ast}\Delta(m_t^{(-k)}-m_{t-1}^{(-k)}).
    \end{align*}
    Then, by applying $(a+b+c)^2\leq 3(a^2+b^2+c^2)$ and the Cauchy-Schwarz inequality, we have
    \begin{align*}
        &\E\left[\left(\tilde\Gamma(\tilde O_{b,i};\hat{\bm\Lambda}^{(-k)},\hat{\bm\eta}_m^{(-k)})-\tilde\Gamma(\tilde O_{b,i}; \bm\Lambda^\ast, \bm\eta_m)\right)^2 \mid \mathcal{D}_{1:b-1}^{(k)}, \mathcal{D}^{(-k)} \right] \\
        &\qquad \le 3\E\left[(\Delta m_0^{(-k)})^2 \mid \mathcal{D}_{1:b-1}^{(k)}, \mathcal{D}^{(-k)}\right] \\
        &\qquad +3T\sum_{t=1}^{T}\E\left[\left(\frac{\tilde C_{i,1:t}}{\hat\Lambda_{1:t}^{(-k)}\Lambda_{1:t}^\ast}\Delta\Lambda_{1:t}^{(-k)}(\hat m_{t}^{(-k)}-\hat m_{t-1}^{(-k)})\right)^2 \mid \mathcal{D}_{1:b-1}^{(k)}, \mathcal{D}^{(-k)}\right] \\
        &\qquad + 3T\sum_{t=1}^{T}\E\left[\left(\frac{\tilde C_{i,1:t}}{\Lambda_{1:t}^\ast}\Delta(m_t^{(-k)}-m_{t-1}^{(-k)})\right)^2 \mid \mathcal{D}_{1:b-1}^{(-k)}, \mathcal{D}^{(-k)}\right].
    \end{align*}
    Since $\hat \Lambda_{1:t}^{(-k)}, \Lambda_{1:t}^\ast$ and $\hat m_t^{(-k)}-\hat m_{t-1}^{(-k)}$ are bounded by \cref{asn-seq_overlap,asn-bdd_estimators_adaptive}, one can simplify above as
    \begin{align*}
        &\E\left[\left(\tilde\Gamma(\tilde O_{b,i};\hat{\bm\Lambda}^{(-k)},\hat{\bm\eta}_m^{(-k)})-\tilde\Gamma(\tilde O_{b,i}; \bm\Lambda^\ast, \bm\eta_m)\right)^2 \mid \mathcal{D}_{1:b-1}^{(k)}, \mathcal{D}^{(-k)} \right] \\
        &\qquad = \mathcal{O}\left(\E\left[(\Delta m_0^{(-k)})^2+\sum_{t=1}^{T}(\Delta \Lambda_{1:t}^{(-k)})^2 +\sum_{t=1}^{T}(\Delta(m_t^{(-k)}-m_{t-1}^{(-k)}))^2 \mid \mathcal{D}_{1:b-1}^{(k)}, \mathcal{D}^{(-k)}\right]\right),
    \end{align*}
    which is the same reduction that we obtained in the empirical process term proof of the non-adaptive estimator (\cref{prop-emp-process-term-non-adaptive}). The same argument applies to the proof, just by additionally conditioning on $\mathcal{D}_{1:b-1}^{(k)}$. Therefore, we omit the proof and conclude that it is $o_p(1)$. As a result, the conditional second moment of $\varphi_{b,i}^{(k)}$ is $o_p(1)$. Then it follows that 
    \begin{align*}
        &\Var\left(\frac{1}{\sqrt{n_{b,k}}}\sum_{i:(b, i)\in\mathcal{I}_k}\left(\varphi_{b,i}^{(k)}-\E[\varphi_{b,i}^{(k)} \mid \mathcal{D}_{1:b-1}^{(k)}, \mathcal{D}^{(-k)}]\right)\bigg\vert \mathcal{D}_{1:b-1}^{(k)}, \mathcal{D}^{(-k)}\right)\\ 
        &\qquad = \frac{1}{n_{b,k}}\sum_{i:(b, i)\in\mathcal{I}_k}\E\left[(\varphi_{b,i}^{(k)})^2 \mid \mathcal{D}_{1:b-1}^{(k)}, \mathcal{D}^{(-k)}\right] = o_p(1),
    \end{align*}
    and by conditional Chebyshev's inequality, 
    \begin{align*}
        &P\left[\bigg\vert \frac{1}{\sqrt{n_{b,k}}}\sum_{i:(b, i)\in\mathcal{I}_k}\left(\varphi_{b,i}^{(k)}-\E[\varphi_{b,i}^{(k)}\mid \mathcal{D}_{1:b-1}^{(k)}, \mathcal{D}^{(-k)}]\right) \bigg\vert > \epsilon \Bigg \vert \mathcal{D}_{1:b-1}^{(k)}, \mathcal{D}^{(-k)} \right] \\
        &\qquad \leq \frac{\Var\left(\frac{1}{\sqrt{n_{b,k}}}\sum_{i:(b, i)\in \mathcal{I}_k}\left(\varphi_{b,i}^{(k)}-\E[\varphi_{b,i}^{(k)}\mid \mathcal{D}_{1:b-1}^{(k)},\mathcal{D}^{(-k)}]\right)\mid \mathcal{D}_{1:b-1}^{(k)}, \mathcal{D}^{(-k)}\right)}{\epsilon^2}=o_p(1),
    \end{align*}
    which is the conditional statement \eqref{eqn:pf_cvg_prob_adaptive} required above. This concludes the proof.

\end{proof}

\begin{proposition}[Drift term] \label{prop-drift-adaptive}
With $\varphi_{b,i}^{(k)}:=\tilde \Gamma(O_{b,i};\hat{\bm\Lambda}^{(-k)}, \hat{\bm\eta}_m^{(-k)})-\tilde \Gamma(\tilde O_{b,i};\bm\Lambda^\ast, \bm\eta_m)$ for $(b,i)\in \mathcal{I}_k$, the drift term is $o_p(1)$, i.e,
\[
    \sum_{k=1}^{K}\sum_{b=1}^{M}\sqrt{\frac{n_{b,k}}{n}}\left\{\frac{1}{\sqrt{n_{b,k}}}\sum_{i:(b,i)\in\mathcal{I}_k}\E\left[\varphi_{b,i}^{(k)} \mid \mathcal{D}_{1:b-1}^{(k)},\mathcal{D}^{(-k)}\right]\right\}=o_p(1).
\]
\end{proposition}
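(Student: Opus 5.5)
The plan is to reduce the adaptive drift term to the non-adaptive drift analysis of \Cref{prop-drift-term-non-adaptive}, applied batch by batch and fold by fold. Since $K$ and $M$ are finite and $\sqrt{n_{b,k}/n}$ is bounded, it suffices to show, for each fixed $(b,k)$, that
\[
\sqrt{n_{b,k}}\,\E\big[\varphi_{b,i}^{(k)}\mid\mathcal D_{1:b-1}^{(k)},\mathcal D^{(-k)}\big]=o_p(1).
\]
This conditional mean does not depend on $i$. The reason is that, given $\mathcal D_{1:b-1}^{(k)}$ and $\mathcal D^{(-k)}$, the trajectories in batch $b$, fold $k$ are i.i.d. Moreover, the design $\hat{\bm\Lambda}_b^{(k)}$ and the nuisances $(\hat{\bm\Lambda}^{(-k)},\hat{\bm\eta}_m^{(-k)})$ are fixed under this conditioning.

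I would split $\varphi_{b,i}^{(k)}$ into two pieces:
\[
\big[\tilde\Gamma(O_{b,i};\hat{\bm\Lambda}^{(-k)},\hat{\bm\eta}_m^{(-k)})-\tilde\Gamma(O_{b,i};\hat{\bm\Lambda}_b^{(k)},\bm\eta_m)\big]
\quad\text{and}\quad
\big[\tilde\Gamma(O_{b,i};\hat{\bm\Lambda}_b^{(k)},\bm\eta_m)-\tilde\Gamma(\tilde O_{b,i};\bm\Lambda^\ast,\bm\eta_m)\big].
\]
The second piece has conditional mean zero. Under annotation ignorability (\Cref{asn-anno-ignorability}), both terms are unbiased for $\Phi^{\pi_e}$ by \Cref{prop:seq-annotation-unbiased}. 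For the realized $O_{b,i}$ this uses the \emph{actual} design probabilities $\hat{\bm\Lambda}_b^{(k)}$, and for the counterfactual $\tilde O_{b,i}$ it uses $\bm\Lambda^\ast$. Each design is measurable with respect to the conditioning $\sigma$-field.

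The first piece is then exactly the object of \Cref{lem:mean_diff_decomposition}, with the true design $\hat{\bm\Lambda}_b^{(k)}$ in place of $\bm\Lambda$ and $\hat{\bm\Lambda}^{(-k)}$ as the estimate. Taking the conditional expectation removes the first-order terms, by telescoping and by the martingale property $\E[m_t-m_{t-1}\mid\mathcal F_{t-1}]=0$. What remains is the seven product-error lines of \eqref{eqn:drift_mean}:
\begin{itemize}
\item a DRL product $\Delta\mu_t(\gamma\Delta V_{t+1}-\Delta Q_t)$;
\item terms of the form $(\hat\Lambda_{1:t}^{(-k)}-\hat\Lambda_{b,1:t}^{(k)})/(\hat\Lambda^{(-k)}_{1:t}\hat\Lambda^{(k)}_{b,1:t})$ times $\Delta b_t$, $\Delta V_{t+1}$, $\Delta\tilde V_{t+1}$, $\Delta Q_{t+1}$, $\Delta\tilde Q_{t+1}$, $\Delta\mu_t$, or $\Delta\mu_{t+1}$.
\end{itemize}

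Each line is bounded as in the proof of \Cref{prop-drift-term-non-adaptive}. I would peel off bounded factors using \Cref{asn-seq_overlap,asn-bdd_estimators_adaptive}, apply Cauchy--Schwarz, reduce $V$ and $\tilde Q$ errors to $Q$ errors via \Cref{lem-v-q-error}, and reduce $\Lambda$-product errors to sums of $\lambda$ errors via \Cref{lem:prod_comparison}.

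The one new ingredient is the annotation mismatch. I would write it with the triangle inequality as
\[
\hat\Lambda^{(-k)}_{1:t}-\hat\Lambda^{(k)}_{b,1:t}=(\hat\Lambda^{(-k)}_{1:t}-\Lambda^\ast_{1:t})-(\hat\Lambda^{(k)}_{b,1:t}-\Lambda^\ast_{1:t}).
\]
The first part is controlled by (R2b) and the second by (R2a) of \Cref{asn-rate_adaptive}; the DRL line is controlled by (R1). After multiplying by $\sqrt{n_{b,k}}\le\sqrt n$, every line is $o_p(1)$.

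The main obstacle is the measure-theoretic step that justifies the conditional unbiasedness of the second piece and the use of \Cref{lem:mean_diff_decomposition} when the design is itself random. One must check that $\hat{\bm\Lambda}_b^{(k)}$ depends only on $\mathcal D_{1:b-1}^{(k)}$, so that conditional on it the batch-$b$ units follow a fixed forward-monotone design satisfying \Cref{asn-anno-ignorability}. One must also check that the $L_2$ norms in (R2a)--(R2b) are taken conditionally on the same $\sigma$-field, so that the Cauchy--Schwarz bounds hold conditionally. I would handle this by conditioning first and treating all estimated quantities as deterministic functions, exactly as in the non-adaptive proof.
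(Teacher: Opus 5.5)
The failing step is your ``one new ingredient.'' You split $\hat\Lambda^{(-k)}_{1:t}-\hat\Lambda^{(k)}_{b,1:t}$ through a single $\Lambda^\ast_{1:t}$ and charge $\hat\Lambda^{(k)}_{b,1:t}-\Lambda^\ast_{1:t}$ to (R2a). But (R2a), like the hypothesis $\|\hat\lambda^{(k)}_{b,t}-\lambda^\ast_{b,t}\|_2=o_p(1)$ of \Cref{thm:clt_adaptive}, measures the batch design against its \emph{batch-specific} limit $\Lambda^\ast_{b,1:t}$. Under the mixture design that limit is not the pooled limit of $\hat\Lambda^{(-k)}_{1:t}$: $\Lambda^\ast_{1:t}=\sum_b\kappa_b\Lambda^\ast_{b,1:t}$ (the identity used in \Cref{prop-oracle_adaptive}), with $\Lambda^\ast_{1,1:t}=p$ on the pilot batch.

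Hence $\hat\Lambda^{(-k)}_{1:t}-\hat\Lambda^{(k)}_{b,1:t}\to\Lambda^\ast_{1:t}-\Lambda^\ast_{b,1:t}\neq0$. This holds even on the pilot batch, where (R2a) is trivially satisfied. The lines of \eqref{eqn:drift_mean} that you treat as products, e.g.\ $\gamma^{t-1}\E[\{(\hat\Lambda^{(-k)}_{1:t}-\hat\Lambda^{(k)}_{b,1:t})/\hat\Lambda^{(-k)}_{1:t}\}\mu_t\Delta b^{(-k)}_t]$, are therefore first order in $\Delta b^{(-k)}_t$. With only $\|\Delta b^{(-k)}_t\|_2=o_p(1)$ assumed, $\sqrt{n_{b,k}}$ times them need not vanish.

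This is not slack in your bound. Your second piece has conditional mean exactly zero, so your first piece is the per-$(b,k)$ drift itself. Conditionally on $\mathcal D^{(k)}_{1:b-1},\mathcal D^{(-k)}$ it equals $\E[\hat m_T-m_T]+\sum_t\E[(\hat\Lambda^{(k)}_{b,1:t}/\hat\Lambda^{(-k)}_{1:t}-1)g_t]$, where $g_t:=\E[\hat m^{(-k)}_t-\hat m^{(-k)}_{t-1}\mid\mathcal F_{t-1}]$ is first order. For $T=1$ and a pilot with $p=1$, the second term is $\E[(1-1/\hat\Lambda^{(-k)}_1)\hat\mu_1\Delta b^{(-k)}_1]$. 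So your opening reduction to each fixed $(b,k)$ asks for something false in general. It holds only when every batch targets the pooled design, which is not the pilot-plus-mixture scheme.

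The repair is to aggregate over batches before bounding, which the statement permits. Within fold $k$, $\hat{\bm\eta}^{(-k)}_m$ and $\hat{\bm\Lambda}^{(-k)}$, and hence $g_t$, are shared by all batches. The fold-$k$ part of the drift is therefore $\sqrt n\,(n_k/n)\E[\hat m_T-m_T]$, which (R1) handles as you propose, plus $\sqrt n\sum_t\E[\{\sum_b(n_{b,k}/n)\hat\Lambda^{(k)}_{b,1:t}/\hat\Lambda^{(-k)}_{1:t}-n_k/n\}g_t]$. By the mixture identity, the braces equal $(n_k/n)\{\sum_b\kappa_b(\hat\Lambda^{(k)}_{b,1:t}-\Lambda^\ast_{b,1:t})-(\hat\Lambda^{(-k)}_{1:t}-\Lambda^\ast_{1:t})+\sum_b(n_{b,k}/n_k-\kappa_b)\hat\Lambda^{(k)}_{b,1:t}\}/\hat\Lambda^{(-k)}_{1:t}$. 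Cauchy--Schwarz with \Cref{lem:prod_comparison} then yields exactly the (R2a) and (R2b) products. The proportion fluctuation is at most $O_p(n^{-1/2})$ and multiplies $\|g_t\|_2=o_p(1)$.

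For comparison, the paper's proof also reduces to single $(b,k)$ pairs. Its coupling identity $\E[C_{i,1:t}-\tilde C_{i,1:t}\mid\mathcal D^{(k)}_{1:b-1},\mathcal D^{(-k)}]=\hat\Lambda^{(k)}_{b,1:t}-\Lambda^\ast_{b,1:t}$ does measure the design error against the correct batch limit, so it avoids your misattribution of (R2a). However, the telescoping it borrows from \Cref{lem:mean_diff_decomposition} weights the nuisance increments by $\tilde C_{1:t}/\Lambda^\ast_{1:t}$, whose conditional mean is $\Lambda^\ast_{b,1:t}/\Lambda^\ast_{1:t}\neq1$. That leaves the same residual $\sum_t\E[(\Lambda^\ast_{b,1:t}/\Lambda^\ast_{1:t}-1)g_t]$, which also cancels only after averaging over $b$.
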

\begin{proof}[Proof of \cref{prop-drift-adaptive}]
    For any batch $b$ and fold $k$, it suffices to show
    \[
        \frac{1}{\sqrt{n_{b,k}}}\sum_{i:(b, i)\in\mathcal{I}_k}\E\left[\varphi_{b,i}^{(-k)} \mid \mathcal{D}_{1:b-1}^{(k)},\mathcal{D}^{(-k)}\right]=o_p(1),
    \]
    since the same analysis and conclusion hold for every other batch--fold pair, provided the numbers of batches and folds are finite (i.e., $M<\infty$, $K<\infty$) and $\sqrt{n_{b,k}/n}\to\sqrt{\kappa_b/K}$ as $n_{b,k}\to\infty$.\\
    
    \noindent To show this, we decompose $\E[\tilde\Gamma(O_{b,i};\hat{\bm\Lambda}^{(-k)},\hat{\bm\eta}_m^{(-k)})-\tilde\Gamma(\tilde O_{b,i};\bm\Lambda^\ast,\bm\eta_m)]$ following the same steps as in \cref{lem:mean_diff_decomposition}. The only difference arises in the annotation-probability error term. Since we are now comparing the realized outcome $O$ against the counterfactual outcome $\tilde O$, the weight difference $C_{1:t}/\hat\Lambda_{1:t}^{(-k)}-\tilde C_{1:t}/\Lambda_{1:t}^\ast$ takes the place of the difference $C_{1:t}/\hat\Lambda_{1:t}^{(-k)}-C_{1:t}/\Lambda_{1:t}^\ast$ used there. i.e,
    \[
        \text{(non-adaptive)}\quad -\frac{C_{1:t}}{\hat\Lambda_{1:t}^{(-k)}\Lambda_{1:t}^\ast}\Delta\Lambda_{1:t}^{(-k)}=\frac{C_{1:t}}{\hat\Lambda_{1:t}^{(-k)}}-\frac{C_{1:t}}{\Lambda_{1:t}^\ast}  \quad \Longrightarrow \quad \frac{C_{1:t}}{\hat\Lambda_{1:t}^{(-k)}}-\frac{\tilde C_{1:t}}{\Lambda_{1:t}^\ast} \quad \text{(batch-adaptive)}
    \]
    Even with this modification, one can reach the same conclusion under one additional annotation product error rate assumption. This follows from the decomposition
    \[
        \frac{C_{1:t}}{\hat\Lambda_{1:t}^{(-k)}}-\frac{\tilde C_{1:t}}{\Lambda_{1:t}^\ast}
        =\frac{C_{1:t}-\tilde C_{1:t}}{\hat\Lambda_{1:t}^{(-k)}}
        +\tilde C_{1:t}\left(\frac{\Lambda_{1:t}^\ast-\hat\Lambda_{1:t}^{(-k)}}{\hat{\Lambda}_{1:t}^{(-k)}\Lambda_{1:t}^\ast}\right),
    \]
    which yields the annotation error term as follows: $\forall i\in(b,i)\in\mathcal{I}_k$,
    \begin{align*}
        &\left\vert\E\left[\frac{C_{i,1:t}}{\hat\Lambda_{1:t}^{(-k)}}-\frac{\tilde C_{i,1:t}}{\Lambda_{1:t}^\ast}\mid \mathcal{D}_{1:b-1}^{(k)}, \mathcal{D}^{(-k)}\right]\right\vert  \\
        &\qquad \leq \left\vert\E\left[\frac{C_{i,1:t}-\tilde C_{i,1:t}}{\hat\Lambda_{1:t}^{(-k)}} \mid \mathcal{D}_{1:b-1}^{(k)},\mathcal{D}^{(-k)}\right]\right\vert+\left\vert\E\left[\tilde C_{i,1:t}\left(\frac{\Lambda_{1:t}^\ast-\hat\Lambda_{1:t}^{(-k)}}{\hat\Lambda_{1:t}^{(-k)}\Lambda_{1:t}^\ast}\right)\mid \mathcal{D}_{1:b-1}^{(k)},\mathcal{D}^{(-k)}\right]\right\vert \\
        &\qquad=\mathcal{O}\left(\left\vert\E\left[\hat\Lambda_{b,1:t}^{(k)}-\Lambda_{b,1:t}^\ast \mid \mathcal{D}_{1:b-1}^{(k)},\mathcal{D}^{(-k)}\right]\right\vert+\left\vert\E\left[\hat\Lambda_{1:t}^{(-k)}-\Lambda_{1:t}^\ast\mid \mathcal{D}_{1:b-1}^{(k)},\mathcal{D}^{(-k)}\right]\right\vert\right) \\
        &\qquad =\mathcal{O}\left(\left\vert\E\left[\Delta\Lambda_{b,1:t}^{(k)}\mid \mathcal{D}_{1:b-1}^{(k)},\mathcal{D}^{(-k)}\right]\right\vert+\left\vert\E\left[\Delta\Lambda_{1:t}^{(-k)}\mid \mathcal{D}_{1:b-1}^{(k)},\mathcal{D}^{(-k)}\right]\right\vert\right),
    \end{align*}
    where the first equality follows from three facts: (i) $\hat\Lambda_{1:t}^{(-k)}\ge c_{\hat\lambda}^t$ and $\Lambda_{1:t}^\ast\ge c_\lambda^t$ (\cref{asn-seq_overlap,asn-bdd_estimators_adaptive}), used to bound $1/\hat\Lambda_{1:t}^{(-k)}$ and $1/(\hat\Lambda_{1:t}^{(-k)}\Lambda_{1:t}^\ast)$ above by a constant; (ii) $|\tilde C_{i,1:t}|\le1$; and (iii) the identity $\E[C_{i,1:t}-\tilde C_{i,1:t}\mid\mathcal D_{1:b-1}^{(k)},\mathcal D^{(-k)}]=\hat\Lambda_{b,1:t}^{(k)}-\Lambda_{b,1:t}^\ast$ from the coupling construction. This introduces an additional annotation-probability estimation error, $\Delta\hat\Lambda_{b,1:t}^{(k)}$, that does not appear in the drift term analysis of the non-adaptive estimator, necessitating the additional product error rate assumption
    \[
        \sum_{j\leq t}\|\Delta\hat\lambda_{b,j}^{(k)}\|_2\left(\|\Delta b_t^{(-k)}\|_2+\|\Delta Q_{t+1}^{(-k)}\|_2+\|\Delta \mu_t^{(-k)}\|_2+\|\Delta \mu_{t+1}^{(-k)}\|_2\right)=o_p(1)
    \]
    for all $1\leq b\leq M$, $1\le k\leq K$, and $1\leq t\leq T$. We omit the remaining proof to avoid redundancy and conclude that the drift term is $o_p(1)$.
\end{proof}

\begin{proposition}[Oracle term] \label{prop-oracle_adaptive}
The oracle term converges to $\mathcal{N}(0, \sigma^2)$ in distribution, where $\sigma_{\ast}^{2}=\Var(m_0)+\sum_{t=1}^{T}\E[\Var(m_t\mid \mathcal{F}_{t-1})/\Lambda_{1:t}^\ast]$, i.e,
\[
    \sum_{b=1}^{M}\sqrt{\frac{n_b}{n}}\left\{\frac{1}{\sqrt{n_b}}\sum_{i\in\mathcal{D}_b}\left(\tilde\Gamma(\tilde O_{b,i};\bm\Lambda^\ast, \bm\eta_m)-\Phi^{\pi_e}\right)\right\} \Longrightarrow \mathcal{N}(0, \sigma_{\ast}^2).    
\]
\end{proposition}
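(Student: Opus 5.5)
The oracle term is a weighted sum over batches of normalized sums of the counterfactual summands $\tilde\Gamma(\tilde O_{b,i};\bm\Lambda^\ast,\bm\eta_m)-\Phi^{\pi_e}$. By construction, these summands do not depend on any estimated quantity. The counterfactual outcome $\tilde O_{b,i}$ is generated under the nonrandom oracle annotation policy $\lambda_{b,t}^\ast$, and the nuisances $\bm\eta_m$ are the true projections. Hence within each batch $b$ the summands are i.i.d. across units, and they are independent across batches because units are assigned to batches independently. The plan is therefore to apply a classical CLT batch by batch and combine the limits, rather than a martingale CLT.

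\emph{Step 1 (centering).} Invoke \Cref{prop:seq-annotation-unbiased}, applied with the oracle design $\bm\Lambda_b^\ast$ in batch $b$. Annotation ignorability (\Cref{asn-anno-ignorability}) holds for the counterfactual draws by design, so $\E[\tilde\Gamma(\tilde O_{b,i};\bm\Lambda^\ast,\bm\eta_m)]=\Phi^{\pi_e}$ for every $b$.

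\emph{Step 2 (variance and CLT per batch).} By \Cref{prop-variance-decomposition}, using forward monotonicity and the martingale property of $m_t$, each batch-$b$ summand has variance $\sigma_b^2=\Var(m_0)+\sum_t\E[\Var(m_t\mid\mathcal F_{t-1})/\Lambda_{b,1:t}^\ast]$. Bounded rewards (\Cref{asn-bdd-reward}) and overlap (\Cref{asn-seq_overlap}: $\Lambda^\ast_{1:t}\ge c_\lambda^t$, $\mu_t\le C_\mu$) make the summands uniformly bounded, so $\sigma_b^2<\infty$. The Lindeberg--L\'evy CLT then gives $n_b^{-1/2}\sum_{i\in\mathcal D_b}(\cdot)\Rightarrow\mathcal N(0,\sigma_b^2)$.

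\emph{Step 3 (combining batches).} Since $n_b/n\to\kappa_b$ and the batch sums are independent, the continuous mapping theorem and independence of the limits give convergence to $\mathcal N(0,\sum_b\kappa_b\sigma_b^2)$.

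The main obstacle is identifying $\sum_b\kappa_b\sigma_b^2$ with the stated $\sigma_\ast^2$, which uses a single pooled $\Lambda^\ast_{1:t}$. If the oracle design is common across batches, so that $\Lambda_{b,1:t}^\ast=\Lambda_{1:t}^\ast$ for all $b$, the identification is immediate. Otherwise I would read the oracle term as generated under the pooled design. Here the mixture correction of \Cref{alg:batch-adaptive-seq-annotation} is relevant: it makes $\sum_b\kappa_b\Lambda^\ast_{b,1:t}=\Lambda^\ast_{1:t}$. For a unit whose batch membership is itself a Bernoulli$(\kappa_b)$ draw, the marginal prefix-annotation probability is then exactly $\Lambda^\ast_{1:t}$. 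Under that description, the pooled summands are i.i.d. with inclusion probability $\Lambda_{1:t}^\ast$, and \Cref{prop-variance-decomposition} yields exactly $\sigma_\ast^2$.

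I would state this pooled-design interpretation explicitly, and note that otherwise the limit variance is the batch-weighted form, as in \Cref{prop:clt_adaptive_endpoint}. Positivity of $\sigma_\ast^2$ follows from $\Var(m_0)+\ldots>0$ whenever the score is nondegenerate, which is assumed.
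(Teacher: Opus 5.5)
Your overall route is the paper's: a classical CLT within each batch (the oracle summands are i.i.d.\ there because $\bm\Lambda_b^\ast$ is nonrandom and $\bm\eta_m$ is the truth), followed by summing independent batch limits with weights $\kappa_b$. The gap is in Step~2, which rests on a misreading of the summand. In $\tilde\Gamma(\tilde O_{b,i};\bm\Lambda^\ast,\bm\eta_m)$ the increments $m_t-m_{t-1}$ are weighted by the \emph{pooled} $1/\Lambda^\ast_{1:t}$. The counterfactual indicators in batch $b$, however, are drawn under the batch design, $\E[\tilde C_{1:t}\mid\mcF_T]=\Lambda^\ast_{b,1:t}$. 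Because the weights are not the inclusion probabilities, neither \Cref{prop:seq-annotation-unbiased} nor \Cref{prop-variance-decomposition} applies verbatim. The $\sigma_b^2$ you write, with $1/\Lambda^\ast_{b,1:t}$, is the variance of a different score. Redo the computation directly. The quantities $\Lambda^\ast_{b,1:t}$, $\Lambda^\ast_{1:s}$ and $\Lambda^\ast_{1:t}$ are all $\mcF_{t-1}$-measurable and $\E[m_t-m_{t-1}\mid\mcF_{t-1}]=0$. Hence the mean is still $\Phi^{\pi_e}$, and, using $\tilde C_{1:s}\tilde C_{1:t}=\tilde C_{1:t}$ for $s<t$, all cross terms still vanish. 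The diagonal terms give
\[
\sigma_b^2=\Var(m_0)+\sum_{t=1}^T\E\left[\frac{\Lambda^\ast_{b,1:t}\,\Var(m_t\mid\mcF_{t-1})}{(\Lambda^\ast_{1:t})^2}\right].
\]

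With this $\sigma_b^2$, your ``main obstacle'' disappears. Since $\sigma_b^2$ is linear in $\Lambda^\ast_{b,1:t}$, the mixture identity $\sum_b\kappa_b\Lambda^\ast_{b,1:t}=\Lambda^\ast_{1:t}$ that you already cite gives $\sum_b\kappa_b\sigma_b^2=\sigma_\ast^2$ at once. This holds for the fixed-proportion split of \Cref{alg:batch-adaptive-seq-annotation}, with no common-design assumption and no randomized batch membership.

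Your version instead yields $\sum_b\kappa_b/\Lambda^\ast_{b,1:t}$. By Jensen this is $\ge 1/\Lambda^\ast_{1:t}$, and strictly larger whenever the pilot and adaptive designs differ. So your fallback claim that ``otherwise the limit variance is the batch-weighted form'' is wrong for this oracle term. That form belongs to \Cref{prop:clt_adaptive_endpoint}, whose estimator weights by the recorded per-batch probabilities. The estimator of \Cref{thm:clt_adaptive} uses pooled cross-fitted weights $\hat{\bm\Lambda}^{(-k)}\to\bm\Lambda^\ast$, and with exact projections the same Jensen inequality shows the pooled weighting is the more efficient one.

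Your random-membership reinterpretation does reach $\sigma_\ast^2$, but only because marginalizing the batch label makes the inclusion probability equal the pooled weight. It is the same calculation summed in the other order, not an extra assumption the statement needs.
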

\begin{proof}
    Note that the inner term converges in distribution to $\mathcal N(0,\sigma_b^2)$, i.e.,
    \[
        \frac{1}{\sqrt{n_b}}\sum_{i\in\mathcal{D}_b}\left(\tilde\Gamma(\tilde O_{b,i};\bm\Lambda^\ast, \bm\eta_m)-\Phi^{\pi_e}\right)\Longrightarrow\mathcal{N}(0,\sigma_b^2), \quad \sigma_b^2=\Var (m_0)+\sum_{t=1}^{T}\E\left[\frac{\Lambda_{b,1:t}^\ast\Var(m_t\mid \mathcal{F}_{t-1})}{(\Lambda_{1:t}^\ast)^2}\right].
    \]
    Aggregating across batches then yields
    \[
        \sum_{b=1}^{M}\sqrt{\frac{n_b}{n}}\left\{\frac{1}{\sqrt{n_b}}\sum_{i\in\mathcal{D}_b}\left(\tilde\Gamma(\tilde O_{b,i};\bm\Lambda^\ast, \bm\eta_m)-\Phi^{\pi_e}\right)\right\}
        \Longrightarrow\mathcal{N}(0,\sigma_{\ast}^2), \quad \sigma_{\ast}^{2}=\Var (m_0)+\sum_{t=1}^{T}\E\left[\frac{\Var(m_t\mid \mathcal{F}_{t-1})}{\Lambda_{1:t}^\ast}\right],
    \]
    since the batches are independent, so the limiting distribution is a mixture of independent normal random variables with mean zero and variance equal to the weighted sum of the batch variances:
    \begin{align*}
        \sigma_\ast^2&=\sum_{b=1}^{M}\underbrace{\frac{n_b}{n}}_{=\kappa_b}\left(\Var(m_0)+\sum_{t=1}^{T}\E\left[\frac{\Lambda_{b,1:t}^\ast\Var(m_t\mid\mathcal{F}_{t-1})}{(\Lambda_{1:t}^\ast)^2}\right]\right) =\Var(m_0)+\sum_{t=1}^{T}\E\left[\frac{\Var(m_t\mid\mathcal{F}_{t-1})}{\Lambda_{1:t}^\ast}\right],
    \end{align*}
    where the second equality is due to the fact that $\Lambda_{1:t}^\ast=\sum_b\kappa_b\Lambda_{b,1:t}^\ast$. This concludes the proof.
\end{proof}

\begin{proof}[Proof of \cref{prop:clt_adaptive_endpoint}]
\textbf{Drift term.}
Condition on $\mathcal D^{(-k)}$ and $\mathcal D_{1:b-1}^{(k)}$. By \cref{lem:collapse} and the endpoint equality, the conditional mean of a fitted score is $\E_{\pi_b}[\hat\Gamma_T^{\pi_e,(-k)}\mid\mathcal D^{(-k)}]$.

By \cref{prop:approximate-projection-bias}, its difference from $\Phi^{\pi_e}$ equals
\[
\sum_{t=1}^T\gamma^{t-1}\E_{\pi_b}\!\left[
(\hat\mu_t^{(-k)}-\mu_t)
\{\gamma(\hat V_{t+1}^{(-k)}-V_{t+1})-(\hat Q_t^{(-k)}-Q_t)\}
\mid\mathcal D^{(-k)}\right].
\]
Cauchy--Schwarz, \cref{lem-v-q-error}, and condition~\textup{(R1)} of \cref{asn-rate_adaptive} bound this by $o_p(n^{-1/2})$.

\textbf{Empirical process term.}
Compare the fitted score with the score using $m^\dagger$, but with the same recorded annotation weights. Overlap bounds the conditional second moment of their difference by a constant times
\[
\sum_{t=0}^{T-1}\|\hat m_t^{(-k)}-m_t^\dagger\|_2^2
+\|\hat\Gamma_T^{\pi_e,(-k)}-\Gamma_T^{\pi_e}\|_2^2=o_p(1).
\]
Consistency of the full-state nuisance estimates,
  together with
  appropriate boundedness conditions, implies
  $\|\hat\Gamma_T^{\pi_e}-\Gamma_T^{\pi_e}\|_{L_2(P_b)}
  =o_p(1)$. Conditional on training and preceding same-fold batches, current trajectories and their annotation draws are independent. The conditional Chebyshev argument in \cref{prop-emp-process-adaptive} therefore makes the centered root-$n$ empirical difference $o_p(1)$. Summing over a fixed number of batches and folds gives the result.

\textbf{Oracle term.}
 The oracle term follows by the martingale CLT argument
  in
  \cref{prop-oracle_adaptive}. With recorded annotation
  probabilities
  and the full-data endpoint, telescoping preserves
  conditional mean
  $\Phi^{\pi_e}$ even when the intermediate predictions
  are approximate.
  Under the stated convergence, boundedness and overlap
  conditions,
  the limiting variance is \eqref{eq:batch-variance}.
  Combining this with the drift and empirical-process
  bounds
  proves asymptotic normality.
  
\textbf{Variance estimation.}
At a limiting batch design, summation by parts yields
\[
\tilde\Gamma-\Gamma_T^{\pi_e}
=\sum_{t=1}^T\left(\frac{C_{1:t}}{\Lambda_{b,1:t}^\ast}
-\frac{C_{1:t-1}}{\Lambda_{b,1:t-1}^\ast}\right)
(\Gamma_T^{\pi_e}-m_{t-1}^\dagger),
\]
with $C_{1:0}=\Lambda_{b,1:0}^\ast=1$. Conditional annotation-weight orthogonality gives
\[
\begin{aligned}
\Var[\tilde\Gamma(\tilde O_b;\bm\Lambda_b^\ast,\bm\eta_m^\dagger)]
={}&\Var(\Gamma_T^{\pi_e})+\sum_{t=1}^T\E\!\left[
\left(\frac1{\Lambda_{b,1:t}^\ast}-\frac1{\Lambda_{b,1:t-1}^\ast}\right)
(\Gamma_T^{\pi_e}-m_{t-1}^\dagger)^2\right].
\end{aligned}
\]
A bounded martingale law of large numbers gives convergence of oracle first and second empirical moments. The $L_2$ comparison above transfers these limits to fitted scores. All batch means converge to $\Phi^{\pi_e}$, so their pooled empirical variance consistently estimates $\sigma^2$. 
\end{proof}

\subsection{Auxiliary lemmas}

\begin{lemma}[Weight collapse]\label{lem:collapse}
Under \cref{asn-anno-ignorability}, for every integrable $\mathcal{F}_T$-measurable $X$ and $t=1,\dots,T$,
\[
\E[w_tX]=\E[X],\qquad \E[\hat w_tX]=\E\Big[\tfrac{\Lambda_{1:t}}{\hat\Lambda_{1:t}}X\Big].
\]
\end{lemma}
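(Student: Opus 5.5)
The plan is to prove both identities by conditioning on the full-data $\sigma$-algebra $\mathcal F_T$, where $w_t=C_{1:t}/\Lambda_{1:t}$ and $\hat w_t=C_{1:t}/\hat\Lambda_{1:t}$. The key intermediate claim is
\[
\E[C_{1:t}\mid\mathcal F_T]=\Lambda_{1:t}(\mathcal F_{t-1}),
\]
from which both statements follow by the tower property. For $X$ that is $\mathcal F_T$-measurable, and $\Lambda_{1:t}$ and $\hat\Lambda_{1:t}$ that are $\mathcal F_{t-1}\subseteq\mathcal F_T$-measurable, we get $\E[w_tX]=\E[X\,\E[C_{1:t}\mid\mathcal F_T]/\Lambda_{1:t}]=\E[X]$. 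Likewise, $\E[\hat w_tX]=\E[X\Lambda_{1:t}/\hat\Lambda_{1:t}]$. Here $\hat\Lambda$ is treated as fixed, either because it is fit on an independent sample or because we condition on training data as in \Cref{prop:approximate-projection-bias}. Integrability is immediate from overlap: $\Lambda_{1:t}\ge c_\lambda^t$ by \Cref{asn-seq_overlap}, and $\hat\Lambda_{1:t}$ is bounded below by \Cref{asn-bdd-estimator}.

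I would prove the intermediate claim by induction on $t$, using the forward-monotone structure $C_{1:t}=C_{1:t-1}C_t$. The base case $t=0$ is trivial with $C_{1:0}=\Lambda_{1:0}=1$. For the inductive step, I would write
\[
\E[C_{1:t}\mid\mathcal F_T]=\E\big[C_{1:t-1}\,\E[C_t\mid\mathcal F_T,C_{1:t-1}]\mid\mathcal F_T\big].
\]
On the event $\{C_{1:t-1}=1\}$, \Cref{asn-anno-ignorability} gives $\E[C_t\mid\mathcal F_T,C_{1:t-1}=1]=\lambda_t(\mathcal F_{t-1})$. On the complement, the integrand vanishes because of the factor $C_{1:t-1}$. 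Therefore
\[
\E[C_{1:t}\mid\mathcal F_T]=\lambda_t(\mathcal F_{t-1})\,\E[C_{1:t-1}\mid\mathcal F_T]=\lambda_t\Lambda_{1:t-1}=\Lambda_{1:t}.
\]

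The main obstacle is a measurability subtlety. Assumption~\ref{asn-anno-ignorability} conditions on $\mathcal F_T$ together with the event $C_{1:t-1}=1$, but it does not directly state that earlier annotation draws are conditionally independent of unrevealed quantities. I would read the assumption as saying that, given the full data $\mathcal F_T$ and the past annotation indicators, $C_t$ is Bernoulli with parameter $\lambda_t(\mathcal F_{t-1})$. This reading is exactly what the protocol in \Cref{alg:batch-adaptive-seq-annotation} does: the draws use fresh randomization and depend only on revealed data. I would state this reading explicitly. With it, the inductive step goes through without needing any other assumptions.
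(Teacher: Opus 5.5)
Your proof is correct and is essentially the paper's argument. The paper peels off one annotation factor at a time via $\E[C_{1:s}Y]=\E[C_{1:s-1}\lambda_sY]$ for $\mathcal{F}_T$-measurable $Y$, obtained by conditioning on $\sigma(\mathcal F_T,C_1,\dots,C_{s-1})$ and applying \Cref{asn-anno-ignorability} on $\{C_{1:s-1}=1\}$, and iterates this from $s=t$ down to $1$; you package the same step as the conditional identity $\E[C_{1:t}\mid\mathcal F_T]=\Lambda_{1:t}$, proved by induction, followed by the tower property, and your reading of the assumption is the one the paper uses.
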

\begin{proof}
For $\mathcal{F}_T$-measurable $Y$, conditioning on $\sigma(\mathcal{F}_T,C_1,\dots,C_{s-1})$ and using \cref{asn-anno-ignorability} on $\{C_{1:s-1}=1\}$ (both sides vanish on the complement) gives $\E[C_{1:s}Y]=\E[C_{1:s-1}\lambda_sY]$. Since $\lambda_sY$ is again $\mathcal{F}_T$-measurable, iterating from $s=t$ down to $s=1$ with $Y=X/\Lambda_{1:t}$ (resp.\ $Y=X/\hat\Lambda_{1:t}$; both denominators are $\mathcal{F}_{t-1}$-measurable) yields the claims.
\end{proof}

\begin{lemma}[Product comparison]\label{lem:prod_comparison}
    For $i=1,\dots, n$, suppose that $a_i, b_i \in \mathbb{R}$, with $|a_i|\leq M$ and $|b_i| \leq M$. Then,
    \[
        |\prod_{i=1}^{n} a_i - \prod_{i=1}^{n} b_i| \leq M\sum_{i=1}^{n} |a_i-b_i|.
    \]
\end{lemma}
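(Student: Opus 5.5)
The inequality $|\prod_i a_i-\prod_i b_i|\le M\sum_i|a_i-b_i|$ follows from a telescoping argument, provided $M\le 1$. That is the regime in which the lemma is used: the factors are annotation probabilities $\lambda_j,\hat\lambda_j\in[0,1]$, so one takes $M=1$. For general $M$ the correct constant is $M^{n-1}$, and I would state the lemma in that form, then specialise.

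\emph{Telescoping.} Introduce the hybrid products $P_k=\prod_{i\le k} b_i\prod_{i>k} a_i$ for $k=0,\dots,n$. Their endpoints are $P_0=\prod_i a_i$ and $P_n=\prod_i b_i$. Consecutive hybrids differ only in factor $k$:
\[
P_{k-1}-P_k=\Bigl(\prod_{i<k} b_i\Bigr)(a_k-b_k)\Bigl(\prod_{i>k} a_i\Bigr).
\]
Hence
\[
\prod_{i=1}^n a_i-\prod_{i=1}^n b_i=\sum_{k=1}^n\Bigl(\prod_{i<k} b_i\Bigr)(a_k-b_k)\Bigl(\prod_{i>k} a_i\Bigr).
\]

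\emph{Bounding each term.} Apply the triangle inequality. Each of the $n-1$ untouched factors is bounded by $M$ in absolute value. So
\[
\Bigl|\prod_i a_i-\prod_i b_i\Bigr|\le M^{n-1}\sum_k|a_k-b_k|.
\]

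\emph{Specialisation.} In the applications (the bounds on $\|\Delta\Lambda_{1:t}\|_2$ in \cref{prop-emp-process-term-non-adaptive,prop-drift-term-non-adaptive}), every factor lies in $[0,1]$. Then $M^{n-1}\le 1$, which gives the bound $\sum_k|a_k-b_k|$, matching the stated form with $M=1$.

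\emph{The one pitfall.} The stated constant $M$, rather than $M^{n-1}$, is wrong when $M>1$. For example, $n=2$, $a=(M,M)$, $b=(M,M-\epsilon)$ gives a left-hand side of $M\epsilon$, which is fine. But $n=3$ with $a_i=M$ and $b=(M,M,M-\epsilon)$ gives $M^2\epsilon>M\epsilon$. So in writing the proof I would either assume $M\le 1$ explicitly (giving constant $1\le M$ only in the trivially consistent sense) or replace the constant by $\max(1,M)^{n-1}$. Either fix is harmless downstream, since only boundedness of the constant is used there.
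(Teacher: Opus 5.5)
Your hybrid-product telescoping is the paper's argument unrolled: the paper proves the two-factor identity $a_1a_2-b_1b_2=a_1(a_2-b_2)+(a_1-b_1)b_2$ and then says ``use induction.''

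You are right that the lemma is false as stated when $M>1$ and $n\ge 3$. Your example gives $M^2\epsilon>M\epsilon$. The paper's induction fails without comment at exactly this point. Its inductive step gives
$\bigl|\prod_{i\le n}a_i-\prod_{i\le n}b_i\bigr|\le M^{n-1}|a_n-b_n|+M\bigl|\prod_{i<n}a_i-\prod_{i<n}b_i\bigr|$,
which reproduces the constant $M$ only if $M\le 1$. Your $M^{n-1}$ bound is the correct general statement, and it is sharp by your own example. Every use of the lemma in the paper is for products of annotation probabilities, which lie in $[0,1]$. So taking $M=1$ is all that is needed downstream, and the finite-$T$ arguments there are unaffected.

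One slip in your framing: $M\le 1$ by itself does not make the stated form true. For $n=1$ the statement reads $|a_1-b_1|\le M|a_1-b_1|$, which fails for $M<1$; take $M=1/2$, $a_1=1/2$, $b_1=0$. With $M\le1$ the stated constant $M$ is valid only for $n\ge 2$, where $M^{n-1}\le M$. Also, your parenthetical ``constant $1\le M$'' has the inequality reversed. Stating the lemma with constant $M^{n-1}$, or with $M=1$ as the applications require, avoids both issues.
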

\begin{proof}
    For $n=2$, we merely have to write
    \[
        a_1a_2-b_1b_2 = a_1(a_2-b_2)+(a_1-b_1)b_2.
    \]
    Finish by taking absolute values and using the fact that all $|a_i|$ and $|b_i|$ are bounded $M$. For general $n$, use induction.
\end{proof}

\section{Details on experiments}\label{apx-experiments}

\subsection{Simulated data}
\paragraph{Data-generating process.}
Let $p=5$. For each unit, $S_1 \sim N(0,I_p),
  S_2 = 0.5S_1 + \sqrt{0.75}\,\eta$.
Define the truncated logistic map $\ell_{0.05}(z) = 0.05 + 0.90\{1+\exp(-z)\}^{-1}$.
The behavior policy is
\begin{align*}
  p_1
    &= \Pr(A_1=1\mid S_1)
     = \ell_{0.05}\{1.5(0.5 - S_{1,2} + S_{1,3})\}, \\
  p_2
    &= \Pr(A_2=1\mid S_1,A_1,X_2)
     = \ell_{0.05}\{1.5(0.5 - X_{2,2} + X_{2,3} + 0.35A_1)\}.
\end{align*}
Rewards are generated as
\begin{align*}
  R_1
    &= 0.5(S_{1,1}-2S_{1,2}) + 1.1A_1 + \epsilon_1, \\
  R_2
    &= 0.5(S_{2,1}-2S_{2,2}) + 2.2A_2
       + 0.5A_1A_2 + \epsilon_2.
\end{align*}
The high-variance arm is the less common control arm at both stages:
\begin{align*}
  \Var(\epsilon_t\mid S_t,A_t=0)
    &= 4.0\max\{3.5+0.3\cos(S_{t,3}),0.05\}, \\
  \Var(\epsilon_t\mid S_t,A_t=1)
    &= \max\{1.3+0.4\sin(S_{t,1}),0.05\},
  \qquad t\in\{1,2\}.
\end{align*}

\subsection{Trajectory-wise versus stagewise annotation on the casenote data}\label{apx-bundle}
\begin{revision}
\Cref{fig:bg-bundle} compares the two designs on the progress task at equal expected label cost ($100$ replications of $600$ clients, otherwise as in \Cref{fig:bg}). The stagewise design uses its freedom --- it forgoes the second label for about $20\%$ of the clients it annotates at stage~1 at low budgets, $\Lambda_{1:2}/\lambda_1\approx0.64$, rising to $0.85$ at an $80\%$ budget --- but the two designs are statistically indistinguishable: the RMSE ratio stagewise/trajectory-wise is $1.00$, $1.00$, $1.07$, $1.09$, $1.15$, $1.00$ at budgets $30$--$80\%$, with paired-bootstrap $95\%$ intervals all covering one, and mean interval widths agree within $2\%$. The trajectory-wise priority only needs the design signal to rank clients, and a common miscalibration of the silver variance cancels in the normalization; the stagewise allocation additionally trades labels between stages within a client, so it needs the stage-1 and stage-2 signals to be calibrated relative to each other, which the within-window variance of LLM codes is not designed to be. Realizing the theoretical gap of \eqref{eqn-bundle} therefore requires a stage-calibrated variance signal (the binary link, a per-stage residual regression, or the silver variance rescaled by a per-stage constant fit on the first batch). The simulator of \Cref{apx-simulator}, whose silver labels are generated by one mechanism at both stages, illustrates this: there the stagewise design attains $0.80$--$0.99$ of the trajectory-wise RMSE against the true value (largest gain, $20\%$, at a $50\%$ budget) with $2$--$17\%$ narrower intervals.
\end{revision}
\begin{figure}[t]
  \centering
  \includegraphics[width=0.32\textwidth]{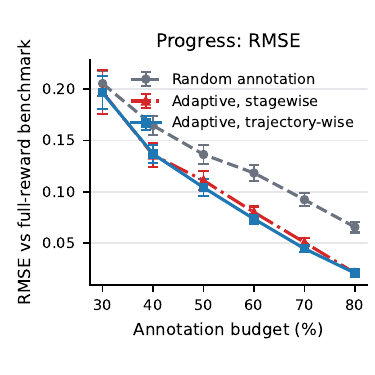}\includegraphics[width=0.32\textwidth]{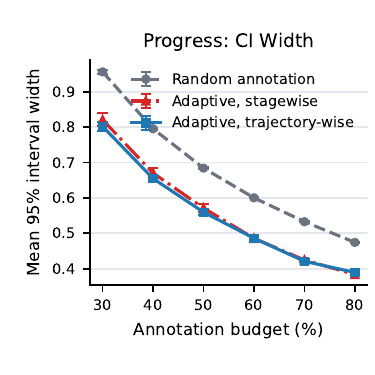}
  \caption{Casenote progress task: random annotation, the trajectory-wise design \eqref{eqn-bundle}, and the stagewise design of \Cref{thm-twostage-optimal-allocation-T=2}, all with the silver-variance signal and $\kappa_1=0$; $100$ replications of $600$ clients, error bars are Monte Carlo standard errors.}\label{fig:bg-bundle}
\end{figure}

\subsection{Simulator-grounded validation of the real-data protocols}\label{apx-simulator}

We repeat the two protocols of \Cref{fig:bg} on a simulator fit to the Breaking Ground cohort, where the value of the target policy is known, running the estimation pipeline unchanged on simulated cohorts written in the format of the real data.

\paragraph{Simulator.} The simulator is a monthly outreach MDP. The state is the compact state of the cohort's monthly panel: last recorded placement in $\{0,1,2,3\}$, last recorded progress bin, previous-month service bin, age--gender group, and prior-shelter and cumulative-contact bins. The action is the month's outreach count, binned as $\{1,2\text{--}3,4\text{--}7,8+\}$. Transitions of placement, progress bin and service bin are multinomial logistic regressions on random Fourier features of state and action ($300$ features; bandwidth chosen by held-out likelihood over a client split), fit on the $22{,}401$ client-months with positive outreach; months without outreach leave the state unchanged. Predicted and held-out change rates agree (placement $6.8\%$ vs.\ $6.0\%$; progress $54.8\%$ vs.\ $53.9\%$). Outreach intensity is drawn once per six-month stage from a multinomial logistic behavior policy $\pi_b$ fit on the real analysis population, stage~1 on the baseline covariates and stage~2 additionally on the stage-1 quintile, contact count and silver progress summaries, so that $\pi_b$ depends only on the tabular states $S_1',S_2'$ and \Cref{asn-seq-igno-stilde} holds by construction. Given a client's outreach quintile for the stage, the simulator first draws the number of contacts in that six-month stage from the empirical distribution of stage totals among cohort clients in the same quintile. It then spreads those contacts over the six months with a Dirichlet-multinomial draw, whose concentration parameter is chosen so that, as in the cohort, $45\%$ of client-months have no contact. Year~2 outreach follows $\pi_b$. Within a month, every contact is stamped with the client's placement status at the start of the month, except the last contact of the month, which is stamped with the updated status produced by that month's transition. Each contact also receives a gold progress label, drawn from the empirical distribution of real 2019 labels given the client's end-of-month progress bin; the last contact of every month is labeled, and each earlier contact is labeled independently with probability $0.70$, matching the cohort's label coverage. Silver labels are a misspecified linear imputation of the label from pre-annotation covariates, fit on the real 2019 labels and mixed with the label so that the silver--gold correlation matches the data ($0.83$ vs.\ $0.81$). Initial states are resampled from the $769$ cohort clients with a January 2019 panel state. 

Although prediction of placement changes and the type of progress is quite noisy, the simulated population matches the real data on aggregate population moments. These include stage contact counts (median $21$ vs.\ $19$), placement-improvement rate ($0.215$ vs.\ $0.225$), and the baseline placement and action distributions; stage rewards are somewhat higher and less dispersed ($R_1$ mean $2.34$, sd $1.03$, vs.\ $2.20$, $1.16$).

\paragraph{Protocol.} We draw a pool of $8{,}000$ clients and subsample $600$ for each of $100$ replications. The estimation protocol follows that used for the real data. 
The true target-policy value $\Phi^{\pi_e}$ is evaluated under the model from $100{,}000$ rollouts ($5.53$ total progress and $0.252$ placement-improvement probability; the behavior values are $4.68$ and $0.211$). Since $\pi_b$ is known, the pipeline uses $\pi_b(\cdot\mid S_t')$ in place of the estimated propensity, both in $\pi_e$ and in the DRL weights; the outcome models and the annotation design are estimated as on the real data. The experiment therefore validates the design, the outcome models and the inference, not propensity estimation.

\begin{figure}[t]
  \centering
  \includegraphics[width=0.32\textwidth]{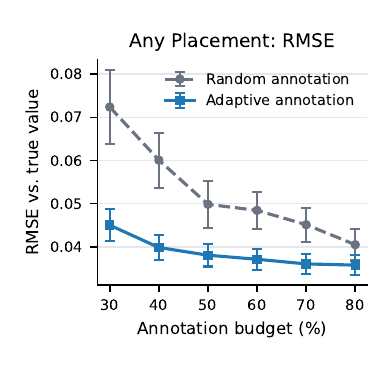}\includegraphics[width=0.32\textwidth]{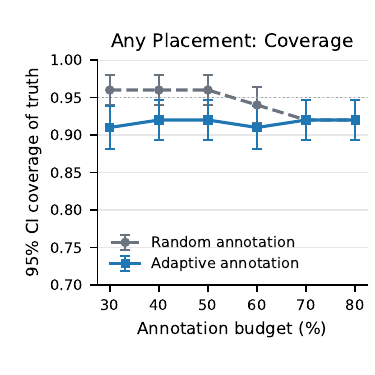}\includegraphics[width=0.32\textwidth]{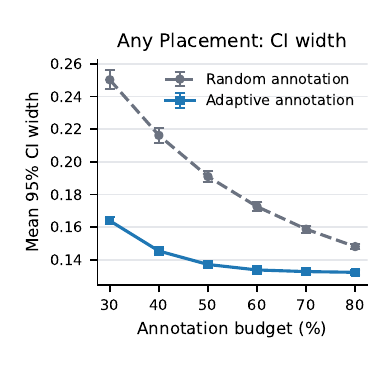}\\
  \includegraphics[width=0.32\textwidth]{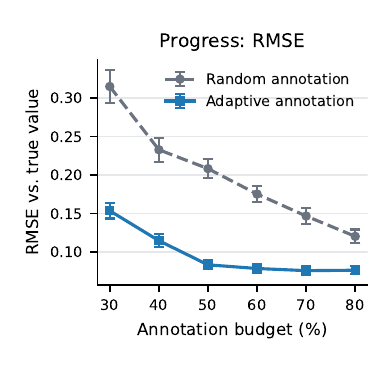}\includegraphics[width=0.32\textwidth]{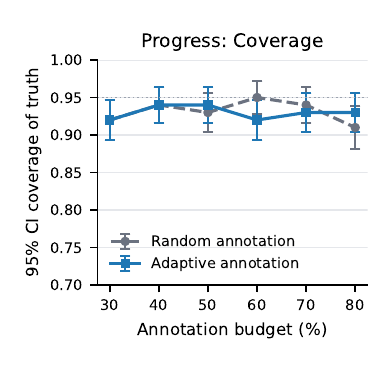}\includegraphics[width=0.32\textwidth]{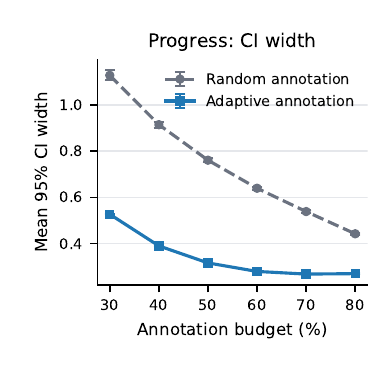}
  \caption{Simulator-grounded validation of the two protocols of \Cref{fig:bg} (DRL, $100$ replications of $600$ clients). Top: any placement improvement (binary-link signal, first batch $\kappa_1=0.3B$, $p=1$). Bottom: progress (silver-variance signal, $\kappa_1=0$). Left: RMSE against the true target-policy value; middle: coverage of that value by the model-based $95\%$ interval (dotted line at $0.95$); right: mean interval width. Error bars are Monte Carlo standard errors.}\label{fig:sim-validation}
\end{figure}

\paragraph{Results.} \Cref{fig:sim-validation} shows RMSE against $\Phi^{\pi_e}$, coverage of $\Phi^{\pi_e}$, and interval width. Coverage is close to nominal at every budget for both samplers: $0.92$--$0.96$ (random) and $0.91$--$0.92$ (adaptive) for placement, $0.91$--$0.95$ and $0.92$--$0.94$ for progress. Adaptive annotation reduces RMSE against the truth by $12$--$38\%$ (placement) and $37$--$60\%$ (progress), and interval width by $11$--$34\%$ and $39$--$58\%$, the same order as the reductions against the full-annotation benchmark in \Cref{fig:bg}. Plug-in intervals undercover here as well ($0.2$--$0.5$).%

\subsection{LMArena experiment details}\label{apx:arena}

\paragraph{Sessions and observations.}
The downloaded release contains $135{,}634$ battles dated April 17--July 24, 2025. We retain the earliest record for each session/order pair, removing $15$ duplicate records, and exclude $7{,}068$ sessions without a first vote. Of the $108{,}304$ eligible sessions, $10{,}196$ contain both first and second votes. Session identifiers determine the units; the release does not provide user identifiers, so distinct sessions cannot be guaranteed to come from distinct users.

Let $Z_1$ indicate that the release contains a second vote. This is observed metadata in $\mcF_0$, even though it describes continuation after the first action. The state model is
\[
S_1=\text{week},\qquad
S_2=\begin{cases}
(\text{week},R_1),& Z_1=1,\\
\dagger,&Z_1=0,
\end{cases}
\qquad S_3=\dagger.
\]
Set $R_2=V_2^{\pi_e}(\dagger)=Q_2^{\pi_e}(\dagger,a)=0$ on $\{Z_1=0\}$. The observed component $S_2'$ contains week and continuation status, while $R_1$ is withheld until the first annotation. We therefore distinguish $S_2$ from $S_2'$ even though the fitted state representation is tabular. This representation is a modeling restriction: it omits other prompt and session history.

\paragraph{Actions and policies.}
We split the $53$ models into $18$ open-weight and $35$ closed models. Actions are the three resulting pair categories. The category intervention changes their probabilities while retaining the behavior distribution of model identities within each category and week. Behavior probabilities are empirical category frequencies by week, estimated on the original $64{,}886$-session training partition and held fixed in the annotation experiment.

\paragraph{Full-data nuisance functions.}
We estimate $Q_2^{\pi_e}$ by the mean reward within each week, first-reward and second-action cell. We obtain $V_2^{\pi_e}$ by averaging under $\pi_e$, and fit $Q_1^{\pi_e}$ to $R_1+V_2^{\pi_e}(S_2)$. Cells with fewer than $20$ training sessions use the corresponding cell pooled across weeks. The stage-2 density ratio includes the change in the distribution of first rewards and recorded continuation:
\[
\mu_2^{\pi_e}(S_2,A_2)
=\frac{p_{\pi_e}(R_1,Z_1=1\mid\text{week})}
{p_{\pi_b}(R_1,Z_1=1\mid\text{week})}
\frac{\pi_e(A_2)}{\pi_b(A_2\mid\text{week})},\qquad Z_1=1.
\]
The fitted transition tables determine the first factor. Because it depends on $R_1$, this ratio is generally not $\mcF_0$-measurable. The implementation evaluates it separately at $R_1=0$ and $R_1=1$ when constructing the initial prediction.

\paragraph{Reward predictions and score.}
Reward predictors use response metadata, Skywork-Reward-V2-Llama-3.1-8B scores, and a fitted TF--IDF text score. We fit $b_1$ and $b_2$ using standardized logistic regression with regularization parameter $C=0.1$; the stage-2 predictor also uses the revealed $R_1$. Predictions are clipped to $[0.001,0.999]$. 

\paragraph{Allocation criterion.}
The design uses the two-stage constrained allocation solver with probability floor $0.01$ and second-stage cost $Z_1$, with the binary-link design signals of \Cref{sec:experiments}. At stage 2, $\hat s_2^2=\hat\mu_2^2\hat b_2(1-\hat b_2)$. At stage 1 the revealed $R_1$ also determines $S_2$, so the stage-1 increment $\hat\mu_1\{R_1+\hat V_2(S_2)\}-\hat\mu_2\hat Q_2$ is a function $h(R_1)$ of the binary $R_1$ alone given $\mcF_0$, with $h(r)=\hat\mu_1\{r+\hat V_2(r)\}-\hat\mu_2(r)\hat Q_2(r)$ evaluated at the state $(\text{week},r)$; its conditional variance is the two-point variance $\hat s_1^2=\hat b_1(1-\hat b_1)\{h(1)-h(0)\}^2$. The stage-1 signal omits the change in $\hat\mu_2\hat b_2$ with $R_1$, although the score retains it through $\hat m_1$ and the endpoint correction; $\hat s_1^2$ is therefore a surrogate for $\sigma_1^2=\Var(m_1\mid\mcF_0)$, affecting efficiency but not validity.

\paragraph{Sampling and fitting.}
We run $1{,}000$ replications at budget fractions $B\in\{0.3,0.4,0.5,0.6,0.7,0.8\}$ of all available votes. Each replication draws $21{,}685$ sessions without replacement and assigns them to five session folds. An initial uniform sample reveals all available votes with probability $p=0.3B$, where $B$ is the budget fraction. 

\paragraph{Evaluation.}
We recompute the full-annotation benchmark on each sampled set of sessions. RMSE is calculated from the difference between the partially and fully annotated estimates. Reported intervals have width $2(1.96)\,\widehat{\operatorname{sd}}(\text{scores})/\sqrt n$. These intervals include sampling variation that is shared with the same-session benchmark, so coverage of that benchmark does not establish coverage of the population policy value. We report interval widths, not a coverage guarantee.

\paragraph{Results.}
\Cref{tab:arena} reports the quantities plotted in \Cref{fig:arena}. 
\begin{table}[h]
\centering\small
\begin{tabular}{@{}l cccc cccc@{}}
\toprule
& \multicolumn{4}{c}{equal exposure} & \multicolumn{4}{c}{tilt $\times1.5$} \\
\cmidrule(lr){2-5}\cmidrule(lr){6-9}
& \multicolumn{2}{c}{RMSE}$\times10^{-3}$ & \multicolumn{2}{c}{interval width$\times10^{-3}$} & \multicolumn{2}{c}{RMSE$\times10^{-3}$} & \multicolumn{2}{c}{interval width$\times10^{-3}$} \\
$B$ & random & adaptive & random & adaptive & random & adaptive & random & adaptive \\
\midrule
0.3 & 8.62 & 3.26 & 33.0 & 21.7 & 4.13 & 3.60 & 20.1 & 18.2 \\
0.4 & 5.33 & 2.41 & 26.4 & 19.9 & 3.06 & 2.74 & 17.1 & 16.1 \\
0.5 & 4.12 & 1.83 & 23.3 & 18.9 & 2.43 & 2.26 & 15.4 & 14.8 \\
0.6 & 3.19 & 1.44 & 21.2 & 18.4 & 2.00 & 1.79 & 14.2 & 13.8 \\
0.7 & 2.48 & 1.06 & 19.9 & 18.0 & 1.60 & 1.36 & 13.4 & 13.1 \\
0.8 & 1.87 & 0.76 & 18.9 & 17.8 & 1.25 & 0.99 & 12.8 & 12.6 \\
\bottomrule
\end{tabular}
\caption{LMArena, $1{,}000$ replications of $21{,}685$ sessions; all entries $\times10^{-3}$. RMSE is against each replication's full-annotation estimate; interval width is the mean model-based $95\%$ width.}\label{tab:arena}
\end{table}

\end{document}